\def\csname ver@fixltx2e.sty\endcsname{}
\newcommand{\openone}{\leavevmode\hbox{\small1\normalsize\kern-.33em1}}
\def\UrlSpecials{\do\~{\kern -.15em\lower .7ex\hbox{~}\kern .04em}} \catcode`~=13 
\newcommand{\nn}{\nonumber}
\newcommand{\calA}{\mathcal{A}}
\newcommand{\calB}{\mathcal{B}}
\newcommand{\calC}{\mathcal{C}}
\newcommand{\calD}{\mathcal{D}}
\newcommand{\calE}{\mathcal{E}}
\newcommand{\calF}{\mathcal{F}}
\newcommand{\calG}{\mathcal{G}}
\newcommand{\calL}{\mathcal{L}}
\newcommand{\calN}{\mathcal{N}}
\newcommand{\calR}{\mathcal{R}}
\newcommand{\calS}{\mathcal{S}}
\newcommand{\calX}{\mathcal{X}}
\newcommand{\ba}{\mathbf{a}}
\newcommand{\bA}{\mathbf{A}}
\newcommand{\bb}{\mathbf{b}}
\newcommand{\bB}{\mathbf{B}}
\newcommand{\bC}{\mathbf{C}}
\newcommand{\boldf}{\mathbf{f}}
\newcommand{\bi}{\mathbf{i}}
\newcommand{\bJ}{\mathbf{J}}
\newcommand{\bk}{\mathbf{k}}
\newcommand{\bn}{\mathbf{n}}
\newcommand{\bS}{\mathbf{S}}
\newcommand{\bt}{\mathbf{t}}
\newcommand{\bV}{\mathbf{V}}
\newcommand{\bX}{\mathbf{X}}
\newcommand{\rmc}{\mathrm{c}}
\newcommand{\rmC}{\mathrm{C}}
\newcommand{\rmd}{\mathrm{d}}
\newcommand{\rme}{\mathrm{e}}
\newcommand{\rmG}{\mathrm{G}}
\newcommand{\rmP}{\mathrm{P}}
\newcommand{\rmQ}{\mathrm{Q}}
\newcommand{\rmT}{\mathrm{T}}
\newcommand{\bbE}{\mathsf{E}}
\newcommand{\bbN}{\mathbb{N}}
\newcommand{\bbR}{\mathbb{R}}
\newcommand{\bbo}{\mathbbm{1}}
\DeclareMathAlphabet{\mathbsf}{OT1}{cmss}{bx}{n}
\DeclareMathAlphabet{\mathssf}{OT1}{cmss}{m}{sl}
\DeclareSymbolFont{bsfletters}{OT1}{cmss}{bx}{n}  
\DeclareSymbolFont{ssfletters}{OT1}{cmss}{m}{n}
\DeclareMathSymbol{\bsfGamma}{0}{bsfletters}{'000}
\DeclareMathSymbol{\ssfGamma}{0}{ssfletters}{'000}
\DeclareMathSymbol{\bsfDelta}{0}{bsfletters}{'001}
\DeclareMathSymbol{\ssfDelta}{0}{ssfletters}{'001}
\DeclareMathSymbol{\bsfTheta}{0}{bsfletters}{'002}
\DeclareMathSymbol{\ssfTheta}{0}{ssfletters}{'002}
\DeclareMathSymbol{\bsfLambda}{0}{bsfletters}{'003}
\DeclareMathSymbol{\ssfLambda}{0}{ssfletters}{'003}
\DeclareMathSymbol{\bsfXi}{0}{bsfletters}{'004}
\DeclareMathSymbol{\ssfXi}{0}{ssfletters}{'004}
\DeclareMathSymbol{\bsfPi}{0}{bsfletters}{'005}
\DeclareMathSymbol{\ssfPi}{0}{ssfletters}{'005}
\DeclareMathSymbol{\bsfSigma}{0}{bsfletters}{'006}
\DeclareMathSymbol{\ssfSigma}{0}{ssfletters}{'006}
\DeclareMathSymbol{\bsfUpsilon}{0}{bsfletters}{'007}
\DeclareMathSymbol{\ssfUpsilon}{0}{ssfletters}{'007}
\DeclareMathSymbol{\bsfPhi}{0}{bsfletters}{'010}
\DeclareMathSymbol{\ssfPhi}{0}{ssfletters}{'010}
\DeclareMathSymbol{\bsfPsi}{0}{bsfletters}{'011}
\DeclareMathSymbol{\ssfPsi}{0}{ssfletters}{'011}
\DeclareMathSymbol{\bsfOmega}{0}{bsfletters}{'012}
\DeclareMathSymbol{\ssfOmega}{0}{ssfletters}{'012}
\newcommand{\tilA}{\tilde{A}}
\newcommand{\tilB}{\tilde{B}}
\newcommand{\tilf}{\tilde{f}}
\newcommand{\tilg}{\tilde{g}}
\newcommand{\hatW}{\hat{W}}
\newcommand{\tilX}{\tilde{X}}
\newcommand{\tilY}{\tilde{Y}}
\newcommand{\tilZ}{\tilde{Z}}
\newcommand{\barf}{\bar{f}}
\newcommand{\barg}{\bar{g}}
\newcommand{\barw}{\bar{w}}
\newcommand{\barx}{\bar{x}}
\newcommand{\barG}{\bar{G}}
\newcommand{\barW}{\bar{W}}
\newcommand{\barX}{\bar{X}}
\newcommand{\iid}{i.i.d.\ }
\DeclareMathOperator*{\argmax}{arg\,max}
\DeclareMathOperator*{\argmin}{arg\,min}
\DeclareMathOperator{\diag}{diag}
\DeclareMathOperator{\cov}{\mathsf{Cov}}
\newcommand{\bzero}{\mathbf{0}}
\newcommand{\bone}{\mathbf{1}}
\newtheorem{theorem}{Theorem} 
\newtheorem{lemma}[theorem]{Lemma}
\newtheorem{corollary}[theorem]{Corollary}
\newtheorem{definition}{Definition}
\def\BibTeX{{\rm B\kern-.05em{\sc i\kern-.025em b}\kern-.08em
T\kern-.1667em\lower.7ex\hbox{E}\kern-.125emX}}
\definecolor{Dyellow}{RGB}{254,152,0}
\definecolor{Dgreen}{RGB}{0,176,80}
\newcommand{\red}[1]{\textcolor{red}{#1}}
\begin{document}

\title{Achievable Second-Order Asymptotics for MAC and RAC with Additive Non-Gaussian Noise}

\author{Yiming Wang, Lin Bai, Zhuangfei Wu, and Lin Zhou

\thanks{This paper was partially presented at GLOBECOM 2023 \cite{wang2023globecom}.}
\thanks{The authors are with the School of Cyber Science and Technology, Beihang University, Beijing, China, 100191 (Emails: \{1mwang, l.bai, zhuangfeiwu, lzhou\}@buaa.edu.cn).}
\thanks{Corresponding author: Lin Zhou.}
}

\maketitle

\begin{abstract}
We first study the two-user additive noise multiple access channel (MAC) where the noise distribution is arbitrary. For such a MAC, we use spherical codebooks and either joint nearest neighbor (JNN) or successive interference cancellation (SIC) decoding. Under both decoding methods, we derive second-order achievable rate regions and compare the finite blocklength performance between JNN and SIC decoding. Our results indicate that although the first-order rate regions of JNN and SIC decoding are identical, JNN decoding has better second-order asymptotic performance. When specialized to the Gaussian noise, we provide an alternative achievability proof to the result by MolavianJazi and Laneman (T-IT, 2015). Furthermore, we generalize our results to the random access channel (RAC) where neither the transmitters nor the receiver knows the user activity pattern. We use spherical-type codebooks and a rateless transmission scheme combining JNN/SIC decoding and derive second-order achievability bounds. Comparing second-order achievability results of JNN and SIC decoding in a RAC, we show that JNN decoding achieves a strictly larger first-order asymptotic rate. When specialized to Gaussian noise, our second-order asymptotic results recover the corresponding results of Yavas, Kostina, and Effros (T-IT, 2021) up to second-order.
\end{abstract}

\begin{IEEEkeywords}
Multiple access channel, Random access channel, Channel dispersion, Mismatched communication, Finite blocklength analysis
\end{IEEEkeywords}

\section{Introduction}
\label{sec:introduction}
Multiple access strategy plays a key role in nowadays communication scenarios such as the Internet of Things and massive machine-type communications~\cite{saad2020iot,federico2014massive}. 
Orthogonal multiple access (OMA) strategies, such as frequency/time/code division multiple access or orthogonal frequency-time division multiple access, are widely applied in 1G to 4G wireless communication systems~\cite{jiaxing2024aoiuav,rex2008bookOMA}.
However, the achievable rate region of OMA is usually inferior to non-orthogonal multiple access (NOMA)~\cite{zhiguo2017noma}. As the data rate requirement grows from 10 Gbps to 1 Tbps in the 6G era~\cite{letaief2019roadmap6G}, it is essential to explore effective NOMA strategies. The uplink transmission of a NOMA system is modeled by the multiple access channel (MAC) proposed by Shannon~\cite{shannon1961two}. In a MAC, multiple users aim to transmit messages to one receiver sharing one resource block in a non-orthogonal way~\cite{Liu2024MAtutorial}. Following Shannon's work, Ahlswede~\cite{Ahlswede1973MultiwayCC} proposed the successive interference cancellation (SIC) decoding to mitigate the inter-user interference and derived the first-order asymptotic capacity region. The capacity region can be achieved by SIC decoding with time sharing or joint nearest neighbor (JNN) decoding~\cite{shannon1959Gaussiancode}. SIC decoding is more computationally efficient since the computational complexity of JNN decoding grows exponentially with the number of users, while that of SIC decoding grows linearly~\cite[Chapter 5]{verdu1998multiuser}.

The above studies focus on the capacity performance of a MAC, assuming that the blocklength approaches infinity. However, the performance gap between the achievable rate at finite blocklength and the asymptotic capacity is tremendous~\cite{PPV}. To close the gap, the second-order asymptotic results have been established to approximate the finite blocklength performance of optimal codes~\cite{zhou2023mono,tan2014mono}. Under JNN decoding, MolavianJazi and Laneman~\cite{molavianjazi2015second} derived the second-order asymptotic bounds for the Gaussian MAC using the dependency testing bound~\cite[Theorem 17]{PPV}. Yavas, Kostina, and Effros~\cite{yavas2021gaussianmac} generalized the above results to obtain the more refined third-order asymptotics using the random coding union (RCU) bound~\cite[Theorem 16]{PPV}, which provides a tighter approximation to the finite blocklength performance.

In MAC studies, one designs and analyzes a coding scheme by assuming that the number and identities of active users that transmit messages are known.  However, in certain practical scenarios~\cite{yongpeng2020massiveaccess,poly2017massiveRac}, the user activities could be unpredictable. It is possible that neither the receiver nor the transmitters know the user activity pattern. This communication scenario is modeled by the random access channel (RAC)~\cite{poly2017massiveRac,poly2017lowComplexityRac} and more specifically the unsourced RAC~\cite{poly2020energyEfficientCodeRAC,Fengler2021massiveMIMOUnsourcedRAC}. In this RAC, there are $K$ potential users that would transmit their messages to a single receiver, out of which only a subset of $k$ users are active in a time slot, while both the number $k$ and the identities of active users are unknown. To estimate the number of active users, we adopt the ``rateless'' coding scheme applied in~\cite{yavas2021rac,yavas2021gaussianmac}. In particular, each user adopts the same spherical-type codebooks, and there is a fixed number of decoding times depending on the number of active users. In the spherical-type codebook, the codewords are separated into several independent sub-codewords that are generated uniformly over spheres with certain diameters. At each of the possible decoding times, the receiver first checks whether the estimated number of active users matches the blocklengths of codewords. If yes, the receiver decodes all messages and broadcasts a one-bit feedback to the transmitters to inform that the current time epoch ends. Otherwise, the active users keep on sending messages. The above transmission and decoding process continues until the maximal decoding time is reached. Since the blocklength varies with the number of active users, there exists a fundamental trade-off between the achievable rates, error probabilities, and the number of active users. Yavas, Kostina, and Effros~\cite{yavas2021rac,yavas2021gaussianmac} derived the third-order achievability results for the unsourced RAC when the additive channel noise is Gaussian and showed that the achievable rate decreases as the number of active users increases when the blocklength and error probabilities are fixed.

Although insightful, the aforementioned studies for MAC~\cite{molavianjazi2015second,yavas2021gaussianmac} and RAC~\cite{yavas2021rac,yavas2021gaussianmac} assume that the distribution of additive channel noise is perfectly known. In practical communication scenarios, obtaining the exact distribution of channel noise is challenging. To solve the above problem, for the point-to-point (P2P) channel, Lapidoth~\cite{lapidoth1996mismatch} proposed a coding scheme using Gaussian codebooks and nearest neighbor decoder to combat an additive arbitrary noise and showed that the capacity of an additive white Gaussian noise (AWGN) channel can be achieved for any channel noise with the same second moment as AWGN. The problem is named \emph{mismatched} channel coding because the above coding scheme, optimal for AWGN, is not necessarily optimal for other additive noise distributions. Lapidoth further generalized the results in the mismatched P2P channel to the mismatched MAC and derived the first-order achievable rates~\cite{lapidoth1996mismatch}. Using the P2P result of Lapidoth~\cite{lapidoth1996mismatch}, Scarlett, Tan, and Durisi~\cite{scarlett2017mismatch} derived the second-order asymptotics and generalized their results to the mismatched interference channel. 

Following the above lines of research on establishing finite blocklength performance of mismatched coding schemes, to close the research gap on uplink multiple user mismatched communication, we study the MAC and RAC and derive achievable second-order asymptotics. Our main contributions are summarized as follows.   

\subsection{Main Contributions}
\label{sec:main contributions}
We first study a two-user additive noise MAC under mild moment constraints on the noise distribution. For such a MAC, we use spherical codebooks with either a JNN or SIC decoder and derive the corresponding second-order achievable rate regions. For the two-user MAC, to analyze the second-order asymptotics, we need to consider all possible cases of boundary rate pairs in the first-order achievable rate region and analyze the corresponding second-order achievable rate regions. Our results show that the first-order rate regions of JNN and SIC decoding are identical for this mismatched MAC. However, JNN decoding possesses a better second-order asymptotic performance compared to SIC decoding. Furthermore, for JNN decoding, we provide an alternative form of second-order asymptotics that covers all cases with a unified formula. Our proposed coding scheme can be applied to a MAC with additive arbitrarily distributed noise with guaranteed theoretical performance, requiring only the moment constraints that can be measured in practice. When specialized to the Gaussian noise, our results are consistent with the corresponding second-order asymptotic results for Gaussian MAC~(cf \cite[Theorem 1]{molavianjazi2015second} and~\cite[Theorem 2]{yavas2021gaussianmac}).

While our proofs build on techniques for the P2P mismatched study~\cite{scarlett2017mismatch} and the Gaussian MAC studies~\cite{yavas2021gaussianmac,molavianjazi2015second}, the contribution of our paper goes beyond \cite{scarlett2017mismatch,yavas2021gaussianmac,molavianjazi2015second}. In contrast to~\cite{scarlett2017mismatch} that focuses on P2P mismatched channels, we address the multi-user mismatched MAC, which requires new analytical tools to handle the multi-dimensional rate regions inherent to the MAC. Furthermore, the proofs in~\cite{yavas2021gaussianmac,molavianjazi2015second} focused on the Gaussian MAC. While we use the function version Berry-Esseen Theorem (see Lemma~\ref{lemma:berry esseen for func}) similarly to \cite{molavianjazi2015second}, the mismatched coding scheme we propose is different from the matched case in~\cite{molavianjazi2015second}. In particular, the dependence-testing bound~\cite[Theorem 17]{PPV} was used in \cite{molavianjazi2015second} while we derive a modified version of the RCU bound~\cite[Theorem 16]{PPV} tailored to non-Gaussian MAC by relating JNN/SIC decoding with maximal mismatched information density decoding. Although \cite{yavas2021gaussianmac} provided a more refined result involving the third-order term, their proof focused on the Gaussian MAC, in particular \cite[Lemma 6]{yavas2021gaussianmac}, which is critical for obtaining the third-order asymptotics. To generalize their result, we introduce a looser but general bound (see Lemma~\ref{lemma:g function}), which does not rely on the noise distribution and leads to desired second-order asymptotics. In summary, to address non-Gaussian MAC, we judiciously adapt the proof techniques for the matched MAC~\cite{yavas2021gaussianmac,molavianjazi2015second} and combine the proof steps with those for the mismatched P2P channel~\cite{scarlett2017mismatch}. Compared with~\cite{scarlett2017mismatch,yavas2021gaussianmac,molavianjazi2015second}, both our proof steps and our results are novel.

Furthermore, we generalize our results to a RAC with an unknown user activity pattern, which generalizes MAC by having an unknown number of active users. We use spherical-type codebooks for encoding and decoding, use the rateless transmission scheme~\cite{yavas2021gaussianmac} to estimate the number of active users, and use the JNN/SIC decoder to decode the messages sent by active users. Using the above coding scheme, we derive second-order achievability bounds for both JNN and SIC decoding. Our results imply that even in the first-order achievable rate region, JNN decoding has strictly better performance. This is because all active users transmit at the same rate, and thus the performance of SIC decoding is dominated by the worst user that has a very low signal-to-interference and noise ratio. Finally, our result for RAC with JNN decoding generalizes the corresponding results for the Gaussian RAC in~\cite[Theorem 4]{yavas2021gaussianmac} to non-Gaussian noise up to second-order asymptotics.

\subsection{Organization for the Rest of the Paper}
\label{sec:organization of the paper}
The rest of the paper is organized as follows. In Section~\ref{sec:Problem Formulation}, we set up the notation, formulate the problems of two-user MAC and RAC, and present necessary definitions and existing results. In Section~\ref{sec: main results and discussions}, we present and discuss our results for MAC and RAC. The proofs of our results for MAC and RAC are presented in Section~\ref{sec: mac proof} and~\ref{sec:rac proof}, respectively. Finally, in Section~\ref{sec:conclusion}, we conclude our paper and discuss future research directions.

\section{problem formulation}
\label{sec:Problem Formulation}
\subsection*{Notation}
\label{subsec:notations}

We use $\bbR$, $\bbR_+$, $\bbN$, and $\bbN_+$ to denote the set of real numbers, positive real numbers, integers, and positive integers, respectively. Random variables are in capital (e.g., $X$) and their realizations are in lowercase (e.g., $x$). Random vectors of length $n\in\bbN$ and their particular realizations are denoted as $X^n:= (X_1, \ldots, X_n)$ and $x^n:=(x_1,\ldots,x_n)$, respectively. Given any integers $(a,b)\in\bbN^2$ such that $a<b$, we use $[a:b]$ to denote the set of integers between $a$ and $b$, we use $[a]$ to denote $[1:a]$ when $a\geq 1$. For a random vector $X^n$, we use $X_a^b$ to denote $(X_a,\ldots,X_b)$. Given any sequence of $M\in\bbN_+$ vectors $(X^n(1),\ldots,X^n(M))$ and any integers $(a,b)\in[M]^2$ such that $a\leq b$, we use $X^n([a:b])$ to denote $\{X^n(i),i\in[a:b]\}$. We use the calligraphic font (e.g., $\calX$) to denote all sets and use $\calN(\mu,\sigma^2)$ to denote the Gaussian distribution with mean $\mu$ and variance $\sigma^2$. We use the bold font (e.g., $\mathbf{X}:=((x_1^n)^\rmT \ldots (x_m^n)^\rmT)$) to denote matrices
and we use $\bzero_n$ and $\mathbf{1}_n$ to denote $n$ dimensions all zero and all one vector, respectively. For two $n$-dimensional vectors $A^n$ and $B^n$, we use $A^n\geq B^n$ to denote that for each $i\in[n]$, $A_i\geq B_i$. We use logarithms with base $e$. We use $\| x^n \| = \sqrt {\sum_{i=1}^n {x_i^2} } $ to denote the $\ell_2$ norm of a vector $x^n \in \mathbb{R}^n$. We use $Q(\cdot)$ to denote the complementary cumulative distribution function (ccdf) of a standard Gaussian distribution, and $Q^{-1}(\cdot)$ to denote its inverse.

\subsection{Two-user MAC}
\subsubsection{System Model and Code Definition}
\begin{figure}[tb]
\centering
\setlength{\unitlength}{0.4cm}
\begin{picture}(25,10)
\put(0,6){\makebox(0,0)[l]{\framebox{\strut User 1}}}
\put(3,6){\vector(1,0){3}}
\put(4.5,6.5){\makebox(0,0){$W_1$}}
\put(6,6){\makebox(0,0)[l]{\framebox{\strut Encoder $f_1$}}}
\put(10.6,6){\line(1,0){4.4}}
\put(13,6.5){\makebox(0,0){$X^n_1$}}
\put(0,0){\makebox(0,0)[l]{\framebox{\strut User 2}}}
\put(3,0){\vector(1,0){3}}
\put(4.5,0.5){\makebox(0,0){$W_2$}}
\put(6,0){\makebox(0,0)[l]{\framebox{\strut Encoder $f_2$}}}
\put(10.6,0){\line(1,0){4.4}}
\put(13,0.5){\makebox(0,0){$X^n_2$}}
\put(15,6){\line(0,-1){6}}
\put(15,3){\vector(1,0){3}}
\put(19,3){\circle{2}}
\put(19,3){\makebox(0,0){\Huge$+$}}
\put(20,3){\vector(1,0){3}}
\put(24,3){\makebox(0,0){$Y^n$}}
\put(19,7){\vector(0,-1){3}}
\put(19,7.5){\makebox(0,0){$Z^n$}}
\put(19,8.5){\makebox(0,0){\red{Arbitrary noise}}}
\end{picture}
\caption{System model of a two-user MAC with arbitrary additive channel noise.}
\label{fig:MAC system model}
\end{figure}
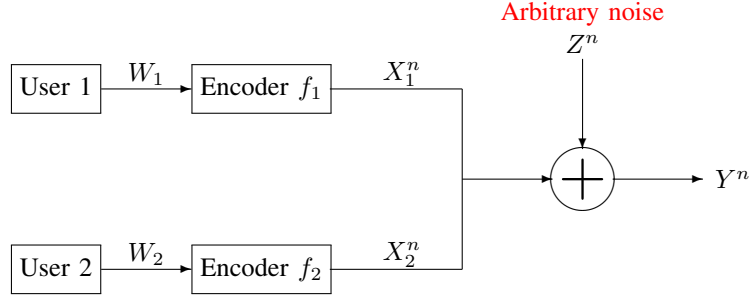

We first consider a two-user MAC with additive arbitrary noise as shown in Fig.~\ref{fig:MAC system model}. For each $j\in[2]$, user $j$ aims to transmit a message $W_j\in[M_j]$ to the receiver. The channel with blocklength $n\in\bbN_+$ is modeled as follows:
\begin{align}\label{eq:MAC channel model}
Y^n=X_1^n(W_1)+X_2^n(W_2)+Z^n,
\end{align}
where $Y^n$ is the signal at the receiver, and $(X^n_1(W_1),X^n_2(W_2))$ are the codewords sent by users 1 and 2 to transmit their messages $(W_1,W_2)$, respectively. The additive channel noise $Z^n$ is generated \iid from an arbitrary unknown distribution subject to the following moment constraints:
\begin{align}
\bbE[Z^2]=1,~\xi:=\bbE[Z^4],~\bbE[Z^6]<\infty\label{eq:noise statics}.
\end{align}
We assume that the second and the fourth moments of additive noise are known, which is possible by measurements. The assumptions on the fourth and sixth moments being finite can be satisfied by most common noises, such as uniform, Gaussian, or Laplace. We consider two coding schemes, one with a JNN decoder and another with an SIC decoder. 

Fix integers $(n,M_1,M_2)\in\bbN_+^3$ and positive real numbers $(P_1,P_2)\in\bbR_+^2$. The coding scheme with the JNN decoder for a two-user MAC is defined as follows.
\begin{definition}
\label{def:MAC JNN coding scheme}
An $(n,M_1,M_2,P_1,P_2)$-JNN code for a two-user MAC consists of:
\begin{itemize}
\item random codebooks $\calC_j$ with $M_j$ codewords $\{ X_j^n(1),\ldots,X_j^n(M_j)\}$ subject to a power constraint $P_j$ for each $j\in[2]$,
\item two encoders $(f_1,f_2)$ such that for each $j\in[2]$ and $m_j\in[M_j]$,
\begin{align}
f_j(m_j):=X_j^n(m_j),
\end{align}
\item a joint nearest neighbor decoder that uses the minimum Euclidean distance decoding, i.e.,
\begin{align}\label{eq:nn decoder}
\Phi(Y^n):=(\hatW_{1},\hatW_{2})=\argmin_{(\barw_1,\barw_2)\in[M_1]\times[M_2]}\left\|{Y^n-X_1^n(\barw_1)-X_2^n(\barw_2)}\right\|^2.
\end{align}
\end{itemize}
\end{definition}

Analogously, when SIC decoding is used, the coding scheme is defined as follows.
\begin{definition}
\label{def:MAC SIC coding scheme}
An $(n,M_1,M_2,P_1,P_2)$-SIC code for a two-user MAC consists of:
\begin{itemize}
\item the same codebooks and encoders in Def.~\ref{def:MAC JNN coding scheme},
\item a successive interference cancellation decoder with two steps. Step 1 applies the nearest neighbor decoder that treats the channel input of one transmitter as noise, i.e.,
\begin{align}\label{eq:SIC Decoder One}
\hatW_1=\Phi_1(Y^n):=\argmin_{\barw_1\in[M_1]}\|{Y^n-X_1^n(\barw_1)}\|^2.
\end{align}
Step 2 subtracts $X_1^n(\hatW_1)$ from the received signal to decode $X_2^n(\hatW_2)$ through another nearest neighbor decoder, i.e.,
\begin{align}\label{eq:SIC Decoder Two}
\hatW_2=\Phi_2(Y^n):=\argmin_{\barw_2\in[M_2]}\big\|{Y^n-X_1^n(\hatW_1)-X_2^n(\barw_2)}\big\|^2.
\end{align}
\end{itemize}
\end{definition}
Note that in Def.~\ref{def:MAC SIC coding scheme}, a fixed decoding order is assumed. In practice, the decoding order can be determined via the values of the signal-to-noise ratio (SNR) of two users. Usually, the message of the stronger user with a higher SNR value is decoded first so that the overall error probability is minimized.

\begin{figure}[tb]
\centering
\includegraphics[width =.3\columnwidth]{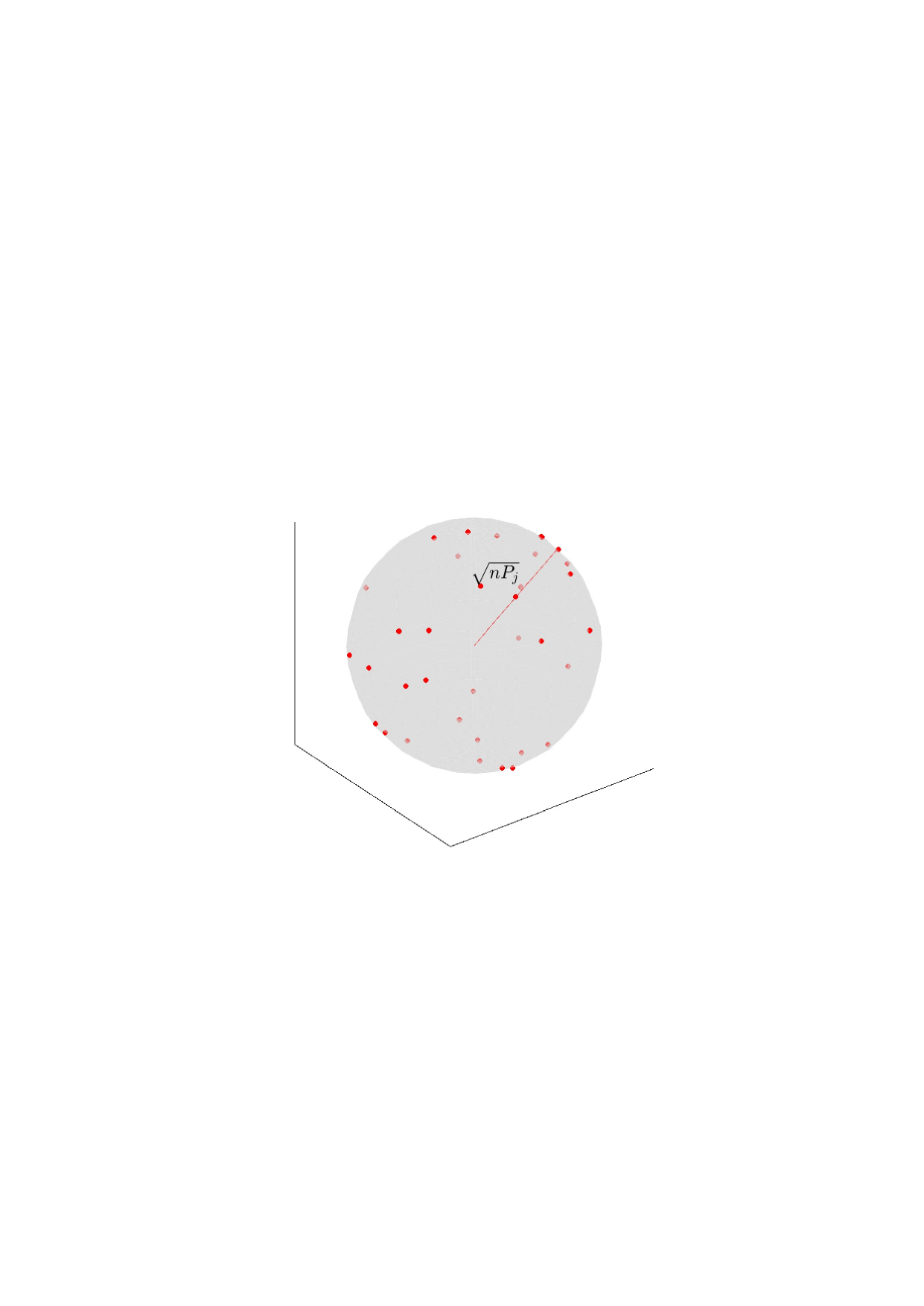}
\caption{Illustration of a spherical codebook in~\eqref{eq:sphericaldist} when $n=3$.}
\label{fig:spherical codebook}
\end{figure}

Inspired by the fact that spherical codebooks achieve better low-latency performance for mismatched channel coding in~\cite{scarlett2017mismatch}, both users are assumed to apply spherical codebooks. Specifically, for each $j\in[2]$ and each $t\in[M_j]$, codeword $X_j^n(t)$ is generated from the uniform distribution over a sphere with radius $\sqrt{nP_j}$, i.e.,
\begin{align}
P_{X_j^n}(x_j^n(t)) := \frac{\delta (\| x_j^n(t) \| ^2 - nP_j)}{S_n(\sqrt{nP_j})}\label{eq:sphericaldist},
\end{align}
where $\delta(\cdot)$ is the Dirac delta function and $S_n(r) = 2\pi^{n/2}$ $r^{n-1}/\Gamma(n/2)$ is the surface area of a radius-$r$ sphere where $\Gamma(\cdot)$ denotes the Gamma function. The spherical codebook is illustrated in Fig.~\ref{fig:spherical codebook} when $n=3$, where each red dot denotes one codeword.

\subsubsection{Performance Metrics}

To evaluate the performance of a code satisfying either Def.~\ref{def:MAC JNN coding scheme} or Def.~\ref{def:MAC SIC coding scheme}, we consider the following ensemble average error probability:
\begin{align}
\rmP_\rme^n:=\Pr \{ (\hatW_1,\hatW_2)\neq(W_1,W_2) \}\leq \varepsilon \label{eq:error prob mac}.
\end{align}
The above probability is averaged over the distributions of both random codebooks $(\calC_1,\calC_2)$, the messages $(W_1,W_2)$, and the additive noise $Z^n$. This definition is consistent with the existing studies for mismatched communications~\cite{lapidoth1996mismatch,scarlett2017mismatch,zhou2019jscc,zhou2019refined,zhou2023sr}.

\begin{figure}[tb]
\centering
\includegraphics[width = .4\columnwidth]{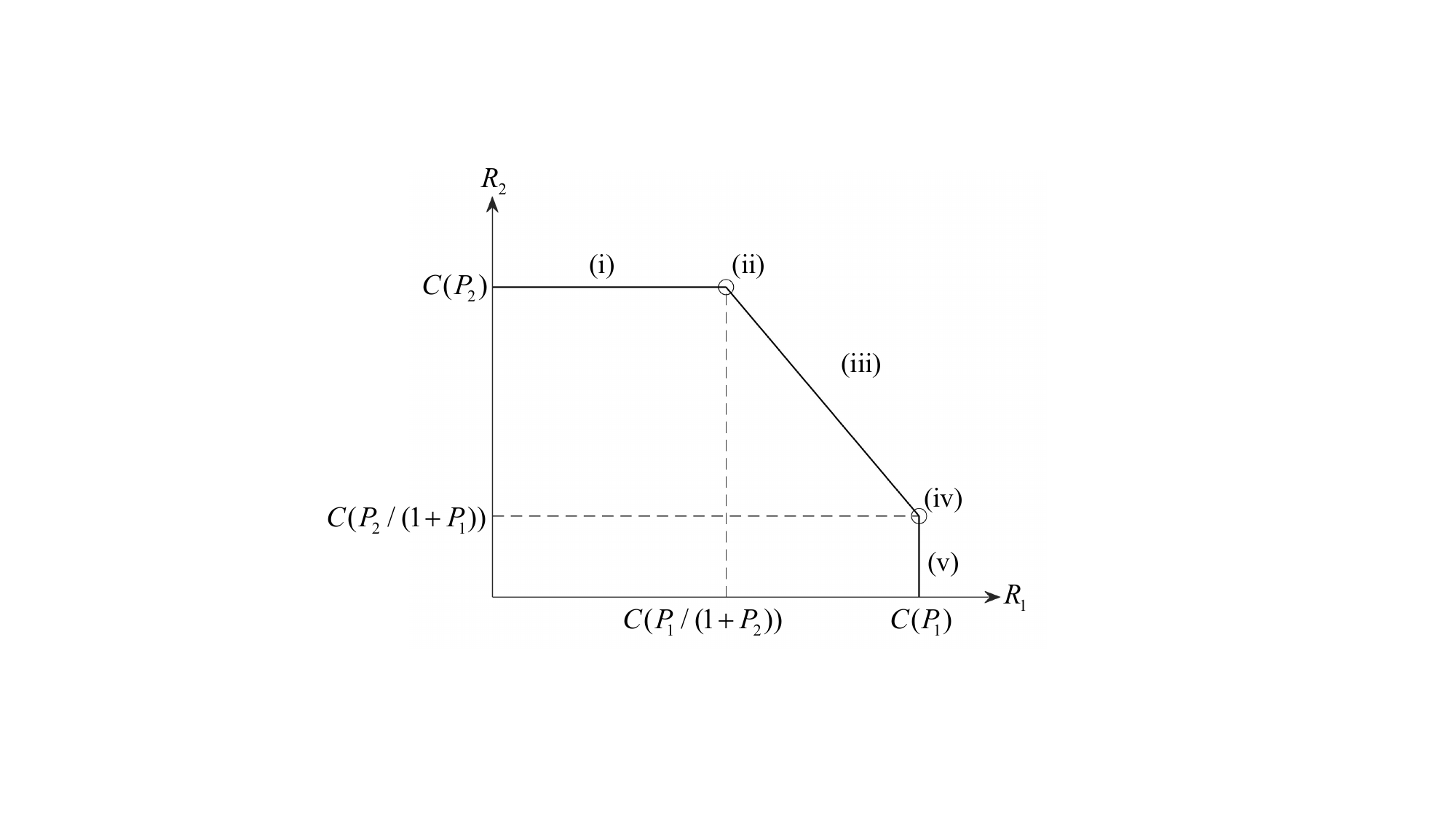}
\caption{Illustration of the first-order rate region for a two-user MAC, where for each $j\in[2]$, $R_j$ and $P_j$ are the first-order rate and transmitter power for user $j$, respectively.}
\label{fig:MACfirstorder}
\end{figure}

For multi-user communication, the definition of theoretical benchmarks is significantly different from the P2P case. In particular, since multiple encoders or multiple decoders appear, the theoretical benchmark is characterized by a multi-dimensional region instead of a one-dimensional minimal or maximal rate. In the following, we define the first- and second-order asymptotics for a two-user MAC.

Fix any $\dag\in\{\rm JNN, SIC\}$. 
\begin{definition}
\label{def:mac first order region}
A rate pair $(R_1,R_2)\in\bbR_+^2$ is said to be first-order $\dag$-achievable for the two-user MAC if there exists a sequence of $(n,M_1,M_2,P_1,P_2)$-$\dag$ codes such that for each $j\in[2]$,
\begin{align}
\liminf\limits_{n\to\infty}\frac{1}{n}\log M_j & \geq R_{j}, \\
\lim_{n\to\infty}\rmP_\rme^n&=0.
\end{align}
The closure of the set of all first-order achievable rate pairs is called the first-order rate region and denoted by $\calR_\dag$.
\end{definition}

The first-order rate region for a two-user Gaussian MAC was established in~\cite{wyner1975mac} (see also~\cite[Chapter 4.6.1]{el2011network}), which was also proved to be the first-order achievable rate region for the two-user MAC with additive non-Gaussian noise under the assumptions that the noise is independent of channel inputs and the second moment of the noise distribution is the same as AWGN~\cite{lapidoth1996mismatch}. The corresponding mismatched capacity region is 
\begin{align}
\label{eq:MAC first order region}
\calR_{\rm JNN}=\calR_{\rm SIC}=\{(R_1,R_2):R_1\leq\rmC(P_1),R_2\leq\rmC(P_2),R_1+R_2\leq\rmC(P_1+P_2)\},
\end{align}
where $C(\cdot)$ is the capacity of an AWGN channel, i.e., for any $P\in\bbR_+$,
\begin{align}
\label{eq:capacity def}
\rmC(P) := \frac{1}{2}\log (1+P),
\end{align}
The above first-order achievable rate region is illustrated in Fig.~\ref{fig:MACfirstorder} and divided into several cases, which will be useful for subsequent low-latency performance characterization. Analogously to~\cite{scarlett2017mismatch}, the key point of mismatched communication is to use a coding scheme optimal for Gaussian noise for data transmission over an additive arbitrary noise channel.

Fix any rate pair $(R_{1}^{*},R_{2}^{*})$ on the boundary of $\calR_\dag$ and any $\varepsilon\in(0,1)$, we now present the definition of second-order asymptotics.
\begin{definition}
\label{def:mac second order region}
A pair $(L_1,L_2)\in\bbR^2$ is said to be second-order $(R_1^*,R_2^*,\varepsilon)$-$\dag$ achievable for the two-user MAC if there exists a sequence of $(n,M_1,M_2,P_1,P_2)$-$\dag$ codes such that for each $j\in[2]$,
\begin{align}
\limsup\limits_{n\to\infty}\frac{1}{\sqrt{n}}(nR_{j}^*-\log M_j) & \leq L_j, \label{eq:mac L}\\
\limsup\limits_{n\to\infty}\rmP_\rme^n&\leq\varepsilon\label{eq:mac 2nd order Pen}.
\end{align}
The closure of all second-order $(R_1^*,R_2^*,\varepsilon)$-$\dag$-achievable pairs is called the second-order achievable rate region and denoted by $\calL_\dag(R_1^*,R_2^*,\varepsilon)$.
\end{definition}
Given a fixed rate pair $(R_1^*,R_2^*)$ on the boundary of the first-order rate region $\calR_\dag$, $(L_1,L_2)$ characterizes the gap of non-asymptotic rate pairs to $(R_1^*,R_2^*)$ up to second-order. This is because \eqref{eq:mac L} implies that $\log M_j\geq nR_j^*-\sqrt{n}L_j$. Thus, the larger the second-order rate region, the larger the finite blocklength transmission rates.

\subsection{RAC with Unknown User Activity Pattern}
\label{sec:RAC defs}

\begin{figure}[tb]
\centering
\includegraphics[width = .6\columnwidth]{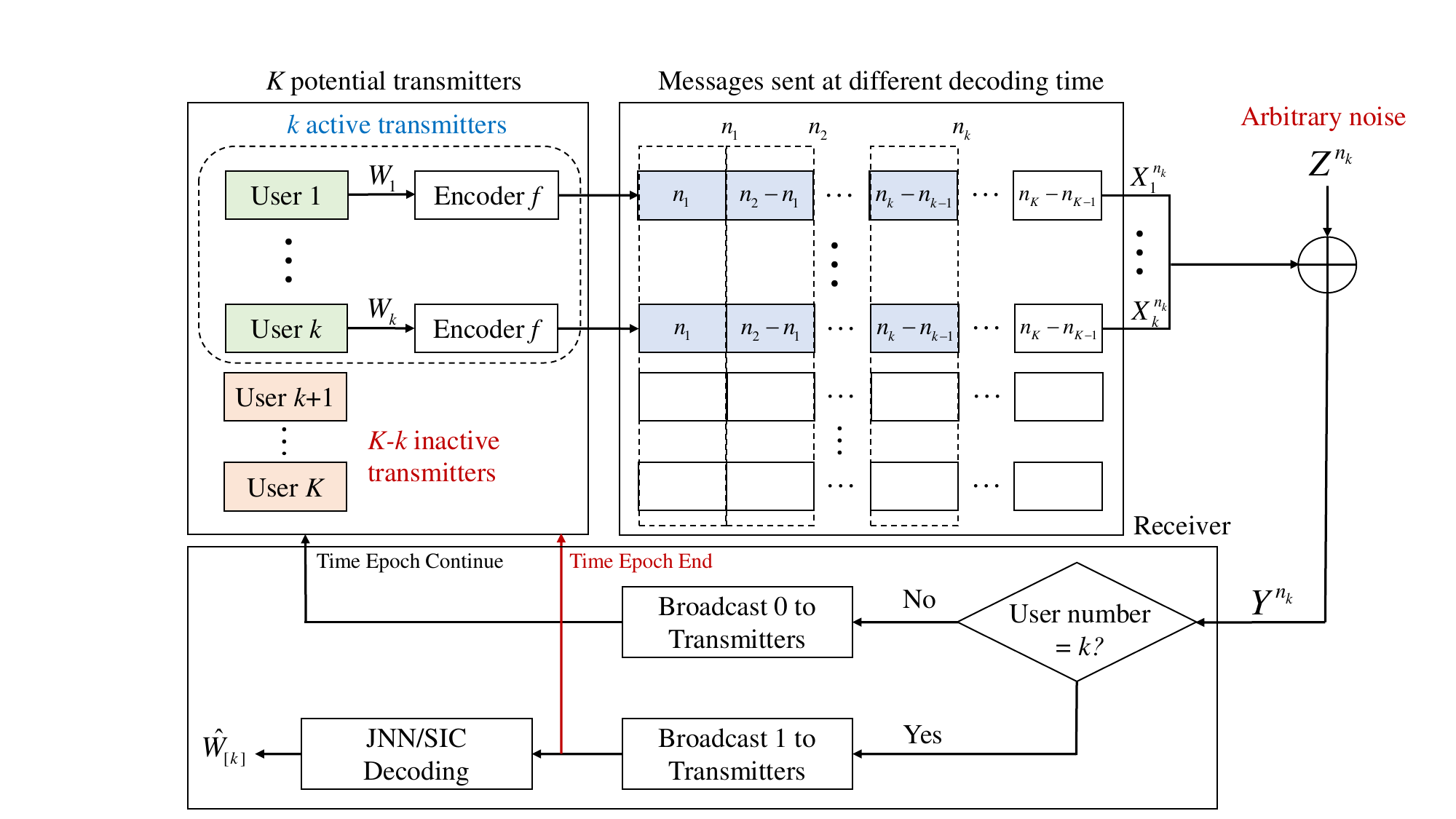}
\caption{Illustration of a RAC applying rateless code with JNN/SIC decoder. }
\label{fig:RAC system model}
\end{figure}

Fix three integers $(K,k,t)\in\bbN_+^3$ and a sequence of stopping times (blocklengths) $\bn:=(n_1,\ldots,n_K)\in\bbN^K$. As shown in Fig.~\ref{fig:RAC system model}, we study the random access channel with $K$ transmitters. In each time epoch, $k$ out of all $K$ transmitters are active and the other $K-k$ transmitters are inactive. Only the active users transmit messages. Without loss of generality, assume that the first $k$ users are active.  The receiver is assumed to be ignorant of the number $k$ and the identities of active users.

The rateless code~\cite{yavas2021gaussianmac} is used by the transmitters and the receiver. In particular, all active transmitters use the same codebook and encoder. For each $t\in[K]$, user $t$ generates a codeword of length $n_K$ and divides the codeword into $K$ sub-codewords with lengths $(n_1, n_2-n_1, \ldots, n_K-n_{K-1})$, respectively. The transmission process is separated into at most $K$ stages, and a one-bit feedback signal is sent from the receiver after each stage. In the first stage, each active user transmits the first sub-codeword of length $n_1$. After receiving a noisy channel output $Y^{n_1}$ that is corrupted by additive arbitrary noise $Z^{n_1}$, the receiver checks whether the estimated number of active users is one. If yes, a one-bit feedback $1$ is sent to all users to terminate the transmission for the current epoch. Otherwise, the receiver broadcasts the other one-bit feedback $0$, signaling each active user to continue the transmission process. For each $t\in[2:K]$, the active users transmitter their $t$-th sub-codewords of length $n_t-n_{t-1}$. The codewords are corrupted by the additive arbitrary noise $Z^{n_t-n_{t-1}}$, yielding noisy channel output $Y^{n_t-n_{t-1}}$. Using $Y^{n_t-n_{t-1}}$, the receiver first checks whether the estimated number of active users is $t$. If yes, the receiver terminates the transmission procedure by broadcasting a feedback signal $1$. Otherwise, the receiver broadcasts $0$ to continue the transmission process until the maximal decoding time $n_K$. When the transmission process is terminated, the receiver uses either a JNN or SIC decoder to estimate the transmitted messages of all active users. It was shown in~\cite[Lemma 1]{yavas2021rac} that channel quality scales inversely with the number of active users under the conditions that the users adopt the same codebook and encoder, and the channel is invariant to permutation of the channel inputs. Our mismatched setting satisfies this condition. Thus, as the active user number increases, to compensate for the deteriorating channel conditions, more information needs to be encoded, which necessitates the hierarchical structure that $n_1<n_2\ldots<n_K$. For each $k\in[K]$, $n_k$ can be expressed in terms of $n_1$, following a formulation analogous to the results of Gaussian RAC~\cite[Eq. (37)]{yavas2021gaussianmac}.

Fix any $k\in[K]$. Assume $k$ active users transmit messages $(W_1,\ldots,W_k)\in[M]^K$. For each $t\in[k]$, the received channel output at time $n_t$ satisfies
\begin{align}
\label{eq:RAC channel output}
Y^{n_t}=\sum_{i=1}^{k}X^{n_t}(W_i)+Z^{n_t},
\end{align}
where $(X^{n_t}(W_1),\ldots,X^{n_t}(W_k))$ are the first $n_t$ symbols of codewords sent by active users, and $Z^{n_t}$ is the additive noise generated i.i.d. from an unknown distribution $P_Z$ satisfying moment constraints in~\eqref{eq:noise statics}.

Fix $K$ positive real numbers $(\lambda_1,\ldots,\lambda_K)\in\bbR_+^K$ and a power constraint $P\in\bbR_+$. Recall that $\bn=(n_1,\ldots,n_K)$ are $K$ fixed blocklengths.

When JNN decoding is used, the mismatched code for RAC is defined as follows. 
\begin{definition}
\label{def:RAC JNN coding scheme}
An $(\bn,M,P)$-JNN code for a RAC consists of:
\begin{itemize}
\item a random codebook $\calC$ with $M$ codewords $\{X^{n_K}(1),\ldots, X^{n_K}(M)\}$ subject to a maximal power constraint $P$,
\item an encoder $f$ such that for each message $w\in[M]$,
\begin{align}
f(w):=X^{n_K}(w).
\end{align}
\item a sequence of $K$ decoders $(\Psi_1,\ldots,\Psi_K)$. For each $t\in[K]$, using $Y^{n_t}$, the decoder $\Phi_t$ first checks whether the estimated number of users equals $t$ by verifying the inequality $\rho_t(Y^{n_t}):=\big|\frac{1}{n_t}\|Y^{n_t}\|^2-(1+tP)\big|\leq\lambda_t$. If the inequality is not satisfied, the decoder notifies all active users to continue transmission. Otherwise, the decoder notifies all users to terminate transmission and applies the following JNN decoder:
\begin{align}\label{eq:RAC JNN decoder}
(\hatW_1,\ldots,\hatW_t)=\Psi_t(Y^{n_t}):=\argmin_{(\barw_1,\ldots,\barw_t)\in[M]^t}\bigg\|{Y^{n_t}-\sum_{i=1}^{t} X^{n_t}(\barw_i)}\bigg\|^2.
\end{align}
\end{itemize}
\end{definition}

When SIC decoding is used, the mismatched code for RAC is defined as follows.
\begin{definition}\label{def:RAC SIC coding shceme}
An $(\bn,M,P)$-SIC code is the same as the code in Def.~\ref{def:RAC JNN coding scheme} except that the JNN decoder in~\eqref{eq:RAC JNN decoder} is replaced by the following SIC decoder. The first message is decoded using nearest neighbor (NN) decoding and treating interference as noise (TIN):
\begin{align}
\hatW_1:=\argmin_{\barw\in[M]}\big\|Y^{n_k}-X^{n_k}(\barw)\big\|^2.
\end{align}
Subsequently, for each $r\in[2:t]$, the message $W_r$ is decoded using interference cancellation with NN and TIN:
\begin{align}\label{eq:RAC SIC decoder}
\hatW_r:=\argmin_{\barw\in[M]}\Big\|Y^{n_k}-\sum_{i=1}^{r-1}X^{n_k}(\hatW_i)-X^{n_k}(\barw)\Big\|^2.
\end{align}
\end{definition}

All users in a RAC adopt the same spherical codebook with codewords $\{X^{n_K}(1),\ldots, X^{n_K}(M)\}$. For each $w\in[M]$, the codeword $X^{n_K}(w)$ is generated from the same distribution and thus, for simplicity, we use $X^{n_K}$ to denote the codeword. The first $n_1$ symbols $X^{n_1}$ of $X^{n_K}$ are generated from the uniform distribution over the sphere with radius $\sqrt{n_1P}$, i.e., for any $x^{n_1}\in\bbR^{n_1}$,
\begin{align}
\label{eq:first n0 codewords distribution}
P_{X^{n_1}}(x^{n_1}):=\frac{\delta \left(\| x^{n_1} \| ^2 - n_1P\right)}{S_{n_1}(\sqrt{n_1P})}.
\end{align}
For each $j\in[2:K]$, the $n_{j-1}+1$ to the $n_j$ symbols $X^{n_j}_{n_{j-1}+1}$ of $X^{n_K}$,  is generated from the uniform distribution over a sphere with radius $\sqrt{(n_j-n_{j-1})P}$, i.e., for any $x^{n_j}_{n_{j-1}+1}\in\bbR^{n_j-n_{j-1}}$,
\begin{align}
\label{eq:nj-1 to nj codewords distribution}
P_{X^{n_j}_{n_{j-1}+1}}(x^{n_j}_{n_{j-1}+1}):=\frac{\delta (\| x^{n_j}_{n_{j-1}+1} \| ^2 - (n_j-n_{j-1})P)}{S_{n_j-n_{j-1}}(\sqrt{(n_j-n_{j-1})P})}.
\end{align}
In Fig.~\ref{fig:sphericaltype codebook}, we illustrate the spherical-type codebook when $K=2$, $n_1=2$ and $n_2=3$. From the above codeword generation process, one observes that each codeword $X^{n_K}$ consists of $K$ independent sub-codewords.

\begin{figure}[tb]
\centering
\includegraphics[width = .4\columnwidth]{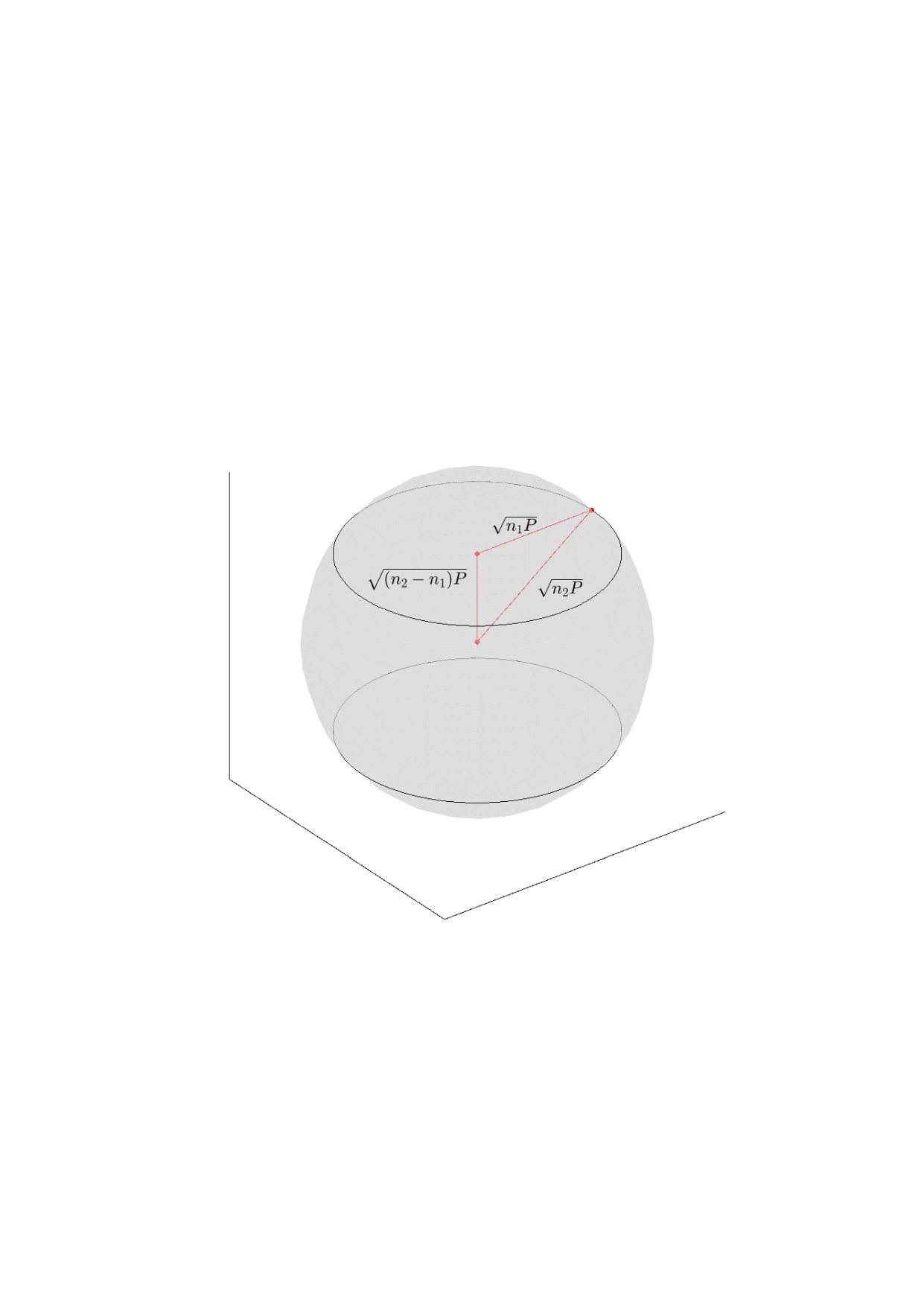}
\caption{Illustration of the spherical-type codebook in~\eqref{eq:first n0 codewords distribution} and~\eqref{eq:nj-1 to nj codewords distribution} when $n_1=2$ and $n_2=3$ (reproduced from~\cite[Fig. 2]{yavas2021gaussianmac}).}
\label{fig:sphericaltype codebook}
\end{figure}

To evaluate the performance of a RAC code, we consider a generalized definition of error probability. When $k\in[K]$ users are active, who transmit messages $(W_1,\ldots,W_k)\in[M]^k$, the error probability is defined as 
\begin{align}
\label{eq:rac error prob}
\rmP_{\rme, k}^n:=\Pr\Big\{\mathrm{perm}(W_1,\ldots,W_k)\neq \mathrm{perm}(\hatW_1,\ldots,\hatW_k)\Big\},
\end{align}
where $\mathrm{perm}$ corresponds to all possible arrangements for the set of messages $(W_1,\ldots,W_k)$. Note that since the identities of users are unknown, a permutation is necessary since the receiver only aims to decode all messages correctly and does not care which user transmits which message. This is consistent with RAC studies in~\cite{yavas2021gaussianmac}.

Since all active users adopt the same codebook and encoder in a RAC, the theoretical benchmark is defined as the maximum codebook size satisfying various error probability constraints under different blocklengths, corresponding to a performance guarantee for all possible numbers of active users. Specifically, given any set of $K$ positive real numbers $\boldsymbol{\varepsilon}:=(\varepsilon_1,\ldots,\varepsilon_K)\in(0,1)^K$ and $\dagger\in(\mathrm{SIC},\mathrm{JNN})$, the finite blocklength theoretical benchmark for a mismatched RAC is defined as
\begin{align}
M^*_\dagger(\bn,\boldsymbol{\varepsilon})
&:=\sup\big\{M\in\bbN_+:~\exists~\mathrm{an~}(\bn,M,P)\mathrm{-}\dagger\mathrm{-code.~s.t.~}\forall~k\in[K],~\rmP_{\rme, k}^{n_k}\leq \varepsilon_k\big\}.
\end{align}
Note that $M_\dagger^*(\bn,\boldsymbol{\varepsilon})$ represents the maximum codebook size for any mismatched RAC code that ensures the error probability is bounded by $\varepsilon_k$ when there are $k$ active users, simultaneously for all $k\in[K]$.

The theoretical benchmark for mismatched RAC differs significantly from that of MAC. This difference arises because, in a MAC, each user employs a distinct codebook and encoder, resulting in different message rates and a multidimensional rate region. In contrast, in a RAC, all active users share the same codebook and encoder, enforcing each user to have the same message rate. Consequently, the theoretical benchmark in a RAC is a one-dimensional value, similar to P2P channel coding.

\section{Main Results and Discussions}
\label{sec: main results and discussions}
In this section, we present second-order achievable rate regions for two-user MAC and RAC with additive arbitrary noise under both JNN and SIC decoding. In particular, we show that JNN and SIC decoding lead to vastly different performance for MAC and RAC. Specifically, both JNN and SIC decoding achieve the first-order rate region for MAC, while JNN achieves a strictly larger first-order rate for RAC. Furthermore, JNN decoding achieves a larger second-order rate region for MAC.

\subsection{Preliminaries}
The following definitions are needed. Given any integer $d\in\bbN_+$, positive semi-definite matrix $\bV$ of dimension $d\times d$ and positive real number $\varepsilon\in(0,1)$, the ccdf of a $d$-dimensional Gaussian random variable $\bS_d\sim\calN(\bzero_d,\bV)$ is defined as~\cite{tan2014mono,zhou2017sr}:
\begin{align}
\rmQ_{\rm inv}(\bV,\varepsilon):=\{ a^d=(a_1,\ldots,a_d) \in \bbR^d: \Pr\{\bS_d\geq a^d \}\geq 1-\varepsilon\}\label{eq:define Qinv}.
\end{align}

Fix power constraints $(P,P_1,P_2)\in\bbR_+^3$. Recall that the noise distribution satisfies the moment constraints in \eqref{eq:noise statics}, especially that the fourth moment is $\xi$. Define the following dispersion functions:
\begin{align}
V(P)&:=\frac{(\xi-1)P^2+4P}{4(1+P)^2},\label{eq:dispersion v}\\
V_{1,2}(P_1,P_2) &:= \frac{(\xi -1)P_1P_2}{4(1+P_1)(1+P_2)},\label{eq:dispersion term} \\
V_{i,12}(P_1,P_2) &:= \frac{(\xi -1)P_i(P_1+P_2)+4P_i}{4(1+P_i)(1+P_1+P_2)},\\
V_{12}(P_1,P_2) &:= V(P_1+P_2)+\frac{P_1P_2}{(1+P_1+P_2)^2},\\
V_{i}(P_1,P_2) &:= \frac{P_i^2(\xi-1+4P_{3-i})+4P_i(1+P_{3-i})^3}{{4(1+P_{3-i})^2(1+P_1+P_2)^2}}.
\end{align}
Furthermore, define the following dispersion matrices:
\begin{align}
\bV_{1}(P_1,P_2)&:=\begin{bmatrix}
V(P_2) & V_{2,12}(P_1,P_2) \\
V_{2,12}(P_1,P_2) & V_{12}(P_1,P_2)
\end{bmatrix},\label{eq:dispersion matrix 1}\\
\bV_{2}(P_1,P_2)&:=\begin{bmatrix}
V(P_1) & V_{1,12}(P_1,P_2) \\
V_{1,12}(P_1,P_2) & V_{12}(P_1,P_2)
\end{bmatrix},\label{eq:dispersion matrix 2}\\
\bV(P_1,P_2)&:=
\begin{bmatrix}
V(P_1) & V_{1,2}(P_1,P_2) & V_{1,12}(P_1,P_2) \\
V_{1,2}(P_1,P_2) & V(P_2) & V_{2,12}(P_1,P_2) \\
V_{1,12}(P_1,P_2) & V_{2,12}(P_1,P_2) & V_{12}(P_1,P_2)
\end{bmatrix}.\label{eq:dispersion matrix}
\end{align}
The above dispersion functions and matrices are used to characterize achievable second-order asymptotics for MAC and RAC.

\subsection{Two-User MAC}
\label{sec:mac results}
We first present an approximation to the non-asymptotic performance of JNN decoding up to second-order. Recall that $\rmC(\cdot)$ is the capacity of an AWGN channel in~\eqref{eq:capacity def}.
\begin{theorem}
\label{theo: mac jnn achie}
For any $\varepsilon \in (0,1)$, there exists an $(n,M_1,$ $M_2,P_1,P_2)$-JNN code such that for $\rmP_\rme^n\leq \varepsilon$,
\begin{align}
\label{eq:mac jnn theo 2}
\begin{bmatrix}
\log M_1 \\
\log M_2 \\
\log M_1M_2
\end{bmatrix}
\in n\bC(P_1,P_2)-\sqrt{n}{\rm Q}_{\rm inv}(\bV(P_1,P_2),\varepsilon)+O(n^{1/4})\mathbf{1}_3,
\end{align}
where
\begin{align}
\bC(P_1,P_2):=
[\rmC(P_1), \rmC(P_2), \rmC(P_1+P_2)]^\rmT.\label{eq:capacity vector def}
\end{align}
\end{theorem}

The proof of Theorem~\ref{theo: mac jnn achie} is provided in Section~\ref{sec:mac jnn proof}, which judiciously combines the proof techniques for AWGN RAC in ~\cite{molavianjazi2015second,yavas2021gaussianmac} and the mismatched P2P channel~\cite{scarlett2017mismatch}.

We make a few remarks. Fix $\alpha\in(0,1)$ and let $\bar{\alpha}=1-\alpha$. First, the result in Theorem~\ref{theo: mac jnn achie} can be written in the second-order asymptotic expression per Def.~\ref{def:mac second order region}, which is divided into five cases, as shown below.
\begin{corollary}
\label{CORO:MAC JNN ACHIEVABILITY}
Depending on the values of boundary rate pairs $(R_{1}^*,R_{2}^*)$ of the first-order achievable rate region $\calR_\mathrm{JNN}$ in~\eqref{eq:MAC first order region}, the second-order achievable rate region $\calL_{\rm JNN}(R_{1}^*,R_{2}^*,\varepsilon)$ satisfies
\begin{itemize}
\item Case (i): $(R_{1}^*,R_{2}^*)=(\alpha \rmC(P_1/(1+P_2)),\rmC(P_2))$
\begin{align}
\calL_{\rm JNN}(R_{1}^*,R_{2}^*,\varepsilon)\supseteq\{(L_1,L_2):L_2\geq\sqrt{V(P_2)}\rmQ^{-1}(\varepsilon)\}\label{eq:MAC JNN case i}.
\end{align}
\item Case (ii): $(R_{1}^*,R_{2}^*)=(\rmC(P_1/(1+P_2)),\rmC(P_2))$
\begin{align}
\calL_{\rm JNN}(R_{1}^*,R_{2}^*,\varepsilon)\supseteq\{(L_1,L_2):(L_2,L_1+L_2)\in \rmQ_{\rm inv}(\bV_{1}(P_1,P_2),\varepsilon)\}\label{eq:MAC JNN caseii}.
\end{align}
\item Case (iii): $(R_{1}^*,R_{2}^*)=(\alpha \rmC(P_1)+\bar{\alpha}\rmC(P_1/(1+P_2)),\alpha \rmC(P_2/(1+P_1))+\bar{\alpha}\rmC(P_2))$
\begin{align}
\calL_{\rm JNN}(R_{1}^*,R_{2}^*,\varepsilon)\supseteq\{(L_1,L_2):L_1+L_2\geq\sqrt{V_{12}(P_1,P_2)}\rmQ^{-1}(\varepsilon)\}\label{eq:MAC JNN caseiii}.
\end{align}
\item Case (iv): $(R_{1}^*,R_{2}^*)=(\rmC(P_1),\rmC(P_2/(1+P_1)))$
\begin{align}
\calL_{\rm JNN}(R_{1}^*,R_{2}^*,\varepsilon)\supseteq\{(L_1,L_2):(L_1,L_1+L_2)\in \rmQ_{\rm inv}(\bV_{2}(P_1,P_2),\varepsilon)\}.
\end{align}
\item Case (v): $(R_{1}^*,R_{2}^*)=( \rmC(P_1),\alpha \rmC(P_2/(1+P_1)))$
\begin{align}
\calL_{\rm JNN}(R_{1}^*,R_{2}^*,\varepsilon)\supseteq\{(L_1,L_2):L_1\geq\sqrt{V(P_1)}\rmQ^{-1}(\varepsilon)\}.
\end{align}
\end{itemize}
\end{corollary}
The results in Corollary~\ref{CORO:MAC JNN ACHIEVABILITY} with five cases enable us to compare the second-order achievable rate regions of JNN and SIC decoding, as presented in Section~\ref{sec:comparison jnn and sic mac}.

Theorem~\ref{theo: mac jnn achie} generalizes the result of the Gaussian MAC in~\cite{molavianjazi2015second} to the case with additive arbitrary noise. When specialized to the Gaussian MAC, $\xi=3$, and the above results are consistent with~\cite[Theorem 1]{molavianjazi2015second}. Although the more refined third-order asymptotic result was characterized~\cite[Theorem 2]{yavas2021gaussianmac}, the result was established for the matched case with Gaussian noise. In contrast, the result in Theorem \ref{theo: mac jnn achie} holds for additive arbitrary noise satisfying the moment constraints, and its proof is based on geometrical analysis, which does not rely on a particular noise distribution.

The proof of Theorem~\ref{theo: mac jnn achie} decomposes the ensemble error probability into probabilities of three error events: $\calE_1$, $\calE_2$, and $\calE_{1,2}$. Specifically, for $j\in[2]$, let $\calE_j$ denote the error event where only message $W_j$ is decoded incorrectly, and let $\calE_{1,2}$ denote the error event where both messages $W_1$ and $W_2$ are decoded incorrectly. Although the proof of Theorem \ref{theo: mac jnn achie} generally follows~\cite{molavianjazi2015second} and~\cite{yavas2021gaussianmac}, a critical step~\cite[Lemma 6]{yavas2021gaussianmac} was established for Gaussian noise. To resolve the above problem, we generalize~\cite[Lemma 6]{yavas2021gaussianmac} to arbitrary noise distribution, although with a looser bound. This way, we can upper bound the probability of three error events as desired and obtain the second-order asymptotics by applying the function version Berry-Esseen Theorem~\cite[Prop. 1]{molavianjazi2015second} (see Lemma~\ref{lemma:berry esseen for func}).

Unfortunately, we could not derive a matching ensemble converse result. The main challenge is clarified as follows. While the error probability concerning separate message error events $(\calE_1,\calE_2)$ can be lower bounded following the techniques of the mismatched P2P channel~\cite[Section IV]{scarlett2017mismatch}, lower bounding the probability of the error event $\calE_{1,2}$ suffers technical challenges. In particular, the analysis requires evaluating the probability density function (pdf) for the sum of two spherical codewords with powers $P_1$ and $P_2$, respectively. However, the summation of these two spherical codewords does not yield a spherical codeword with desired power $P_1+P_2$. Consequently, establishing a matching ensemble converse requires additional efforts beyond \cite{scarlett2017mismatch} and is left as future work.

Given any two real numbers $(a,b)\in\bbR^2$, define the following set:
\begin{align}
\calA(a,b)
&:=\{(L_1,L_2)\in\bbR^2:~L_1\geq a, L_2\geq b\}\label{def:calAab}.
\end{align}
When SIC decoding is used, the corresponding achievable second-order asymptotic result is as follows.
\begin{theorem}
\label{theorem:MAC-SIC ach}
Depending on the values of boundary rate pairs $(R_1^*,R_2^*)$ of the first-order achievable rate region $\calR_\mathrm{SIC}$ in~\eqref{eq:MAC first order region}, the second-order achievable rate region $\calL_{\rm SIC}(R_1^*,R_2^*,\varepsilon)$ satisfies
\begin{itemize}
\item Case (i): $(R_1^*,R_2^*)=(\alpha  C(P_1/(1+P_2)), C(P_2))$
\begin{align}
\label{eq:MAC SIC case i}
\calL_{\rm SIC}(R_1^*,R_2^*,\varepsilon)
&\supseteq\bigcup_{\substack{(\varepsilon_1,\varepsilon_2)\in\bbR_+^2:\\\varepsilon_1+\varepsilon_2\leq \varepsilon}}\calA\Big(\alpha\sqrt{V_1(P_1,P_2)}\rmQ^{-1}(\varepsilon_1),\sqrt{V(P_2)}\rmQ^{-1}(\varepsilon_2)\Big).
\end{align}
\item Case (ii): $(R_1^*,R_2^*)=( C(P_1/(1+P_2)), C(P_2))$
\begin{align}
\label{eq:MAC SIC case ii}
\calL_{\rm SIC}(R_1^*,R_2^*,\varepsilon)\supseteq\bigcup_{\substack{(\varepsilon_1,\varepsilon_2)\in\bbR_+^2:\\\varepsilon_1+\varepsilon_2\leq \varepsilon}}\calA\Big(\sqrt{V_1(P_1,P_2)}\rmQ^{-1}(\varepsilon_1),\sqrt{V(P_2)}\rmQ^{-1}(\varepsilon_2)\Big).
\end{align}
\item Case (iii): $(R_1^*,R_2^*)=(\alpha  C(P_1)+\bar{\alpha} C(P_1/(1+P_2)),\alpha  C(P_2/(1+P_1))+\bar{\alpha} C(P_2))$
\begin{align}
\nn\calL_{\rm SIC}(R_1^*,R_2^*,\varepsilon)\\*
\nn&\supseteq\bigcup_{\substack{(\varepsilon_1,\varepsilon_2)\in\bbR_+^2:\\\varepsilon_1+\varepsilon_2\leq \varepsilon}}\calA\Big((\alpha\sqrt{V_1(P_1,P_2)}+\bar{\alpha}\sqrt{V(P_1)})\rmQ^{-1}(\varepsilon_1),\\*
&\qquad\qquad(\alpha\sqrt{V(P_2)}+\bar{\alpha}\sqrt{V_2(P_1,P_2)})\rmQ^{-1}(\varepsilon_2)\Big).
\end{align}
\item Case (iv): $(R_1^*,R_2^*)=( C(P_1), C(P_2/(1+P_1)))$
\begin{align}
\calL_{\rm SIC}(R_1^*,R_2^*,\varepsilon)\supseteq\bigcup_{\substack{(\varepsilon_1,\varepsilon_2)\in\bbR_+^2:\\\varepsilon_1+\varepsilon_2\leq \varepsilon}}\calA\Big(\sqrt{V(P_1)}\rmQ^{-1}(\varepsilon_1),\sqrt{V_2(P_1,P_2)}\rmQ^{-1}(\varepsilon_2)\Big).
\end{align}
\item Case (v): $(R_1^*,R_2^*)=(  C(P_1),\alpha  C(P_2/(1+P_1)))$
\begin{align}
\calL_{\rm SIC}(R_1^*,R_2^*,\varepsilon)\supseteq\bigcup_{\substack{(\varepsilon_1,\varepsilon_2)\in\bbR_+^2:\\\varepsilon_1+\varepsilon_2\leq \varepsilon}}\calA\Big(\sqrt{V(P_1)}\rmQ^{-1}(\varepsilon_1),\alpha\sqrt{V_2(P_1,P_2)}\rmQ^{-1}(\varepsilon_2)\Big).
\end{align}
\end{itemize}
\end{theorem}
For cases (i) and (v), we exclude $\alpha=0$ and $\alpha=1$, respectively, since these two cases degenerate to the P2P channel characterized in~\cite[Theorem 1]{scarlett2017mismatch}. The proof of case (ii) is available in Section~\ref{sec:mac sic proof}, which combines the proof techniques for mismatched channel coding and mismatched interference channel coding~\cite{scarlett2017mismatch}. As shown in~\cite[Proposition 4.1]{el2011network}, if time sharing two coding schemes with achievable rate pairs $(R_{11},R_{21})$ and $(R_{12},R_{22})$, then the rate pair $(R_1,R_2)=(\alpha R_{11}+\bar{\alpha}R_{12},\alpha R_{21}+\bar{\alpha}R_{22})$ is also achievable for every $\alpha\in[1]$. Specifically, case (i) is proved by time sharing the results for mismatched channel coding, and case (ii), where $\alpha$ denotes the percentage of time that the code for case (ii) is used. Case (iv) is symmetric to case (ii), and case (v) is symmetric to case (i). Case (iii) is proved by time sharing the results of cases (ii) and (iv), where $\alpha$ denotes the percentage of total time where the code for case (iv) is used. 

In a nutshell, when a coding scheme that has guaranteed performance for Gaussian noise is used as a mismatched code, the same first-order asymptotics can be achieved for any additive noise with the same second moment, and the achievable second-order asymptotic result depends on the particular noise distribution through the fourth moment. Since the mismatched code is designed without knowledge of the exact noise distribution, its performance might be far away from the optimal code of the matched case. However, the advantage of a mismatched code is its general validity and robustness.

In the next subsection, we comprehensively compare the performance of JNN and SIC decoding for mismatched MAC.

\subsection{Comparison of Second-order Achievable Rate Regions of JNN and SIC Decoding for MAC}
\label{sec:comparison jnn and sic mac}

Comparing Corollary~\ref{CORO:MAC JNN ACHIEVABILITY} and Theorem~\ref{theorem:MAC-SIC ach}, we find that although both JNN and SIC decoding achieve the same first-order rate region for a two-user mismatched MAC, JNN decoding has a strictly larger second-order achievable rate region. The detailed explanation is as follows.

Due to the symmetry of cases (i) and (v), and cases (ii) and (iv), it suffices to consider cases (i), (ii), and (iii). Given $\dag\in(\rm i,\rm ii,\rm iii)$, let $\calL_{\rm JNN, \dag}$ and $\calL_{\rm SIC,\dag}$ denote inner bounds for the achievable second-order rate region $\calL_{\rm JNN}(R_1^*,R_2^*,\varepsilon)$ and $\calL_{\rm SIC}$ $(R_1^*,R_2^*,\varepsilon)$ in Corollary~\ref{CORO:MAC JNN ACHIEVABILITY} and Theorem~\ref{theorem:MAC-SIC ach}, respectively.

First, consider case (i). For JNN decoding, it follows from~\eqref{eq:MAC JNN case i} that $L_1$ can be arbitrarily small. In contrast, for SIC decoding, it follows from~\eqref{eq:MAC SIC case i} that $L_1$ is lower bounded by $\alpha\sqrt{V_1(P_1,P_2)}\rmQ^{-1}(\varepsilon_1)$. The lower bounds for $L_2$ in JNN and SIC decoding only differ in the parameter of the inverse of ccdf $\rmQ^{-1}(\cdot)$. Since $\rmQ^{-1}(\cdot)$ monotonically decreases in its parameter, it follows from the constraint $\varepsilon\geq\varepsilon_1$ that $\sqrt{V(P_2)}\rmQ^{-1}(\varepsilon_1)\geq\sqrt{V(P_2)}\rmQ^{-1}(\varepsilon)$.
Thus, for case (i), the second-order achievable rate region of JNN decoding is larger, i.e., $\calL_{\rm SIC, i} \subseteq \calL_{\rm JNN, i}$.

Now consider case (ii).Recall the definition of the dispersion function $\bV_1(P_1,P_2)$ in~\eqref{eq:dispersion matrix 1} and define the two dimensional Gaussian random vector $[K_1;K_2]\sim\calN(\cdot;\bzero_2,\bV_1(P_1,P_2))$. It follows from~\eqref{eq:MAC JNN caseii} that the second-order rate region of JNN decoding is equivalently given by
\begin{align}
\label{eq:rephrase case ii}
\calL_{\rm JNN, ii}=\big\{(L_1,L_2):\Pr\{\{K_1\geq L_2\}\cup \{K_2\geq (L_1+L_2)\}\}\leq\varepsilon\big\}.
\end{align}
First, consider extreme cases where one of the second-order rates tends to infinity. When $L_1\rightarrow\infty$, $\Pr\{K_2\leq (L_1+L_2)\}\rightarrow 1$, and
\begin{align}
\calL_{\rm JNN, ii}=\big\{(L_1,L_2):L_2\geq\sqrt{V(P_2)}\rmQ^{-1}(\varepsilon)\big\}.
\end{align}
The above region matches the second-order achievable rate region $\calL_{\rm SIC, ii}$ of SIC decoding in~\eqref{eq:MAC SIC case ii} when $\varepsilon_2\rightarrow\varepsilon$. However, if $\varepsilon_2<\varepsilon$, $\calL_{\rm SIC, ii}\subseteq\calL_{\rm JNN, ii}$ since $\rmQ^{-1}(\varepsilon_2)>\rmQ^{-1}(\varepsilon)$. When $L_2\rightarrow\infty$, the probability term in~\eqref{eq:rephrase case ii} does not depend on $L_1$ anymore, which implies that $L_1$ can be arbitrarily small in $\calL_{\rm JNN, ii}$ for JNN decoding. In contrast, $L_1$ is strictly lower bounded by $\sqrt{V_1(P_1,P_2)}\rmQ^{-1}(\varepsilon_1)$ in~\eqref{eq:MAC SIC case ii} for SIC decoding. Thus, $\calL_{\rm SIC, ii}\subseteq\calL_{\rm JNN, ii}$ when either $L_1$ or $L_2$ is unbounded.

When the second-order rates $L_1$ and $L_2$ are both finite, the comparison is non-trivial. Fix $\varepsilon_1\in(0,1)$ such that $\varepsilon_1<\varepsilon$ and let $L'_1:=\sqrt{V_1(P_1,P_2)}\rmQ^{-1}(\varepsilon_1)$. The set of $L_2$ such that $(L_1',L_2)\in\calL_{\rm JNN, ii}$ is given by
\begin{align}
\calL_{{\rm JNN, ii}, L_2}&=\{L_2:\Pr\{\{K_1\geq L_2\}\cup \{K_2-L'_1\geq L_2\}\}\leq\varepsilon\}\\*
&=\{L_2:\Pr\{\max\{K_1,K_2-L'_1\}\geq L_2\}\leq\varepsilon\}.\label{eq:two gaussian take max}
\end{align}
Correspondingly, the set of $L_2$ such that $(L_1',L_2)\in\calL_{\rm SIC,ii}$ satisfies
\begin{align}
\calL_{{\rm SIC, ii}, L_2}
&=\big\{L_2:~\Pr\{K_1\geq L_2\}\leq \varepsilon_2\big\}.\label{eq:Ljnn ii l2}
\end{align}
The comparison between~\eqref{eq:two gaussian take max} and~\eqref{eq:Ljnn ii l2} is mathematically challenging. Instead, a numerical example can help. Specifically, in Fig.~\ref{fig:caseii}, the second-order rate regions of JNN and SIC decoding are plotted when the boundary rate pair is $(R_1^*,R_2^*)=( C(P_1/(1+P_2)), C(P_2))$, the tolerable error probability is $\varepsilon=0.2$ the power constraints are $P_1=5, P_2=2$ and the fourth moment $\xi=3$. The region above the blue line is the second-order achievable rate region $\calL_{\rm JNN, ii}$ of JNN decoding, and the region above the red dashed line is the second-order achievable rate region $\calL_{\rm SIC, ii}$ of SIC decoding. The pink and deep blue circles denote the cases when $\varepsilon_1\rightarrow0$ and $\varepsilon_2\rightarrow0$, respectively. As observed from Fig.~\ref{fig:caseii}, when $L_1\rightarrow\infty$ or $L_2\rightarrow\infty$, the second-order rates of both JNN and SIC decoding are the same, which is consistent with the theoretical analysis. Furthermore, for a finite $L_1$, the minimum value of $L_2$ in $\calL_{\rm SIC, ii}$ for SIC decoding is strictly smaller than the corresponding minimal value of $L_2$ in $\calL_{\rm JNN, ii}$ for JNN decoding. The above numerical result implies that JNN decoding can be strictly better in terms of low-latency communication performance.

\begin{figure*}[tb]
\centering
\subfigure[Case (ii)]
{
\begin{minipage}{.4\linewidth}
\centering
\includegraphics[scale=0.48]{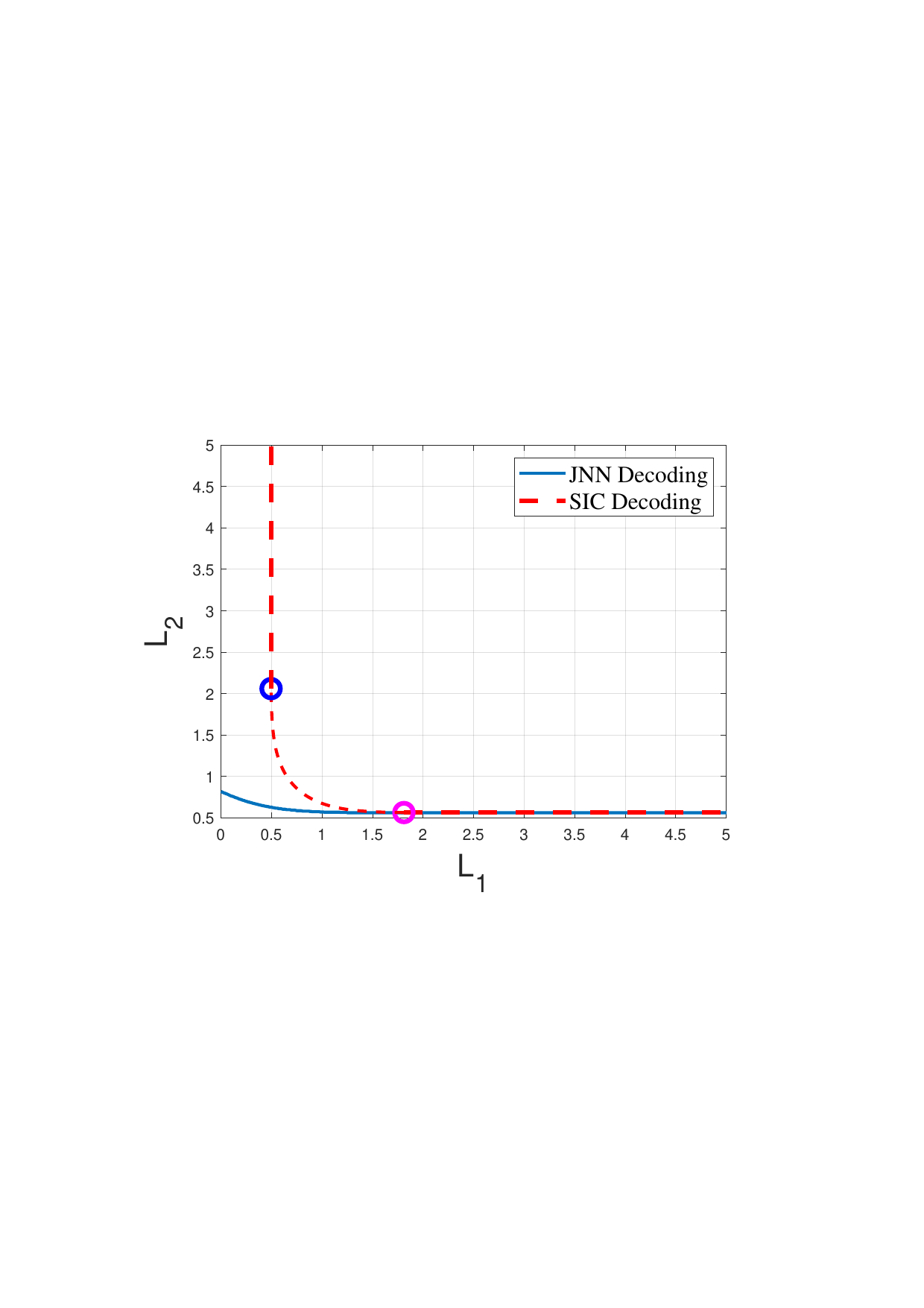}
\label{fig:caseii}
\end{minipage}
}
\subfigure[Case (iii), $\alpha =0.5$]
{
\begin{minipage}{.4\linewidth}
\centering
\includegraphics[scale=0.48]{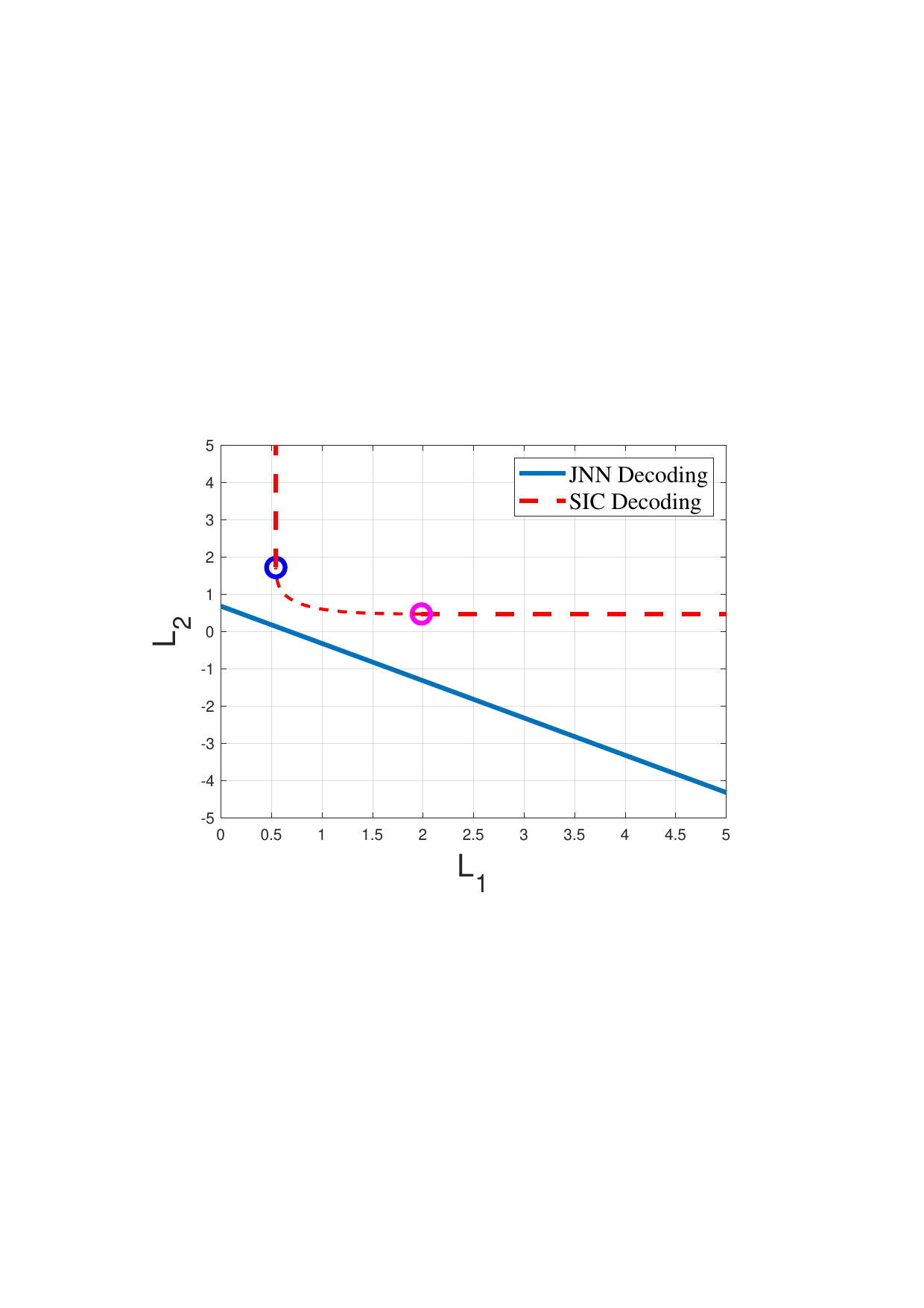}
\label{fig:caseiii}
\end{minipage}
}
\caption{Comparison of the second-order achievable rate regions $\calL_{\rm JNN}$ and $\calL_{\rm SIC}$ when $\varepsilon=0.2, P_1=5, P_2=2$ and $\xi=3$.}
\end{figure*}

For case (iii), the second-order achievable rate region of JNN in~\eqref{eq:MAC JNN caseiii} can be equivalently written as
\begin{align}
\label{eq:rephrase case iii}
\calL_{\rm JNN,iii}=\{(L_1,L_2): \Pr\{K_2\geq L_1+L_2  \}\leq\varepsilon \}.
\end{align}
It follows from~\eqref{eq:rephrase case ii} and~\eqref{eq:rephrase case iii} that $\calL_{\rm JNN, ii}\subseteq\calL_{\rm JNN,iii}$. This is because for any $(L_1,L_2)\in\bbR^2$, the inequality $\Pr\{K_2\geq L_1+L_2\}\leq\Pr\{\{K_1\geq L_2\}\cup \{K_2\geq (L_1+L_2)\}\}$ holds. Analogously, we have $\calL_{\rm JNN,iv}\subseteq\calL_{\rm JNN,iii}$. Furthermore, for SIC decoding, $\calL_{\rm SIC, iii}$ is a linear combination of $\calL_{\rm SIC, ii}$ and $\calL_{\rm SIC, iv}$. In Fig. \ref{fig:caseii}, we have shown numerically that $\calL_{\rm SIC}\subseteq\calL_{\rm JNN}$ for case (ii), which is also true for case (iv) due to symmetry. Thus, the same conclusion holds for case (iii) due to the above arguments. To clarify, we plot the second-order rate regions of JNN and SIC decoding for the same setting as Fig. \ref{fig:caseii} when $\alpha=0.5$ in Fig.~\ref{fig:caseiii}. As observed, the second-order rate region $\calL_{\rm SIC, iii}$ of SIC decoding, denoted by the region above the red dashed line, falls strictly inside the second-order rate region $\calL_{\rm JNN, iii}$ of JNN decoding, denoted by the region above the blue line.

In summary, the low-latency performance of JNN decoding can be strictly better than SIC decoding, although both decoding algorithms achieve the same asymptotic performance as the blocklength (latency) tends to infinity.

\subsection{RAC with Unknown User Activity Pattern}
Fix any power constraint $P\in\bbR_+$. Recall the definition of dispersion function $V(P)$ in~\eqref{eq:dispersion v}. For each $k\in[K]$ and $r\in[k]$, define the following dispersion functions:
\begin{align}
V_{\rm{cr}}(k,P)&:=\frac{k(k-1)P^2}{2(1+kP)^2}\label{eq:Vcr},\\
V_{\rm{rs}}(k,r,P)\nn&:=\frac{1}{4(1+(k-r+1)P)^2(1+(k-r)P)^2}(P^2(\xi-1)+4P(1+(k-r)P)^3\\*
&\qquad+4(k-r)P^3+2(k-r)(k-r-1)P^4).\label{eq:Vrs}
\end{align}
Recall that $\bn=(n_1,\ldots,n_K)\in\bbN^K$ are given blocklengths and $\boldsymbol{\varepsilon}=(\varepsilon_1,\ldots,\varepsilon_K)\in(0,1)^K$ are tolerable error probabilities. 
\begin{theorem}
\label{THEOREM:RAC JNN ACHIEVABILITY}
The finite blocklength transmission rate of a mismatched RAC with JNN decoding satisfies 
\begin{align}
\label{eq:rac jnn ach}
\log M^*(\bn,\boldsymbol{\varepsilon})&\geq \min_{k\in[K]}\frac{1}{k}\big(n_k  C(kP)-\sqrt{n_k(V(kP)+V_{\rm{cr}}(k,P))}\rmQ^{-1}(\varepsilon_{k})+O(\log n_k)\big).
\end{align}
\end{theorem}
The proof of Theorem~\ref{THEOREM:RAC JNN ACHIEVABILITY} is provided in Section~\ref{sec:rac jnn proof}, which combines the proof techniques for Gaussian RAC in~\cite{yavas2021gaussianmac} and mismatched P2P channel in~\cite{scarlett2017mismatch}. Theorem~\ref{THEOREM:RAC JNN ACHIEVABILITY} lower bounds the second-order achievable rate of a mismatched RAC using a JNN decoder. Our result holds for \emph{arbitrary} additive noise satisfying moment constraints~\eqref{eq:noise statics} and better suits the needs of practical communications systems. When specialized to the Gaussian noise, we have $\xi=3$, and Theorem~\ref{THEOREM:RAC JNN ACHIEVABILITY} implies that the coding scheme in Def.~\ref{def:RAC JNN coding scheme} achieves the same second-order rate as the matched case~\cite[Theorems 4]{yavas2021gaussianmac}. Note that we do not consider the case with $k=0$ as in~\cite{yavas2021gaussianmac} because it is the case when there is no active user in the current time epoch, and it affects neither the achievability results nor the comparison between JNN and SIC decoding.

When SIC decoding is used, the corresponding result is as follows.
\begin{theorem}
\label{THEOREM:RAC SIC ACHIEVABILITY}
The finite blocklength transmission rate of a mismatched RAC with SIC decoding satisfies 
\begin{align}
\label{eq:rac sic ach}
\log M^*(\bn,\boldsymbol{\varepsilon})\geq \min_{k\in[K]} \bigg( n_kC\bigg(\frac{P}{1+(k-1)P}\bigg)-\sqrt{n_kV_{\rm{rs}}(k,1,P)}\rmQ^{-1}(\varepsilon_{k})+O(\log n_k)\bigg).
\end{align}
\end{theorem}
The proof of Theorem~\ref{THEOREM:RAC SIC ACHIEVABILITY} is provided in Section~\ref{sec:rac sic proof}, which combines the proof techniques for Gaussian RAC in~\cite{yavas2021gaussianmac} and mismatched P2P interference channel in~\cite{scarlett2017mismatch}.
Theorem~\ref{THEOREM:RAC SIC ACHIEVABILITY} lower bounds the second-order achievable rate of additive non-Gaussian noise RAC with SIC decoding. Since all active users adopt the same codebook with $M$ codewords, the first-order achievable rate in~\eqref{eq:rac sic ach} is the minimum rate to decode the messages of all active users reliably, which corresponds to the decoding of the first message when all the codewords of all other messages are treated as noise. This leads to the signal to interference and noise ratio $\frac{P}{(1+(k-1)P)}$ and thus the first-order rate $ C(\frac{P}{(1+(k-1)P)})$. 

Notice that in Theorems~\ref{THEOREM:RAC JNN ACHIEVABILITY} and~\ref{THEOREM:RAC SIC ACHIEVABILITY}, the expression for $M^*(\bn, \boldsymbol{\varepsilon})$ is given as the minimum over all $k \in [K]$. This is because the lower bound on $\log M$ must be satisfied simultaneously for all blocklengths $(n_1, \ldots, n_K)$ under the corresponding error probability constraints $(\varepsilon_1, \ldots, \varepsilon_K)$. If one is interested in the relation between $\log M$ and a specific number of active users $k \in [K]$ with blocklength $n_k$, the corresponding terms inside the minimum of~\eqref{eq:rac jnn ach} and~\eqref{eq:rac sic ach} capture it.

It follows from~\eqref{eq:rac jnn ach} and~\eqref{eq:rac sic ach} that for mismatched RAC, SIC decoding is strictly inferior to JNN decoding even in the first-order rate. This is different from the mismatched MAC with JNN and SIC, where their first-order asymptotic results align. The reason is that in the RAC, all active users adopt the same codebook with $M$ codewords. This will lead to the fact that the first-order achievable rate of SIC in a RAC is the minimum rate to decode the messages of all active users reliably, corresponding to decoding the first message when all the codewords of all other messages are treated as noise. As a result, the smallest signal-to-interference and noise ratio equals $\frac{P}{(1+(k-1)P)}$, which leads to the first-order rate $ C(\frac{P}{(1+(k-1)P)})$. Compared with the first-order rate of JNN in a RAC, $\frac{C(kP)}{k}> C(\frac{P}{(1+(k-1)P)})$ when $k\geq 2$. The detailed proof is provided in Appendix~\ref{sec:appendix rac first order}.

In contrast, in a MAC, users typically employ distinct codebooks. In this case, SIC can sequentially decode users at their respective maximum rates (ordered by channel gains) and can also be combined with time sharing, avoiding the ``treating interference as noise'' penalty seen in the first-order asymptotic result of a RAC. This explains why SIC and JNN achieve comparable first-order rates in MAC but diverge in RAC under our assumptions. 

Fig.~\ref{fig:rac first comp} illustrates the first-order rates of RAC with JNN and SIC decoding as a function of the number of active users $k$. It can be observed that the asymptotic results for both JNN and SIC decoding decrease monotonically as $k$ increases, and JNN decoding outperforms SIC decoding. Fig.~\ref{fig:rac second comp} presents the second-order rates for RAC with JNN and SIC decoding, highlighting the inferior performance of SIC decoding in the finite blocklength regime as well.

\begin{figure*}[tb]
\centering
\subfigure[Comparison of the first-order rates]
{
\begin{minipage}{.4\linewidth}
\centering
\includegraphics[scale=0.48]{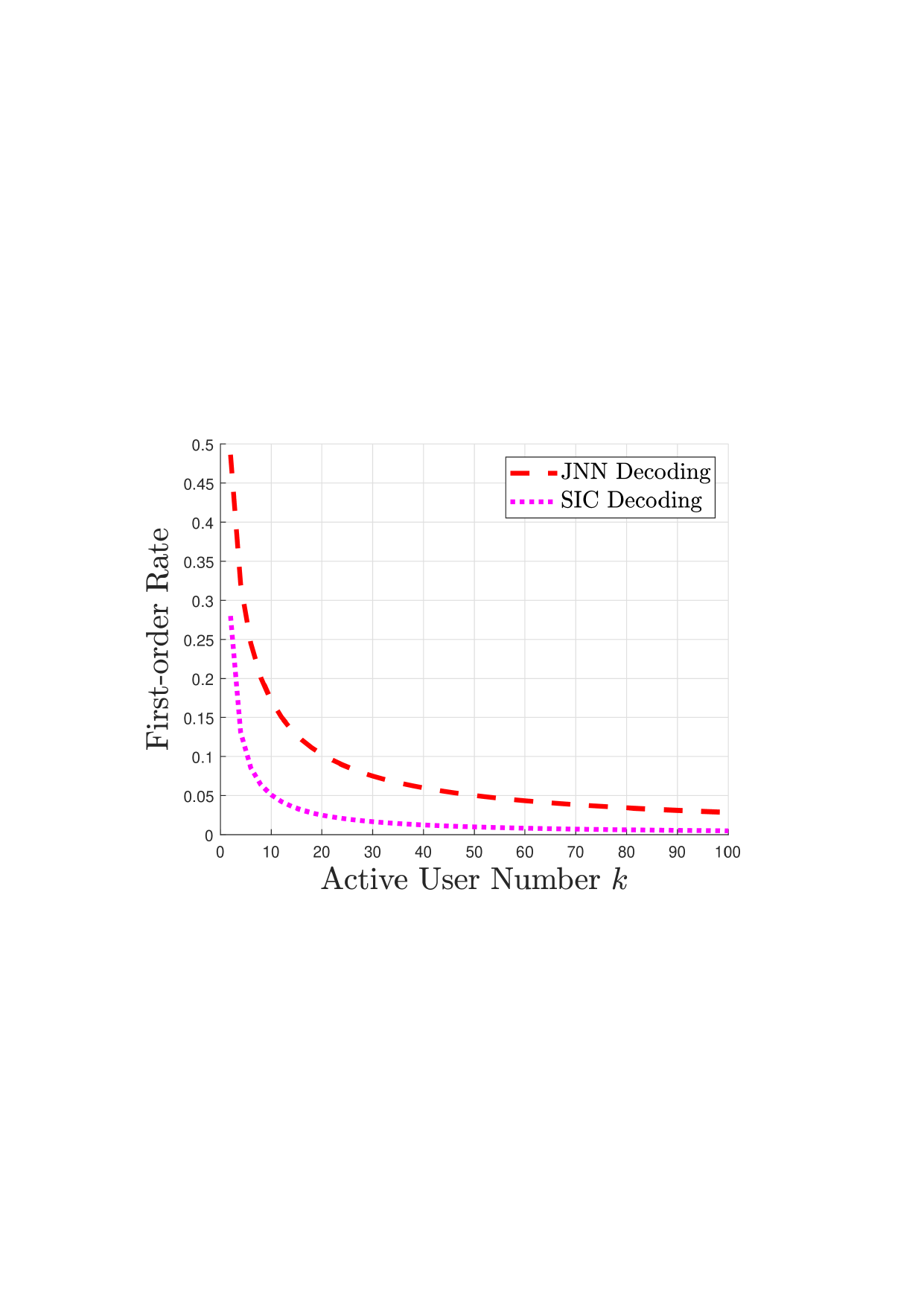}
\label{fig:rac first comp}
\end{minipage}
}
\subfigure[Comparison of the second-order rates]
{
\begin{minipage}{.4\linewidth}
\centering
\includegraphics[scale=0.48]{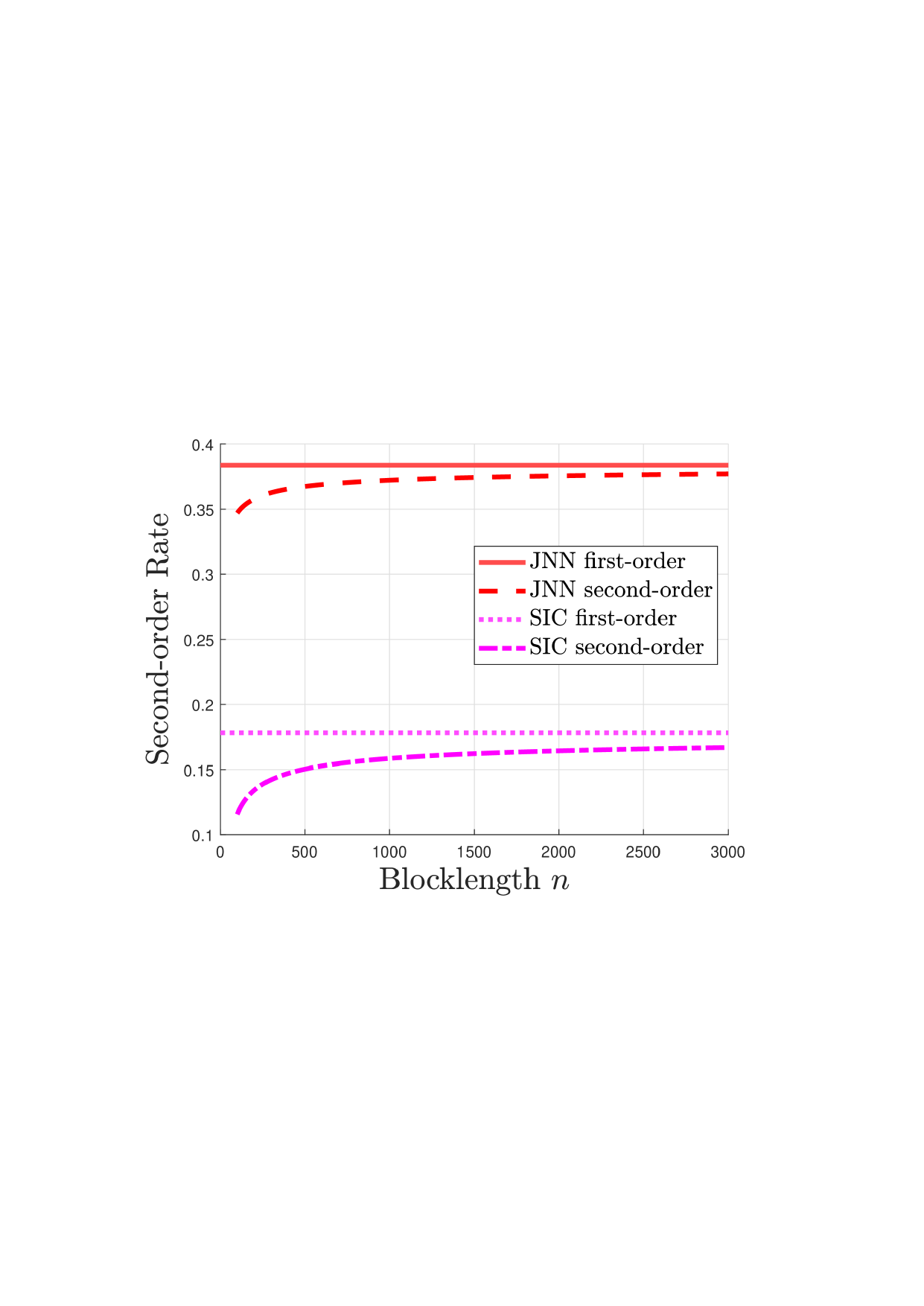}
\label{fig:rac second comp}
\end{minipage}
}
\caption{Comparison of the first- and second-order rates in a RAC for JNN and SIC decoding when $P=3$, $\xi=3$, $k=3$ and $\varepsilon_k=0.1$.}
\end{figure*}

\section{Proof of the Achievable Rate Regions for Two-User MAC (Theorems~\ref{theo: mac jnn achie} and~\ref{theorem:MAC-SIC ach})}
\label{sec: mac proof}

\subsection{Lemmas}
\label{sec:mac proof preliminaries}
In this subsection, we present definitions and key lemmas for the proof of Theorems~\ref{theo: mac jnn achie} and~\ref{theorem:MAC-SIC ach} concerning the low-latency performance of mismatched MAC with JNN and SIC decoding, respectively. 

The following definitions are needed. Given any $(x_1^n,x_2^n,y^n)\in(\bbR^n)^3$, for each $j\in[2]$, define the following mismatched (conditional) information densities
\begin{align}
\tilde{\imath}(x_1^n;y^n)&:=n\rmC\left(\frac{P_1}{1+P_2}\right)+\frac{\|y^n\|^2}{2(1+P_1+P_2)}-\frac{\|y^n-x_1^n\|^2}{2(1+P_2)}\label{eq:mismatch information density},\\
\tilde{\imath}_j^n(x_j^n;y^n|x_{3-j}^n)&:=n\rmC(P_j)+\frac{\| y^n - x_{3-j}^n \| ^2}{2(1+P_j)}-\frac{\|y^n-x_1^n-x_2^n\| ^2}{2},\label{eq:mismatch information density1&2}\\
\tilde{\imath}_{1,2}^n(x_1^n,x_2^n;y^n)&:=n\rmC(P_1+P_2)+\frac{\| y^n\| ^2}{2(1+P_1+P_2)}-\frac{\|y^n-x_1^n-x_2^n\| ^2}{2}\label{eq:mismatch information density1,2}.
\end{align}

With the above definitions, we have the following RCU bound for JNN decoding of mismatched MAC. 
\begin{lemma}
\label{lemma:mismatch RCU JNN}
\textbf{(RCU bound for mismatched MAC with JNN decoder)}
There exists an $(n,M_1,M_2,P_1,P_2)$-JNN code such that the ensemble average probability satisfies
\begin{align}
\nn\rmP_\rme^n \leq \bbE
\Big[ \min
\Big\{ 1,&(M_1-1)\Pr\{ \tilde{\imath}_1^n(\barX_1^n;Y^n|X_2^n)\geq\tilde{\imath}_1^n(X_1^n;Y^n|X_2^n) | X_1^n,X_2^n,Y^n \}\\
\nn+& (M_2-1)\Pr\{ \tilde{\imath}_2^n(\barX_2^n;Y^n|X_1^n) \geq \tilde{\imath}_2^n(X_2^n;Y^n|X_1^n) | X_1^n,X_2^n,Y^n \}\\
+& (M_1-1)(M_2-1)\Pr\{ \tilde{\imath}_{1,2}^n(\barX_1^n,\barX_2^n;Y^n) \geq
\tilde{\imath}_{1,2}^n(X_1^n,X_2^n;Y^n) | X_1^n,X_2^n,Y^n \}\Big\} \Big],\label{eq:mac jnn RCU conclusion}
\end{align}
where $(X_1^n,\barX_1^n,X_2^n,\barX_2^n,Y^n)\sim P_{X_1^n}P_{X_1^n}P_{X_2^n}P_{X_2^n}P_{Y^n|X_1^nX_2^n}$, $P_{X_1^n}$ and $P_{X_2^n}$ are the channel input distribution in~\eqref{eq:sphericaldist} and $P_{Y^n|X_1^nX_2^n}$ is induced by the additive noise channel $Y^n=X_1^n+X_2^n+Z^n$.
\end{lemma}

\begin{proof}
Recall the definition of JNN code in Def.~\ref{def:MAC JNN coding scheme}. A decoding error event occurs if one of the following three events occurs:
\begin{itemize}
\item $\calE_1: \{ \hatW_1 \neq W_1, \hatW_2 = W_2\}$,
\item $\calE_2: \{ \hatW_1 = W_1, \hatW_2 \neq W_2\}$,
\item $\calE_{1,2}: \{ \hatW_1 \neq W_1, \hatW_2 \neq W_2\}$.
\end{itemize}
It follows that
\begin{align}
\rmP_\rme^n &= \Pr\left\{\calE_1 \cup \calE_2 \cup \calE_{1,2}\right\}\\
&= \Pr\left\{\calE_1\right\} + \Pr\left\{\calE_2\right\} + \Pr\left\{\calE_{1,2}\right\}.\label{eq:Pen in three case}
\end{align}

We next bound each term of~\eqref{eq:Pen in three case}. The crux is to use the equivalence between the JNN decoder and the maximum mismatched information density decoder for each error event. Let $(w_1,w_2)\in[M]^2$ be the messages transmitted by the two users. Given that $\calE_1$ happens, i.e. the message $w_2$ is decoded correctly, the JNN decoder in~\eqref{eq:nn decoder} satisfies
\begin{align}
\Phi(Y^n)&=\argmin_{\barw_1\in[M_1]}\|{Y^n-X_1^n(\barw_1)-X_2^n(w_2)}\|^2~\label{eq:nn equivalent first}\\
&=\argmax_{\barw_1\in[M_1]}n\rmC(P_1)+\frac{\| Y^n - X_2^n(w_2) \| ^2}{2(1+P_1)}-\frac{\| Y^n-X_1^n(\barw_1)-X_2^n(w_2) \| ^2}{2}\label{eq:nn equivalent second last}\\
&=\argmax_{\barw_1\in[M_1]}\tilde{\imath}_1^n(X_1^n(\barw_1);Y^n|X_2^n(w_2))\label{eq:nn equivalent last},
\end{align}
where \eqref{eq:nn equivalent second last} follows since the the additive terms are constants when $Y^n$ and $w_2$ are known. The result in~\eqref{eq:nn equivalent last} implies that under error event $\calE_1$, maximizing the mismatched information density $\tilde{\imath}_1^n(X_1^n(\barw_1);Y^n|X_2^n(w_2))$ over $\barw_1\in[M_1]$ is equivalent to JNN decoding. Analogously, maximizing the mismatched information density $\tilde{\imath}_2^n(X_2^n(\barw_2);Y^n|X_1^n(w_1))$ over $\barw_2\in[M_2]$ is equivalent to JNN decoding under error event $\calE_2$.

Given that error event $\calE_{1,2}$ happens, the JNN decoder satisfies
\begin{align}
\Phi(Y^n)&=\argmin_{(\barw_1,\barw_2)\in[M_1]\times[M_2]}\|{Y^n-X_1^n(\barw_1)-X_2^n(\barw_2)}\|^2\\
&=\argmax_{(\barw_1,\barw_2)\in[M_1]\times[M_2]}n\rmC(P_1+P_2)+\frac{\| Y^n \| ^2}{2(1+P_1+P_2)}-\frac{\| Y^n-X_1^n(\barw_1)-X_2^n(\barw_2) \| ^2}{2}\\
&=\argmax_{(\barw_1,\barw_2)\in[M_1]\times[M_2]}\tilde{\imath}_{1,2}^n(X_1^n(\barw_1),X_2^n(\barw_2);Y^n)\label{eq:nn equivalent 12 last}.
\end{align}
The result in~\eqref{eq:nn equivalent 12 last} implies that under error event $\calE_{1,2}$, maximizing the mismatched information density $\tilde{\imath}_{1,2}^n(X_1^n(\barw_1),X_2^n(\barw_2);Y^n)$ over all message pairs $(\barw_1,\barw_2)\in[M_1]\times[M_2]$ is equivalent to JNN decoding.

For ease of notation, given any $(\barw_1,\barw_2)\in[M_1]\times[M_2]$, define the events 
\begin{align}
\calF(\barw_1,\barw_2)
&:=\{\|{Y^n-X_1^n(\barw_1)-X_2^n(\barw_2)}\|^2\leq\|{Y^n-X_1^n(w_1)-X_2^n(w_2)}\|^2\},\\
\calG_1(\barw_1)
&:=\{\tilde{\imath}_1^n(X_1^n(\barw_1);Y^n|X_2^n(w_2))\geq \tilde{\imath}_1^n(X_1^n(w_1);Y^n|X_2^n(w_2))\},\\
\calG_2(\barw_2)
&:=\{\tilde{\imath}_2^n(X_2^n(\barw_2);Y^n|X_1^n(w_1))\geq \tilde{\imath}_2^n(X_2^n(w_2);Y^n|X_1^n(w_1))\},\\
\calG_{1,2}(\barw_1,\barw_2)
&:=\{\tilde{\imath}_{1,2}^n(X_1^n(\barw_1),X_2^n(\barw_2);Y^n)\geq \tilde{\imath}_{1,2}^n(X_1^n(w_1),X_2^n(w_2);Y^n)\}.
\end{align}

It follows from \eqref{eq:Pen in three case},~\eqref{eq:nn equivalent last} and~\eqref{eq:nn equivalent 12 last} that the conditional error probability given that messages $(w_1,w_2)$ are transmitted satisfies
\begin{align}
\nn&\rmP_\rme^n(w_1,w_2)\\*
&=\Pr\Bigg\{\bigcup_{\substack{\barw_1\in[M_1]:\\\barw_1\neq w_1}}\calF(\barw_1,w_2)\Bigg\}+\Pr\Bigg\{\bigcup_{\substack{\barw_2\in[M_2]:\\\barw_2\neq w_2}}\calF(w_1,\barw_2)\Bigg\}+\Pr\Bigg\{\bigcup_{\substack{(\barw_1,\barw_2)\in[M_1]\times[M_2]:\\\barw_1\neq w_1,\barw_2\neq w_2}}\calF(\barw_1,\barw_2)\Bigg\}\label{eq:mac jnn rcu 1}\\
&=\Pr\Bigg\{\bigcup_{\substack{\barw_1\in[M_1]:\\\barw_1\neq w_1}}\calG_1(\barw_1)\Bigg\}+\Pr\Bigg\{\bigcup_{\substack{\barw_2\in[M_2]:\\\barw_2\neq w_2}}\calG_2(\barw_2)\Bigg\}+\Pr\Bigg\{\bigcup_{\substack{(\barw_1,\barw_2)\in[M_1]\times[M_2]:\\\barw_1\neq w_1,\barw_2\neq w_2}}\calG_{1,2}(\barw_1,\barw_2)\Bigg\}
\label{eq:mac jnn rcu 2}\\
\nn&=\bbE\Big[(M_1-1)\Pr\{ \tilde{\imath}_1^n(\barX_1^n;Y^n|X_2^n)\geq\tilde{\imath}_1^n(X_1^n;Y^n|X_2^n)|X_1^n,X_2^n,Y^n\}\\
\nn&\qquad+ (M_2-1)\Pr\{ \tilde{\imath}_2^n(\barX_2^n;Y^n|X_1^n) \geq \tilde{\imath}_2^n(X_2^n;Y^n|X_1^n)|X_1^n,X_2^n,Y^n\}\\
&\qquad+ (M_1-1)(M_2-1)\Pr\{ \tilde{\imath}_{1,2}^n(\barX_1^n,\barX_2^n;Y^n) \geq
\tilde{\imath}_{1,2}^n(X_1^n,X_2^n;Y^n)|X_1^n,X_2^n,Y^n\}\Big],\label{eq:mac jnn rcu 3}
\end{align}
where~\eqref{eq:mac jnn rcu 1} follows from the definition of JNN decoder and three error events, \eqref{eq:mac jnn rcu 2} follows due to the equivalence between JNN decoding and maximum mismatched information density decoding, and \eqref{eq:mac jnn rcu 3} follows from the union bound and the fact given any $(w_1,\barw_1,w_2,\barw_2)\in[M_1]^2\times[M_2]^2$, the joint distribution of random variables $(X_1^n(w_1),X_1^n(\barw_1),X_2^n(w_2),X_2^n(\barw_2),Y^n)$ is the same and given by $P_{X_1^n}P_{X_1^n}P_{X_2^n}P_{X_2^n}P_{Y^n|X_1^nX_2^n}$. 

The proof of Lemma~\ref{lemma:mismatch RCU JNN} by using~\eqref{eq:mac jnn rcu 2} is completed by noting that $\rmP_\rme^n(w_1,w_2)\leq 1$ for any $(w_1,w_2)\in[M_1]\times[M_2]$ and thus
\begin{align}
\rmP_\rme^n
&=\bbE[\rmP_\rme^n(W_1,W_2)]\leq \bbE[\min\{1,\rmP_\rme^n(W_1,W_2)\}].
\end{align}
\end{proof}

The proof of Lemma~\ref{lemma:mismatch RCU JNN} builds on the proof of the RCU bound under matched P2P case~\cite[Theorem 6]{PPV}. In the mismatched case, we show the equivalence of JNN decoding and the maximization of the mismatched information density. Thus, by replacing the information density in~\cite[Theorem 6]{PPV} with mismatched information density, we establish the RCU bound under mismatched MAC with JNN decoding in Lemma~\ref{lemma:mismatch RCU JNN}.
Analogously, we have the following RCU bound for SIC decoding.
\begin{lemma}
\label{lemma:mismatch RCU SIC}
\textbf{(RCU bound for mismatched MAC with SIC decoder)}
There exists an $(n,M_1,M_2,P_1,P_2)$-SIC code such that the ensemble average probability satisfies
\begin{align}
\nn\rmP_\rme^n 
&\leq \bbE
\Big[ \min
\Big\{ 1,(M_1-1)\Pr\{ \tilde{\imath}^n(\barX_1^n;Y^n)\geq\tilde{\imath}^n(X_1^n;Y^n) | X_1^n,Y^n \}\Big\} \Big]\\
&\qquad+\bbE
\Big[ \min
\Big\{ 1,(M_2-1)\Pr\{ \tilde{\imath}_2^n(\barX_2^n;Y^n|X_1^n)\geq\tilde{\imath}_2^n(X_2^n;Y^n|X_1^n) | X_1^n,X_2^n,Y^n \}\Big\} \Big],\label{eq:mac sic RCU conclusion}
\end{align}
where the joint distribution of $(X_1^n,\barX_1^n,X_2^n,\barX_2^n,Y^n)$ satisfies
\begin{align}
P_{X_1^n,\barX_1^n,X_2^n,\barX_2^n,Y^n}(x_1^n,\barx_1^n,x_2^n,\barx_2^n,y^n)=P_{X_1^n}(x_1^n)P_{X_1^n}(\barx_1^n)P_{X_2^n}(x_2^n)P_{X_2^n}(\barx_2^n)P_{Y^n|X_1^nX_2^n}(y^n|x_1^n,x_2^n).
\end{align}
$P_{X_1^n}$ and $P_{X_2^n}$ are the channel input distribution in~\eqref{eq:sphericaldist} and $P_{Y^n|X_1^nX_2^n}$ is the additive noise channel in~\eqref{eq:MAC channel model}.
\end{lemma}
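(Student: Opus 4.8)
The plan is to mirror the proof of Lemma~\ref{lemma:mismatch RCU JNN}, exploiting the two-stage structure of the SIC decoder in Def.~\ref{def:MAC SIC coding scheme}. First I would split the error event according to which stage fails: let $\calE_1':=\{\hatW_1\neq W_1\}$ be the event that Step~1 fails and $\calE_2':=\{\hatW_1=W_1,\ \hatW_2\neq W_2\}$ the event that Step~1 succeeds but Step~2 fails, so that by the union bound $\rmP_\rme^n\leq\Pr\{\calE_1'\}+\Pr\{\calE_2'\}$. Each term will be bounded by a single-user-type RCU argument.

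For $\Pr\{\calE_1'\}$, I would observe that the Step~1 rule in~\eqref{eq:SIC Decoder One} is equivalent to maximizing $\tilde{\imath}^n(X_1^n(\barw_1);Y^n)$ over $\barw_1\in[M_1]$: dividing $-\|Y^n-X_1^n(\barw_1)\|^2$ by the positive constant $2(1+P_2)$ and adding the $\barw_1$-independent terms $n\rmC(P_1/(1+P_2))$ and $\|Y^n\|^2/(2(1+P_1+P_2))$ reproduces exactly $\tilde{\imath}^n$ in~\eqref{eq:mismatch information density}. Hence $\calE_1'\subseteq\bigcup_{\barw_1\neq W_1}\{\tilde{\imath}^n(X_1^n(\barw_1);Y^n)\geq\tilde{\imath}^n(X_1^n(W_1);Y^n)\}$; conditioning on $(X_1^n,Y^n)$, noting that the competing codewords $\{X_1^n(\barw_1)\}_{\barw_1\neq W_1}$ are i.i.d.\ copies of $\barX_1^n\sim P_{X_1^n}$ independent of $(X_1^n,X_2^n,Y^n)$, applying the union bound, truncating the resulting sum at $1$, and taking the outer expectation yields the first term of~\eqref{eq:mac sic RCU conclusion}.

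For $\Pr\{\calE_2'\}$, the key point is that on $\{\hatW_1=W_1\}$ the decoder in~\eqref{eq:SIC Decoder Two} is fed the true codeword $X_1^n(W_1)$, so $\calE_2'\subseteq\bigcup_{\barw_2\neq W_2}\{\|Y^n-X_1^n(W_1)-X_2^n(\barw_2)\|^2\leq\|Y^n-X_1^n(W_1)-X_2^n(W_2)\|^2\}$; by the same affine manipulation (dividing $-\|\cdot\|^2$ by $2$ and adding $\barw_2$-independent terms $n\rmC(P_2)$ and $\|Y^n-X_1^n(W_1)\|^2/(2(1+P_2))$) this is $\bigcup_{\barw_2\neq W_2}\{\tilde{\imath}_2^n(X_2^n(\barw_2);Y^n|X_1^n(W_1))\geq\tilde{\imath}_2^n(X_2^n(W_2);Y^n|X_1^n(W_1))\}$ with $\tilde{\imath}_2^n$ from~\eqref{eq:mismatch information density1&2}. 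Conditioning now on $(X_1^n,X_2^n,Y^n)$, using that $\barX_2^n$ is independent of this triple, applying the union bound and truncating at $1$ gives the second term of~\eqref{eq:mac sic RCU conclusion}, which completes the proof.

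The main obstacle — really the only delicate point — is the clean handling of the cross-stage dependence: one must avoid bounding $\Pr\{\calE_2'\}$ by an expression involving $X_1^n(\hatW_1)$ for a random and possibly incorrect $\hatW_1$. Discarding the constraint $\hatW_1=W_1$ while retaining the substitution $X_1^n(\hatW_1)=X_1^n(W_1)$ is exactly what makes the second term a legitimate upper bound and decouples it from Step~1; as in Lemma~\ref{lemma:mismatch RCU JNN}, one also adopts the tie-breaking convention that ties are resolved against the true message, so that all the defining inequalities appear with $\geq$ rather than $>$.
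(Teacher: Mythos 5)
Your proposal is correct and follows essentially the same route as the paper: the same disjoint decomposition into $\{\hatW_1\neq W_1\}$ and $\{\hatW_1=W_1,\hatW_2\neq W_2\}$, the same affine equivalence between each nearest-neighbor step and maximizing the corresponding mismatched information density ($\tilde{\imath}^n$ for Step~1, $\tilde{\imath}_2^n$ for Step~2), and the same union bound, codebook symmetry, and truncation at $1$ to obtain each term of~\eqref{eq:mac sic RCU conclusion}. Your explicit remark on why conditioning on $\hatW_1=W_1$ lets one substitute the true codeword $X_1^n(W_1)$ in Step~2 is the point the paper handles only implicitly (``following the same reasoning''), and is stated correctly.
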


The proofs of Lemmas~\ref{lemma:mismatch RCU JNN} and~\ref{lemma:mismatch RCU SIC} judiciously combine the results in~\cite[Theorem 16]{PPV} and~\cite[Theorem 6]{effros2020mac}. However, in the mismatched scenario, the proof differs from the previous ones by leveraging the equivalence between the JNN decoder and the maximum mismatched information density decoder.

Given any $(x_1^n,x_2^n,y^n)$ and any $t\in\bbR_+$, define the following probability functions
\begin{align}
g_1(t;y^n,x_2^n) &:= \Pr\left\{\tilde{\imath}_1^n(\barX_1^n;y^n|x_2^n)\geq t\right\},\label{eq:define g1}\\
g_2(t;y^n,x_1^n) &:= \Pr\left\{\tilde{\imath}_2^n(\barX_2^n;y^n|x_1^n)\geq t\right\},\label{eq:define g2}\\
g_{1,2}(t;y^n) &:= \Pr\left\{\tilde{\imath}_{1,2}^n(\barX_1^n,\barX_2^n;y^n)\geq t\right\},\\
g(t;y^n) &:= \Pr\{\tilde{\imath}^n(\barX_1^n;y^n)\geq t\},\label{eq:define g}
\end{align}
where $\barX_j^n$ is distributed according to $P_{X_j^n}$ in~\eqref{eq:sphericaldist}. Each probability term concerns a mismatched (conditional) information density.

For subsequent analyses, define the following constants that depend on $(P_1,P_2)$:
\begin{align}
\beta_i&:=\frac{27\sqrt{\pi}(1+P_i)}{\sqrt{8(1+2P_i)}},~i\in[2],\\
\beta_{1,2}&:=\frac{9(P_1+P_2)}{2\pi\sqrt{2P_1P_2}}.
\end{align}

The following lemma upper bounds the above probabilities $g(\cdot)$, generalizing~\cite[Lemma 6]{yavas2021gaussianmac} to the mismatched MAC with additive arbitrary noise. It is important to note that Lemma \ref{lemma:g function} differs from~\cite[Lemma 6]{yavas2021gaussianmac} in that the $1/\sqrt{n}$ term is absent. Note that \cite[Lemma 6]{yavas2021gaussianmac} was established under Gaussian noise (~\cite[Section IV.E]{tantomamichel2015}) and the $1/\sqrt{n}$ term is essential for deriving the more refined third-order asymptotic term. However, for second-order asymptotics, the results in the following lemma suffice.
\begin{lemma}
\label{lemma:g function}
Given any positive real number $t\in\bbR_+$,
\begin{align}
g_1(t;y^n,x_2^n) &\leq \beta_1\exp(-t),\label{eq:g function 1}\\
g_2(t;y^n,x_1^n) &\leq \beta_2\exp(-t),\label{eq:g function 2}\\
g_{1,2}(t;y^n) &\leq \beta_{1,2}\exp(-t),\label{eq:g function 12}\\
g(t;y^n)&\leq \beta_{1,2}\exp(-t)\label{eq:g function}.
\end{align}
\end{lemma}
\begin{proof}
We first prove~\eqref{eq:g function 1}. Fix any $(i,j)\in[2]\times[n]$. Let $\tilX_{i,j}\sim\calN(0,P_i)$, $\tilX_i^n=(\tilX_{i,1},\ldots,\tilX_{i,n})$, $\tilY_j\sim\calN(0,1+P_1+P_2)$, and $\tilY^n=(\tilY_1,\ldots,\tilY_n)$. Let $P_{\tilX_i^n}$ denote the distribution of $\tilX_i^n$ and let $\tilY^n=\tilX_1^n+\tilX_2^n+\tilZ^n$, where $\tilZ^n$ is generated i.i.d. from $\calN(0,1)$. Let $P_{\tilX_1^n\tilX_2^nY^n}$ be the joint distribution of $(\tilX_1^n,\tilX_2^n,\tilY^n)$ and let any other (conditional) distributions concerning $(\tilX_1^n,\tilX_2^n,Y^n)$ be induced by this joint distribution. In particular, $P_{\tilY^n|\tilX_1^n\tilX_2^n}$ denotes the channel law for the Gaussian MAC. It follows from~\eqref{eq:mismatch information density1&2} that for any $(x_1^n,x_2^n,y^n)\in\bbR^{3n}$,
\begin{align}
\tilde{\imath}_1^n(x_1^n;y^n|x_2^n)=\log\frac{P_{\tilY^n|\tilX_1^n,\tilX_2^n}(y^n|x_1^n,x_2^n)}{P_{\tilY^n|\tilX_2^n}(y^n|x_2^n)}.
\end{align}
Let $Y_\rmG^n=X_1^n+X_2^n+\tilZ^n$ be the channel output of a Gaussian MAC when the channel inputs are spherical codewords $(X_1^n,X_2^n)$ that are generated by $P_{X_1^n}$ and $P_{X_2^n}$ in~\eqref{eq:sphericaldist}, respectively. Let the joint distribution of $(X_1^n,X_2^n,Y_\rmG^n)$ be $P_{X_1^nX_2^nY_\rmG^2}$ and let other (conditional) distributions concerning $(X_1^n,X_2^n,Y_\rmG^n)$ be induced by this joint distribution. For any $(x_1^n,x_2^n,y^n)\in\bbR^{3n}$, define the following information density:
\begin{align}
\imath_\rmG(x_1^n;y^n|x_2^n):=\log\frac{P_{Y_\rmG^n|X_1^n,X_2^n}(y^n|x_1^n,x_2^n)}{P_{Y_\rmG^n|X_2^n}(y^n|x_2^n)}.\label{eq: def iG}
\end{align}
It follows from the definition in~\eqref{eq:define g1} that $g_1$ can be upper bounded as follows:
\begin{align}
g_1(t;y^n,x_2^n)&=\Pr_{P_{X_1^n}}\Bigg\{\log\frac{P_{\tilY^n|\tilX_1^n,\tilX_2^n}(y^n|\barX_1^n,x_2^n)}{P_{\tilY^n|\tilX_2^n}(y^n|x_2^n)}\geq t\Bigg\}\\
&=\Pr_{P_{X_1^n}}\Bigg\{\log\frac{P_{\tilY^n|\tilX_1^n,\tilX_2^n}(y^n|\barX_1^n,x_2^n)}{P_{Y_\rmG^n|X_2^n}(y^n|x_2^n)}\geq t-\log\frac{P_{Y_\rmG^n|X_2^n}(y^n|x_2^n)}{P_{\tilY^n|\tilX_2^n}(y^n|x_2^n)}\Bigg\}\\
&\leq\Pr_{P_{X_1^n}}\Bigg\{\log\frac{P_{\tilY^n|\tilX_1^n,\tilX_2^n}(y^n|\barX_1^n,x_2^n)}{P_{Y_\rmG^n|X_2^n}(y^n|x_2^n)}\geq t-\log\beta_1\Bigg\}\label{eq: g1 proof 4}\\
&=\Pr_{P_{X_1^n}}\Bigg\{\log\frac{P_{Y_\rmG^n|X_1^n,X_2^n}(y^n|\barX_1^n,x_2^n)}{P_{Y_\rmG^n|X_2^n}(y^n|x_2^n)}\geq t-\log\beta_1\Bigg\}\label{eq: g1 proof 5}\\
&=\Pr_{P_{X_1^n}}\{\imath_\rmG(\barX_1^n;y^n|x_2^n)\geq t-\log\beta_1\}\label{eq: g1 proof 6}\\
&=\int_{-\infty}^\infty P_{X_1^n}(\barx_1^n)\bbo\{\imath_\rmG(\barx_1^n;y^n|x_2^n)\geq t-\log\beta_1\}\rmd \barx_1^n,\label{eq: g1 proof 7}
\end{align}
where~\eqref{eq: g1 proof 4} follows from \cite[Lemma 1]{yavas2021gaussianmac}, which implies that $\frac{P_{Y_\rmG^n|X_2^n}(y^n|x_2^n)}{P_{\tilY^n|\tilX_2^n}(y^n|x_2^n)}\leq \beta_1$,~\eqref{eq: g1 proof 5} follows since both $P_{Y_\rmG^n|X_1^n,X_2^n}$ and $P_{\tilY^n|\tilX_1^n,\tilX_2^n}$ denote the distribution of the Gaussian noise $\tilZ^n$, and~\eqref{eq: g1 proof 6} follows from the definition in~\eqref{eq: def iG}.

Bounding~\eqref{eq: g1 proof 7} follows the same steps as in~\cite[Eq. (3.310)-(3.320)]{polyanskiy2010thesis} (see also~\cite[Eq. (27)-(30)]{tantomamichel2015}) by performing a change-of-measure. Recall that $\barX_1^n,X^n_2$ follow the spherical distribution in~\eqref{eq:sphericaldist} and $Y_\rmG^n$ is the Gaussian MAC with spherically distributed inputs. It follows that for any $(\barx_1^n,x_2^n,y^n)\in\bbR^{3n}$,
\begin{align}
P_{X_1^n}(\barx_1^n)&=P_{X_1^n|X_2^n}(\barx_1^n|x_2^n)\label{eq: change of measure 1}\\
&=P_{X_1^n|Y^n_\rmG,X_2^n}(\barx_1^n|y^n,x_2^n)\frac{P_{X_1^n|X_2^n}(\barx_1^n|x_2^n)}{P_{X_1^n|Y^n_\rmG,X_2^n}(\barx_1^n|y^n,x_2^n)}\label{eq: change of measure 2}\\
&=P_{X_1^n|Y^n_\rmG,X_2^n}(\barx_1^n|y^n,x_2^n)\exp(-\imath_\rmG(\barx_1^n;y^n| x_2^n))\label{eq: change of measure},
\end{align}
where~\eqref{eq: change of measure 1} follows since $P_{X_1^n}$ and $P_{X_2^n}$ are independent of each other, and~\eqref{eq: change of measure} follows from the definition of information density in~\eqref{eq: def iG}. Given any $\eta\in\bbR_+$, combining~\eqref{eq: g1 proof 7} and~\eqref{eq: change of measure}, we obtain
\begin{align}
g_1(t;y^n,x_2^n)
\nn&=\int_{-\infty}^\infty P_{X_1^n|Y^n_\rmG,X_2^n}(\barx_1^n|y^n,x_2^n)\exp(-\imath_\rmG(\barx_1^n;y^n| x_2^n))\\
&\qquad\qquad\times\bbo\{\imath_\rmG(\barx_1^n;y^n|x_2^n)\geq t-\log\beta_1\}\rmd \barx_1^n\\
&=\bbE_{P_{X_1^n|Y^n_\rmG,X_2^n}}[\exp(-\imath_\rmG(\barX_1^n;y^n|x_2^n))\bbo\{\imath_\rmG(\barX_1^n;y^n|x_2^n)\geq t-\log\beta_1\}]\\
&\nn\leq\sum_{l=0}^\infty \exp(-(t-\log\beta_1+l\eta))\\*
&\qquad\quad\times\Pr\big\{t-\log\beta_1+l\eta\leq\imath_\rmG(\barX_1^n;y^n|x_2^n)\leq t-\log\beta_1+(l+1)\eta\big\}\\
&\leq\sum_{l=0}^\infty \exp(-(t-\log\beta_1+l\eta))\\
&=\frac{\exp(-t+\log\beta_1)}{1-\exp(-\eta)}\label{eq: sum of geometric sequence}\\
&\leq\beta_1\exp(-t),
\end{align}
where~\eqref{eq: sum of geometric sequence} follows from the summation of a geometric sequence. The proofs for \eqref{eq:g function 2} to \eqref{eq:g function} are similar and are thus omitted.
\end{proof}

In our proof, the function version Berry-Esseen theorem by MolavianJazi and Laneman~\cite[Prop. 1]{molavianjazi2015second} is used. For completeness, we recall the theorem in the following lemma. Fix any integer $l\in\bbN$. For each $i\in[n]$, let $\bX_i=(X_{i,1},\ldots,X_{i,d})$ be a $d$-dimensional random vector generated from any given pdf. Furthermore, let $\bt=(t_1,\ldots,t_d)\in\bbR^d$ be any vector and let $\boldf(\bt)=(f_1(\bt),\ldots,f_l(\bt)):\bbR^d\rightarrow\bbR^l$ be an $l$-component vector-function with continuous second-order partial derivatives in a neighborhood of $\bt=\bbE[\bX_1]$, whose Jacobian matrix $\bJ=\{J_{i,j}\}_{i\in[l],j\in[d]}$ satisfies
\begin{align}
J_{i,j}:=\frac{\partial f_i(\bt)}{\partial t_j}\bigg|_{\bt=\bbE[\bX_1]}.
\end{align}
Assume that the third absolute moment of the random vector is finite, i.e., $\bbE[\|\bX_1-\bbE[\bX_1]\|^3]<\infty$, and assume that the covariance matrix $\cov[\bJ \bX_1]$ is positive definite.

\begin{lemma}[Function Version Berry-Esseen Theorem]
\label{lemma:berry esseen for func}
There exists a finite positive constant $B\in\bbR_+$ such that, for any convex Borel-measurable set $\calD\in\bbR^l$ and for all $n\in\bbN_+$,
\begin{align}
\Bigg|\Pr\Bigg\{\sqrt{n}\Bigg(\boldf\Bigg(\frac{1}{n}\sum_{t=1}^n\bX_t\Bigg)-\boldf(\bbE[\bX_1])\Bigg)\in\calD\Bigg\}-\Pr\{Z^l\in\calD\}\Bigg|\leq\frac{B}{\sqrt[4]{n}},
\end{align}
where $Z^l\sim\calN(\bzero_l,\bJ\cov[\bX_1]\bJ^\rmT)$ is an $l$-dimensional Gaussian random vector.
\end{lemma}

\subsection{Proof of Theorem~\ref{theo: mac jnn achie} (MAC with JNN decoding)}
\label{sec:mac jnn proof}
To prove Theorem \ref{theo: mac jnn achie}, we judiciously combine the proof techniques in~\cite{molavianjazi2015second,yavas2021gaussianmac}. However, a critical step to get the third-order asymptotic result is~\cite[Lemma 6]{yavas2021gaussianmac}, which cannot be applied directly to the mismatched case, as it was established for the matched case with AWGN. To address this problem, we derive a weaker bound in Lemma~\ref{lemma:g function}, which is applicable for arbitrary noise distributions encountered in the mismatched MAC.

\subsubsection{Step 1: upper bound the ensemble error probability by RCU}
Using the channel law $Y^n=X_1^n+X^2_n+Z^n$, it follows from \eqref{eq:mismatch information density} that
\begin{align}
\tilde{\imath}_1:=\tilde{\imath}^n_1(X_1^n;Y^n|X_2^n) - n\rmC(P_1) &= \frac{\parallel X_1^n + Z^n \parallel^2}{2(1+P_1)} - \frac{\parallel Z^n \parallel^2}{2}\\
&= \frac{nP_1+\parallel Z^n \parallel^2+ 2\langle X_1^n,Z^n \rangle}{2(1+P_1)} - \frac{\parallel Z^n \parallel^2}{2}\\
&= \frac{(n-\parallel Z^n \parallel^2)P_1+ 2\langle X_1^n,Z^n \rangle}{2(1+P_1)}.\label{eq:i matrix 1}
\end{align}
Similarly, it follows from~\eqref{eq:mismatch information density1&2} and~\eqref{eq:mismatch information density1,2} that
\begin{align}
\tilde{\imath}_2&:=\tilde{\imath}^n_2(X_2^n;Y^n|X_1^n) - n\rmC(P_2)= \frac{(n-\parallel Z^n \parallel^2)P_2+ 2\langle X_2^n,Z^n \rangle}{2(1+P_2)},\label{eq:i matrix 2} \\
\tilde{\imath}_{1,2}&:=\tilde{\imath}^n_{1,2}(X_1^n,X_2^n;Y^n) - n\rmC(P_1+P_2)=\frac{(n-\parallel Z^n \parallel^2)(P_1+P_2)+ 2\langle X_1^n+X_2^n,Z^n \rangle+ 2\langle X_1^n,X_2^n \rangle}{2(1+P_1+P_2)}\label{eq:i matrix 12}.
\end{align}

Recall that $\beta_1,\beta_2,\beta_{1,2}$ are constants in Lemma \ref{eq:g function}. Define the event
\begin{align}
\calS &:=\left\{\begin{bmatrix}
\tilde{\imath}_1^n(X_1^n;Y^n|X_2^n) \\
\tilde{\imath}_2^n(X_2^n;Y^n|X_1^n) \\
\tilde{\imath}_{1,2}^n(X_1^n,X_2^n;Y^n)
\end{bmatrix}\geq\log\begin{bmatrix}
4M_1\beta_1\sqrt{n} \\
4M_2\beta_2\sqrt{n} \\
2M_1M_2\beta_{1,2}\sqrt{n}
\end{bmatrix}\right\}.
\label{def:calS:mac}
\end{align}
For ease of notation, given random variables $(X_1^n,X_2^n,Y^n)\sim P_{X_1^n}P_{X_2^n}P_{Y^n|X_1^nX_2^n}$ in Lemma \ref{lemma:mismatch RCU JNN}, define the following random variables
\begin{align}
G_1:=g_1(\tilde{\imath}_1(X_1^n;Y^n|X_2^n);Y^n,X_2^n),\\
G_2:=g_2(\tilde{\imath}_2(X_2^n;Y^n|X_1^n);Y^n,X_1^n),\\
G_{1,2}:=g_{1,2}(\tilde{\imath}^n_{1,2}(X_1^n,X_2^n;Y^n);Y^n).
\end{align}

Similar to~\cite[Eq. (86)-(90)]{yavas2021gaussianmac}, it follows from Lemma~\ref{lemma:mismatch RCU JNN} that the ensemble average error probability satisfies
\begin{align}
\rmP_\rme^n 
&\leq \bbE[ \min\{ 1,(M_1-1)G_1+(M_2-1)G_2+(M_1-1)(M_2-1)G_{1,2} \}]\label{eq:boundingpen1}\\
\nn&\leq \bbE[ \min\{ 1,(M_1-1)G_1+(M_2-1)G_2+(M_1-1)(M_2-1)G_{1,2} \}\bbo(\calS^\rmc)]\\*
&\qquad+\bbE[ \min\{ 1,(M_1-1)G_1+(M_2-1)G_2+(M_1-1)(M_2-1)G_{1,2} \}\bbo(\calS) ]\\
&\leq\Pr\{\calS^\rmc\}+\bbE[ \{ (M_1-1)G_1+(M_2-1)G_2+(M_1-1)(M_2-1)G_{1,2} \}\bbo(\calS) ]\label{eq:boundingpen11}\\
\nn&\leq\Pr\{\calS^\rmc\}+M_1\bbE[G_1 \bbo(\tilde{\imath}_1(X_1^n;Y^n|X_2^n)\geq\log(4M_1\beta_1\sqrt{n}))]\\*
&\nn\qquad+M_2\bbE[G_2\bbo(\tilde{\imath}_{2}(X_2^n;Y^n|X_1^n)\geq\log(4M_2\beta_2\sqrt{n}))]\\*
&\qquad+M_1M_2\bbE[G_{1,2}\bbo(\tilde{\imath}_{1,2}(X_1^n,X_2^n;Y^n)\geq\log(2M_1M_2\beta_{1,2}\sqrt{n}))]\label{eq:boundingpen2},
\end{align}
where~\eqref{eq:boundingpen1} follows since the newly defined random variables $G_1$ $G_2$ and $G_{1,2}$ correspond to three terms in right-hand side of the RCU bound in~\eqref{eq:mac jnn RCU conclusion}, \eqref{eq:boundingpen2} follows from the union bound and the fact that $\calS$ is the intersection of three events.

Using Lemma \ref{lemma:g function}, it follows that
\begin{align}
\nn&M_1\bbE[G_1 \bbo(\tilde{\imath}_1(X_1^n;Y^n|X_2^n)\geq\log(4M_1\beta_1\sqrt{n}))]\\*
&\leq \beta_1 M_1 \bbE\big[\exp(-\tilde{\imath}_1(X_1^n;Y^n|X_2^n))\bbo\big(\tilde{\imath}_1(X_1^n;Y^n|X_2^n)\geq\log(4M_1\beta_1\sqrt{n})\big)\big]\\
&\leq \beta_1 M_1\exp(-\log(4M_1\beta_1\sqrt{n}))\\
&=\frac{1}{4\sqrt{n}}\label{mac:remain:1}.
\end{align}
Similarly,
\begin{align}
M_2\bbE[G_2\bbo(\tilde{\imath}_{2}(X_2^n;Y^n|X_1^n)\geq\log(4M_2\beta_2\sqrt{n}))]
&\leq \frac{1}{4\sqrt{n}}\label{mac:remain:2},\\
M_1M_2\bbE[G_{1,2}\bbo(\tilde{\imath}_{1,2}(X_1^n,X_2^n;Y^n)\geq\log(2M_1M_2\beta_{1,2}\sqrt{n}))]
&\leq \frac{1}{2\sqrt{n}}\label{mac:remain:3}.
\end{align}

It remains to bound the probability $\Pr\{\calS^\rmc\}$. It follows from the definition of $\calS$ in \eqref{def:calS:mac} that
\begin{align}
\Pr\{\calS^\rmc\}
&\leq 1-\Pr\left\{\begin{bmatrix}\tilde{\imath}_1\\\tilde{\imath}_2\\\tilde{\imath}_{1,2}\end{bmatrix}\geq \log\begin{bmatrix}
4M_1\beta_1\sqrt{n} \\
4M_2\beta_2\sqrt{n} \\
2M_1M_2\beta_{1,2}\sqrt{n}
\end{bmatrix}-n\bC(P_1,P_2)\right\}\label{eq:Pen upper bound calAc}.
\end{align}

\subsubsection{Step 2: deriving the non-asymptotic form of the second-order result}
Analogously to~\cite[Eq. (57)-(73)]{molavianjazi2015second}, we bound~\eqref{eq:Pen upper bound calAc} using the function version Berry-Esseen theorem (see Lemma \ref{lemma:berry esseen for func}). For ease of notation, let
\begin{align}
\bi&:=[\tilde{\imath}_1;\tilde{\imath}_2;\tilde{\imath}_3],\\
\bm{\tau}&:=\log\begin{bmatrix}
4M_1\beta_1\sqrt{n} \\
4M_2\beta_2\sqrt{n} \\
2M_1M_2\beta_{1,2}\sqrt{n}
\end{bmatrix}-n\bC(P_1,P_2).\label{eq:tau}
\end{align}
For each $j\in[2]$, let $\tilX^n_j \sim \calN(\cdot;\bzero_n,\bone_n)$. It follows from \cite[Eq. (57)]{molavianjazi2015second} that spherical codewords $X_1^n$ and $X_2^n$ can be expressed as functions of i.i.d. random variables. Specifically, for each $(i,j)\in[n]\times[2]$,
\begin{align}
X_{ji}= \sqrt{nP_j}\frac{\tilX_{ji}}{\| \tilX^n_j \|}.\label{eq:spherical to iid}
\end{align}
For each $i \in [n]$, define the following random variables:
\begin{align}
&A_{1i}:= 1- Z_i^2, A_{2i}:=\sqrt{P_1}\tilX_{1i}Z_i, A_{3i}:=\tilX_{1i}^2 - 1,\label{eq:defineA}\\
&A_{4i}:= \sqrt{P_2}\tilX_{2i}Z_i, A_{5i}:= \tilX_{2i}^2 - 1, A_{6i}:=\sqrt{P_1P_2}\tilX_{1i}\tilX_{2i},\nn
\end{align}
and let $\bA_i := (A_{1i},\dots,A_{6i})$. Note that $\bA_{[n]}:=(\bA_1,\ldots,\bA_n)$ are zero-mean i.i.d. random vectors with finite third moments. Furthermore, for each $i\in[n]$, the covariance matrix of $\bA_i$ is
\begin{align}
\label{eq:mac jnn cov}
\cov(\bA_i) = \diag[\xi -1,P_1,2,P_2,2,P_1P_2].
\end{align}
Given any real numbers $\bk:=(k_1,\dots,k_6)\in\bbR^6$, define three functions:
\begin{align}
f_1(\bk) & := \frac{1}{2(1+P_1)}\left(P_1k_1+\frac{2k_2}{\sqrt{1+k_3}}\right),\\
f_2(\bk) & := \frac{1}{2(1+P_2)}\left(P_2k_1+\frac{2k_4}{\sqrt{1+k_5}}\right),\\
f_3(\bk) & := \frac{1}{2(1+P_1+P_2)}\Big((P_1+P_2)k_1 + \frac{2k_2}{\sqrt{1+k_3}} + \frac{2k_4}{\sqrt{1+k_5}} + \frac{2k_6}{\sqrt{(1+k_3)(1+k_5)}}\Big),\label{eq:define f_1}
\end{align}
and let $\boldf(\bk) := [f_1(\bk),f_2(\bk),f_3(\bk)]^{\rm T}$ denote the vector of functions. It follows from above definitions that $\boldf\big(\frac{1}{n}\sum_{i=1}^{n}\bA_i\big) = \frac{1}{n}\bi$.

The Jacobian matrix of the function vector $\bf(\cdot)$ around $\bbE[\bA_1] = \bzero_6$ satisfies
\begin{align}
\label{eq:mac jnn Jmatrix}
\bJ(\bA_1) = \begin{bmatrix}
\frac{P_1}{2(1+P_1)} & \frac{1}{1+P_1} & 0 & 0 & 0 & 0 \\
\frac{P_2}{2(1+P_2)} & 0 & 0 & \frac{1}{1+P_2} & 0 & 0 \\
\frac{P_1+P_2}{2(1+P_1+P_2)} & \frac{1}{1+P_1+P_2} & 0 & \frac{1}{1+P_1+P_2} & 0 & \frac{1}{1+P_1+P_2}
\end{bmatrix}.
\end{align}
Recall the definition of dispersion matrix $\bV(P_1,P_2)$ in~\eqref{eq:dispersion matrix}. It follows that for each $i\in[n]$,
\begin{align}
\nn&\bJ(\bA_i)\cov(\bA_i)\bJ(\bA_i)^\rmT\\*
&=\begin{bmatrix}
\frac{(\xi -1)P_1^2+4P_1}{4(1+P_1)^2} & \frac{(\xi -1)P_1P_2}{4(1+P_1)(1+P_2)} & \frac{(\xi -1)P_1(P_1+P_2)+4P_1}{4(1+P_1)(1+P_1+P_2)} \\
\frac{(\xi -1)P_1P_2}{4(1+P_1)(1+P_2)} & \frac{(\xi -1)P_2^2+4P_2}{4(1+P_2)^2} & \frac{(\xi -1)P_2(P_1+P_2)+4P_2}{4(1+P_2)(1+P_1+P_2)} \\
\frac{(\xi -1)P_1(P_1+P_2)+4P_1}{4(1+P_1)(1+P_1+P_2)}  & \frac{(\xi -1)P_2(P_1+P_2)+4P_2}{4(1+P_2)(1+P_1+P_2)} & \frac{(\xi -1)(P_1+P_2)^2+4(P_1+P_2+P_1P_2)}{4(1+P_1+P_2)^2}\\
\end{bmatrix}\\
&=\bV(P_1,P_2).
\end{align}
Let $\bS_3:=[S_1,S_2,S_3]^\rmT \sim \calN(\cdot;\bzero_3,\bV(P_1,P_2))$ denote the three-dimensional Gaussian random vector with mean zero and covariance matrix $\bV(P_1,P_2)$. It follows from \eqref{eq:Pen upper bound calAc} and Lemma~\ref{lemma:berry esseen for func} that
\begin{align}
\Pr\{\calS^\rmc\}&
\leq 1-\Pr\left\{\boldf\left(\frac{1}{n}\sum_{i=1}^{n}\bA_i\right)\geq \frac{\bm{\tau}}{n}\right\}\\
&\leq 1-\Pr\left\{ \bS_3\geq \bm{\tau}/\sqrt{n}\right\} +O(n^{-1/4}).\label{eq:apply berry mac jnn}
\end{align}
Combining~\eqref{eq:boundingpen2} and~\eqref{mac:remain:1} to~\eqref{mac:remain:3}, the error probability of the proposed $(\bn,M,P)$-code satisfies
\begin{align}
\rmP_\rme^n
&\leq \Pr\{\calS^\rmc\}+n^{-1/2}\label{mac:jnn:start-1}\\
&=1-\Pr\left\{ \bS_3\geq \bm{\tau}/\sqrt{n}\right\} +O(n^{-1/4})\label{mac:jnn:start0}.
\end{align}
Recall the definition of $\rmQ_{\rm inv}$ in~\eqref{eq:define Qinv}. To make~\eqref{mac:jnn:start0} less than $\varepsilon$, one needs
\begin{align}
\label{mac:jnn:start1}
\bm{\tau}/\sqrt{n}\in \rmQ_{\rm inv}(\bV(P_1,P_2),\varepsilon+O(n^{-1/4})).
\end{align}
Thus, using the definition of $\tau$ in~\eqref{eq:tau} and the Taylor series expansion
of $\rmQ_{\rm inv}$ in~\cite[Lemma 13]{kostina2019macracsourcecoding}, we obtain~\eqref{eq:mac jnn theo 2} in Theorem~\ref{theo: mac jnn achie}.

\subsubsection{Step 3: deriving the asymptotic form of the second-order result}
To compare the second-order achievable rate regions of JNN and SIC decoding, we then analyze various cases of boundary rate pairs to get Corollary~\ref{CORO:MAC JNN ACHIEVABILITY}. Fix a rate pair $(R_1^*,R_2^*)$ on the boundary of $\calR_{\rm JNN}$ (see~\eqref{eq:MAC first order region}). To bound the second-order rate region $\calL_{\rm JNN}$ (see Def.~\ref{def:mac second order region}), given real numbers $(L_1,L_2)\in\bbR^2$, let
\begin{align}
\log M_1 = nR_1^*-\sqrt{n}L_1,\\
\log M_2 = nR_2^*-\sqrt{n}L_2.
\end{align}
It follows from \eqref{eq:tau} and~\eqref{eq:apply berry mac jnn} that
\begin{align}
\nn&\limsup_{n\to\infty}\rmP_\rme^n\\*
&\leq 1-\liminf_{n\to\infty}\Pr\{ \bS_3\geq \bm{\tau}/\sqrt{n}\}\\
&=1-\liminf_{n\to\infty}\Pr\left\{\begin{bmatrix}
S_1\\
S_2\\
S_3
\end{bmatrix}\geq\begin{bmatrix}
\sqrt{n}(R_1^*-C(P_1))-L_1\\
\sqrt{n}(R_2^*-C(P_2))-L_2\\
\sqrt{n}(R_1^*+R_2^*-C(P_1+P_2))-(L_1+L_2)
\end{bmatrix}\right\}\label{eq:three dimension inequal}.
\end{align}
For subsequent analyses, define the following three events
\begin{align}
\calA_1&:=\{S_1\geq\sqrt{n}(R_1^*-C(P_1))-L_1\}\label{eq:define cals1},\\
\calA_2&:=\{S_2\geq\sqrt{n}(R_2^*-C(P_2))-L_2\}\label{eq:define cals2},\\
\calA_3&:=\{S_3\geq\sqrt{n}(R_1^*+R_2^*-C(P_1+P_2))-(L_1+L_2)\}.\label{eq:define cals3}
\end{align}
With the above definition, \eqref{eq:three dimension inequal} is equivalent to
\begin{align}
\limsup_{n\to\infty}\rmP_\rme^n
&\leq 1-\liminf_{n\to\infty}\Pr\{\calA_1\cap\calA_2\cap\calA_3\}.\label{eq:three s inequal}
\end{align}

Fix $\alpha\in(0,1)$. First consider case (i) where $(R_1^*,R_2^*)=(\alpha C(P_1/(1+P_2)),C(P_2))$. It follows that 
\begin{align}
R_1^*&<C(P_1),\\
R_1^*+R_2^*&<C(P_1/(1+P_2))+C(P_2)=C(P_1+P_2),
\end{align}
and thus
\begin{align}
\lim_{n\to\infty}\sqrt{n}(R_1^*-C(P_1))-L_1\rightarrow&-\infty,\label{eq:mac s1 infty}\\
\lim_{n\to\infty}\sqrt{n}(R_1^*+R_2^*-C(P_1+P_2))-(L_1+L_2)\rightarrow&-\infty.
\end{align}
As a result,
\begin{align}
\lim_{n\to\infty}\Pr\{\calA_1\}=\lim_{n\to\infty}\Pr\{\calA_3\}=1.
\end{align}
This way, the right-hand side of \eqref{eq:three s inequal} is further upper bounded as
\begin{align}
\nn&1-\liminf_{n\to\infty}\Pr\{\calA_1\cap\calA_2\cap\calA_3\}\\*
&\leq 1-\liminf_{n\to\infty}\big(\Pr\{\calA_2\}-\Pr\{\calA_1^\rmc\}-\Pr\{\calA_3^\rmc\}\big)\label{eq:pr is 1}\\
&=1-\Pr\{S_2\geq-L_2\}\label{eq:take n limt}\\
&=\Pr\{S_2\geq L_2\},\label{eq: mac jnn case i final}
\end{align}
where~\eqref{eq:pr is 1} follows since for 
\begin{align}
\Pr\{\calA_1\cap\calA_2\cap\calA_3\}
&=\Pr\big\{\calA_2\cap(\calA_1\cap\calA_3)\big\}\\
&\geq \Pr\{\calA_2\}-\Pr\{\calA_2\cap(\calA_1\cap\calA_3)^\rmc\}
\\
&\geq\Pr\{\calA_2\}-\Pr\{(\calA_1\cap\calA_3)^\rmc\}\\
&=\Pr\{\calA_2\}-\Pr\{\calA_1^\rmc\cup\calA_3^\rmc\}\\
&\geq \Pr\{\calA_2\}-\Pr\{\calA^\rmc\}-\Pr\{\calA_3^\rmc\},
\end{align}
\eqref{eq:take n limt} follows since $R_2^*=C(P_2)$ for case (i), and~\eqref{eq: mac jnn case i final} follows since $S_2\sim\calN(0,V(P_2))$ is a zero-mean Gaussian random variable. To make~\eqref{eq: mac jnn case i final} less than $\varepsilon$, one requires
\begin{align}
L_2\geq\sqrt{V(P_2)}\rmQ^{-1}(\varepsilon).
\end{align}
It follows that for case (i), the second-order rate region of mismatched MAC with JNN decoding satisfies
\begin{align}
\calL_{\rm JNN}(R_1^*,R_2^*,\varepsilon)\supseteq \big\{(L_1,L_2):L_2\geq\sqrt{V(P_2)}\rmQ^{-1}(\varepsilon)\big\}.
\end{align}

Recall the definition of the dispersion function $\bV_1(P_1,P_2)$ in~\eqref{eq:dispersion matrix 1}. Note that $[
S_2,S_3]^\rmT \sim \calN(\bzero_2,\bV_1(P_1,P_2))$ is a two-dimensional Gaussian random vector, which contains the last two random variables in $\bS=[S_1;S_2;S_3]$. Now consider case (ii), where $(R_1^*,R_2^*)=(C(P_1/(1+P_2)),C(P_2))$. Since $R_1^*<C(P_1)$, \eqref{eq:mac s1 infty} still holds, which leads to $\lim_{n\to\infty}\Pr\{\calA_1\}=1$.  Thus, the right-hand side of \eqref{eq:three s inequal} satisfies
\begin{align}
\nn&\liminf_{n\to\infty}(1-\Pr\{\calA_1\cap\calA_2\cap\calA_3\})\\*
&\leq \liminf_{n\to\infty}(1-\Pr\{\calA_2\cap\calA_3\}-\Pr\{\calA_1^\rmc\})\label{eq:caseii first eq}\\
&=1-\Pr\{\{S_2\geq-L_2\}\cap\{S_3\geq-(L_1+L_2)\}\}\label{eq:caseii second eq}\\
&=1-\Pr\{\{S_2\leq L_2\}\cap\{S_3\leq L_1+L_2\}\}\label{eq:caseii third eq}\\
&=1-\Pr\Bigg\{\begin{bmatrix}
S_2\\
S_3
\end{bmatrix}\leq\begin{bmatrix}
L_2\\
L_1+L_2
\end{bmatrix}\Bigg\},\label{eq: mac jnn case ii final}
\end{align}
where~\eqref{eq:caseii first eq} follows similarly as~\eqref{eq:pr is 1},~\eqref{eq:caseii second eq} follows from the values of boundary rate pairs $(R_1^*,R_2^*)=(C(P_1/(1+P_2)),C(P_2))$ for case (ii), and \eqref{eq:caseii third eq} follows form symmetry since $S_2$ and $S_3$ are zero-mean Gaussian random variables.
To make~\eqref{eq: mac jnn case ii final} less than $\varepsilon$, one requires
\begin{align}
\Pr\Bigg\{\begin{bmatrix}
S_2\\
S_3
\end{bmatrix}\leq\begin{bmatrix}
L_2\\
L_1+L_2
\end{bmatrix}\Bigg\}\geq1-\varepsilon.
\end{align}
Recall the definition of $\rmQ_{\rm inv}$ in~\eqref{eq:define Qinv}. The corresponding second-order rate region of mismatched MAC with JNN decoding for case (ii) satisfies
\begin{align}
\calL_{\rm JNN}\supseteq \{(L_1,L_2):(L_2,L_1+L_2)\in \rmQ_{\rm inv}(\bV_{1}(P_1,P_2),\varepsilon)\}.
\end{align}

Next consider case (iii), where $(R_1^*,R_2^*)=(\alpha C(P_1)+\bar{\alpha}C(P_1/(1+P_2)),\alpha C(P_2/(1+P_1))+\bar{\alpha}C(P_2))$ for some $\alpha\in(0,1)$. Since $R_1^*<C(P_1)$ and $R_2^*<C(P_2)$, it follows that
\begin{align}
\lim_{n\to\infty}\Pr\{\calA_1\}=\lim_{n\to\infty}\Pr\{\calA_2\}=1.
\end{align}
Analogously to~\eqref{eq:pr is 1}-\eqref{eq: mac jnn case i final}, the right hand side of \eqref{eq:three s inequal} satisfies
\begin{align}
\liminf_{n\to\infty}1-\Pr\{\calA_1\cap\calA_2\cap\calA_3\}=\Pr\{S_3\geq L_1+L_2\}.\label{eq: mac jnn case iii final}
\end{align}
To make~\eqref{eq: mac jnn case iii final} less than $\varepsilon$, one requires
\begin{align}
L_1+L_2\leq\sqrt{V_{12}(P_1,P_2)}\rmQ^{-1}(\varepsilon).
\end{align}
Thus, the second-order rate region of mismatched MAC with JNN decoding for case (iii) satisfies
\begin{align}
\calL_{\rm JNN}(R_1^*,R_2^*,\varepsilon)\supseteq\{(L_1,L_2):L_1+L_2\geq\sqrt{V_{12}(P_1,P_2)}\rmQ^{-1}(\varepsilon)\}.
\end{align}
The analyses for cases (iv) and (v) are omitted due to their symmetry with cases (ii) and (i), respectively. The proof of Corollary~\ref{CORO:MAC JNN ACHIEVABILITY} is now completed.

\subsection{Proof for Case (ii)  of Theorem~\ref{theorem:MAC-SIC ach} (MAC with SIC decoding)}
\label{sec:mac sic proof}
Fix any $\varepsilon\in(0,1)$ and $(\varepsilon_1,\varepsilon_2)\in\bbR_+^2$ such that $\varepsilon_1+\varepsilon_2\leq \varepsilon$. It follows from Lemma~\ref{lemma:mismatch RCU SIC} that the error probability $\rmP_\rme^n$ of the mismatched code in Def. \ref{def:MAC SIC coding scheme} satisfies
\begin{align}
\nn\rmP_\rme^n 
&\leq \bbE
\Big[ \min
\Big\{ 1,(M_1-1)\Pr\{ \tilde{\imath}^n(\barX_1^n;Y^n)\geq\tilde{\imath}^n(X_1^n;Y^n) | X_1^n,Y^n \}\Big\} \Big]\\
&\qquad+\bbE
\Big[ \min
\Big\{ 1,(M_2-1)\Pr\{ \tilde{\imath}_2^n(\barX_2^n;Y^n|X_1^n)\geq\tilde{\imath}_2^n(X_2^n;Y^n|X_1^n) | X_1^n,X_2^n,Y^n \}\Big\} \Big]\label{proof:mac:sic:step0}.
\end{align}

For simplicity, let the first and second terms of the right-hand side of \eqref{proof:mac:sic:step0} be denoted as $\rmP_{\rme,1}$ and $\rmP_{\rme,2}$, respectively. It follows that
\begin{align}
\rmP_{\rme,1}
&=\bbE[\min\{1,M_1\Pr\{\tilde{\imath}_1(\barX_1^n;Y^n)\geq\tilde{\imath}_1(X_1^n;Y^n)|X_1^n,Y^n\}\}] \label{eq:using RCU}\\
&\leq\bbE[\min\{1,M_1g(\tilde{\imath}_1(X_1^n;Y^n),Y^n)\}]\label{eq:take g in mac sic}\\
&\leq\Pr\left\{M_1\beta e^{-\tilde{\imath}_1(X_1^n;Y^n)}\geq\frac{1}{\sqrt{n}}\right\}+\frac{1}{\sqrt{n}}\label{eq:using gfunction}\\
\nn&\leq\Pr\bigg\{\frac{\|X_2^n+Z^n\|^2}{2(1+P_2)}-\frac{\|X_1^n+X_2^n+Z^n\|^2}{2(1+P_1+P_2)}\\*
&\qquad\qquad\qquad\geq nC\left(\frac{P_1}{1+P_2}\right)-\log M_1-\log(\beta\sqrt{n})\bigg\}+\frac{1}{\sqrt{n}},\label{eq:take i in}
\end{align}
where~\eqref{eq:take g in mac sic} follows from the definition of function $g$ in~\eqref{eq:define g},~\eqref{eq:using gfunction} follows the result in Lemma~\ref{lemma:g function} and~the inequality in \cite[Eq. (37)]{scarlett2017mismatch}, which states that for any real-valued random variable $J$ and integer $n\in\bbN$,
\begin{align}
\bbE[\min\{ 1,J \}]\leq\Pr\left\{J\geq\frac{1}{\sqrt{n}}\right\}+\frac{1}{\sqrt{n}},\label{eq:min 1 j}
\end{align}
\eqref{eq:take i in} follows by taking logarithm at both sides of the inequality and using the definition of mismatched information density of $\tilde{\imath}_1(\cdot)$ in~\eqref{eq:mismatch information density} and the channel model $Y^n=X_1^n+X_2^n+Z^n$. For ease of notation, define two terms:
\begin{align}
I&:=(1+P_1+P_2)\|X_2^n+Z^n\|^2-(1+P_2)\|X_1^n+X_2^n+Z^n\|^2,\label{eq:LHS}\\
J&:=2(1+P_1+P_2)(1+P_2)\bigg(nC\left(\frac{P_1}{1+P_2}\right)-\log M_1-\log(\beta\sqrt{n})\bigg).\label{eq:RHS}
\end{align}
This way, \eqref{eq:take i in} is equivalent to
\begin{align}
\label{eq: mac sic e1}
\rmP_{\rme,1}\leq\Pr\{I\geq J\}+\frac{1}{\sqrt{n}}.
\end{align}

For each $(i,j)\in[n]\times [2]$, let $\tilX_{ji}$ be generated from the Gaussian distribution with mean zero and variance one, i.e., $\calN(\cdot;0,1)$. For subsequent analyses, for each $i\in[n]$, define the following random variables:
\begin{align}
&B_{1i}:= Z_i^2-1, B_{2i}:=\sqrt{P_1}\tilX_{1i}Z_i, B_{3i}:=\tilX_{1i}^2 - 1,\\
&B_{4i}:= \sqrt{P_2}\tilX_{2i}Z_i, B_{5i}:= \tilX_{2i}^2 - 1, B_{6i}:=\sqrt{P_1P_2}\tilX_{1i}\tilX_{2i},\nn
\end{align}
and let $\bB_i := (B_{1i},\dots,B_{6i})$ denote the random vector. It follows that $(\bB_1,\ldots,\bB_n)$ are a sequence of i.i.d. random vectors with finite third moment and for each $i\in[n]$,
\begin{align}
\cov(\bB_i) =\diag\begin{bmatrix}
\xi-1,P_1,2,P_2 ,2 ,P_1P_2\end{bmatrix}.
\end{align}

Given any real numbers $\bb:=(b_1,\dots,b_6)\in\bbR^6$, define the function:
\begin{align}
f(\bb):=P_1b_1-2(1+P_2)\frac{b_2}{\sqrt{1+b_3}}+2P_1\frac{b_4}{\sqrt{1+b_5}}-2(1+P_2)\frac{b_6}{\sqrt{1+b_3}\sqrt{1+b_5}}.
\end{align}
The Jacobian matrix of $f(\cdot)$ at $\bbE[\bB_1]=\bzero_6$ is 
\begin{align}
\bJ(\bB)=\begin{bmatrix}P_1,-2(1+P_2),0,2P_1,0,-2(1+P_2)\end{bmatrix}.
\end{align}
Thus, we have
\begin{align}
\bJ(\bB)\cov(\bB_1)\bJ(\bB)^{\rmT}
&=P_1^2(\xi-1+4P_2)+4P_1(1+P_2)^3=:\sigma^2.
\end{align}
Furthermore, let
\begin{align}
\nn nf\left(\frac{1}{n}\sum_{i=1}^{n}\bB_i\right)
&=P_1(\|Z^n\|^2-n)+2P_1\langle X_2^n,Z^n\rangle-2(1+P_2)\langle X_1^n,Z^n\rangle\\*
&\qquad-2(1+P_1)\langle X_1^n,X_2  ^n\rangle\\
&=I.
\end{align}
It follows from using~\eqref{eq:LHS} to \eqref{eq: mac sic e1} and the function version Berry-Esseen theorem (Lemma \ref{lemma:berry esseen for func}) that 
\begin{align}
\rmP_{\rme,1}&\leq\Pr\{I\geq J\}+\frac{1}{\sqrt{n}}\label{eq:before apply Qinv}\\
&\leq \rmQ \left(\frac{J}{\sqrt{n}\sigma}\right)+O\left(n^{-1/4}\right).\label{eq:apply Qinv}
\end{align}
To make~\eqref{eq:apply Qinv} less than $\varepsilon_1$, one requires
\begin{align}
\frac{J}{\sqrt{n}\sigma}\geq \rmQ^{-1}\left(\varepsilon_1+O\left(n^{-1/4}\right)\right).\label{eq:apply taylor}
\end{align}
Applying Taylor Series expansion of $\rmQ^{-1}(\cdot)$ on $\varepsilon_1$ and using the definition of $J$ in~\eqref{eq:RHS}, the maximal value of $M_1$ such that $\rmP_{\rme,1}\leq \varepsilon_1$ satisfies
\begin{align}
\log M_1 &= nC\left(\frac{P_1}{1+P_2}\right)-\sqrt{nV_{1}(P_1,P_2)}\rmQ^{-1}(\varepsilon_1)+O(\log n).\label{eq: lower bound for sic 1}
\end{align}
Thus, a lower bound $L_1\geq \sqrt{nV_{1}(P_1,P_2)}\rmQ^{-1}(\varepsilon_1)$ is obtained.

Next we consider $\rmP_{\rme,2}$. Given that message $W_1$ is decoded correctly, the remaining error probability satisfies
\begin{align}
\rmP_{\rme,2}
&=\bbE[\min\{1,M_2\Pr\{\tilde{\imath}_2(\barX_2^n;Y^n|X_1^n)\geq\tilde{\imath}_2^n(X_2^n;Y^n|X_1^n)|X_1^n,X_2^n,Y^n\}\}]\label{eq:mac sic e2 rcu}\\*
&\leq\bbE[\min\{1,M_2g_2(\tilde{\imath}_2(X_2^n;Y^n|X_1^n),Y^n)\}]\label{eq:mac sic e2 take g}\\*
&\leq\Pr\left\{M_2\beta_2e^{-\tilde{\imath}_2(X_2^n;Y^n|X_1^n)}\geq\frac{1}{\sqrt{n}}\right\}+\frac{1}{\sqrt{n}}\label{eq:apply g func 2}\\*
&\leq\Pr\bigg\{\frac{\|Z^n\|^2}{2}-\frac{\|X_2^n+Z^n\|^2}{2(1+P_2)}\geq nC(P_2)-\log M_2-\log(\beta_2\sqrt{n})\bigg\}+\frac{1}{\sqrt{n}}\label{eq:same form as p2p},
\end{align}
where~\eqref{eq:mac sic e2 take g} follows from the definition of $g_2(\cdot)$ in~\eqref{eq:define g2},~\eqref{eq:apply g func 2} follows from the result in Lemma~\ref{lemma:g function} and the inequality in~\eqref{eq:min 1 j}, and~\eqref{eq:same form as p2p} follows from the definition of mismatched information density $\tilde{\imath}_2(\cdot)$ in~\eqref{eq:mismatch information density1,2}. 

Note that~\eqref{eq:same form as p2p} has the same form as the mismatched P2P case  in~\cite[Eq. (42)]{scarlett2017mismatch}. Thus, it follows from \cite[Theorem 1]{scarlett2017mismatch} that to ensure $\rmP_{\rme,2}\leq \varepsilon_2$, the maximal value of $M_2$ satisfies
\begin{align}
\log M_2&=nC(P_2)-\sqrt{nV(P_2)}\rmQ^{-1}(\varepsilon_2)+O(\log n).\label{eq: lower bound for sic 2}
\end{align}
A lower bound $L_2\geq \sqrt{nV(P_2)}\rmQ^{-1}(\varepsilon_2) $ is thus obtained.

The proof for case (ii) of Theorem~\ref{theorem:MAC-SIC ach} is completed by taking a union over all possible $(\varepsilon_1,\varepsilon_2)\in\bbR_+^2$ such that $\varepsilon_1+\varepsilon_2\leq \varepsilon$.

\section{Proof of Second-Order Asymptotics for RAC (Theorems~\ref{THEOREM:RAC JNN ACHIEVABILITY} and~\ref{THEOREM:RAC SIC ACHIEVABILITY})}
\label{sec:rac proof}
\subsection{Preliminaries}
\label{sec:rac proof preliminaries}
In this section, we present preliminaries to prove achievable second-order asymptotics for RAC in Theorems~\ref{THEOREM:RAC JNN ACHIEVABILITY} and~\ref{THEOREM:RAC SIC ACHIEVABILITY}.
Fix an integer $k\in[K]$. Fix any integers $(t,n_k)\in[k]\times \bbN$ and channel input-output vectors $(x_{[k]}^{n_k}, y^{n_k})\in(\bbR^{n_k})^{k+1}$. The generalized version of mismatched (conditional) information densities is defined as
\begin{align}
\tilde{\imath}^{n_k}_t(x_{[t]}^{n_k};y^{n_k}|x_{[t+1:k]}^{n_k})&:=n_kC(tP)+\frac{\| y^{n_k}-\sum_{i\in[t+1:k]}x^{n_k}_i \|^2}{2(1+tP)}-\frac{\| y^{n_k}-\sum_{i\in[k]}x^{n_k}_i \|^2}{2},\label{eq:RAC mis id jnn}\\
\tilde{\imath}_t^{n_k}(x_t^{n_k};y^{n_k}|x_{[t-1]}^{n_k})&:={n_k}C\bigg(\frac{P}{1+(k-t)P}\bigg)+\frac{\|y^{n_k}-\sum_{i\in[t-1]}x_i^{n_k}\|^2}{2(1+(k-t+1)P)}-\frac{\|y^{n_k}-\sum_{i\in[t]}x_i^{n_k}\|^2}{2(1+(k-t)P)}\label{eq:RAC mis id sic},
\end{align}
where $x_{[t]}^{n_k}=(x_1^{n_k},\ldots,x_t^{n_k})$ and $x_{[t+1:k]}^{n_k}=(x_{t+1}^{n_k},\ldots,x_k^{n_k})$.
When 
$t=k$, $x_{[t+1:k]}^{n_k}=0$ and $\sum_{i\in[t+1:k]}x^{n_k}_i=0$; when $t=1$, $x_{[t-1]}^{n_k}=0$ and $\sum_{i\in[t-1]}x_i^{n_k}=0$. The above definitions are consistent with the information density in a Gaussian RAC~\cite[Eq. (10)]{yavas2021gaussianmac}.  Analogously to \eqref{eq:define g}, given any positive real number $s\in\bbR_+$, define the following probability functions
\begin{align}
\tilg_t(s;y^{n_k},x_{[t+1:k]}^{n_k})&:=\Pr\left\{ \tilde{\imath}_t(\bar{X}_{[t]}^{n_k};y^{n_k}|x_{[t+1:k]}^{n_k})\geq s\right\}\label{eq:g function def rac jnn},\\
\barg_t(s;y^{n_k},x_{[t-1]}^{n_k})&:=\Pr\left\{ \tilde{\imath}_t(\bar{X}_{t}^{n_k};y^{n_k}|x^{n_k}_{[t-1]})\geq s\right\}\label{eq:g function def rac sic},
\end{align}
where $\bar{X}_{[t]}^{n_k}=(\barX_1^{n_k},\ldots,\barX_t^{n_k})$ is a random vector with distributions~\eqref{eq:first n0 codewords distribution} and~\eqref{eq:nj-1 to nj codewords distribution}.

Analogous to the proof of Lemma~\ref{lemma:g function}, we obtain the following result.
\begin{lemma}
\label{lemma:g function rac}
For each $t\in[k]$, there exists positive real numbers $(\beta_{t,1},\beta_{t,2})\in\bbR_+^2$ that only depend on $P$ such that for any $s\in\bbR_+$,
\begin{align}\label{eq:g function lemma result}
\tilg_t(s;y^{n_k},x_{[t+1:k]}^{n_k})\leq \beta_{t,1}\exp(-s),\\
\barg_t(s;y^{n_k},x_{[t-1]}^{n_k})\leq \beta_{t,2}\exp(-s).
\end{align}
\end{lemma}

\subsection{Proof of Theorem~\ref{THEOREM:RAC JNN ACHIEVABILITY} (RAC with JNN decoding)}
\label{sec:rac jnn proof}
In this section, we prove Theorem~\ref{THEOREM:RAC JNN ACHIEVABILITY} that gives the achievability bound of a mismatched RAC with JNN decoding. The proof judiciously combines the techniques in~\cite{scarlett2017mismatch,yavas2021gaussianmac}. While the decomposition of the error events follows a similar framework to~\cite{yavas2021gaussianmac}, our proof is different in bounding the ensemble error probability. Since the specific upper bound in~\cite[Eq. (195)-(197)]{yavas2021gaussianmac} was established for the Gaussian RAC, we provide a weaker bound in Lemma~\ref{lemma:g function rac} to handle arbitrary noise distributions.

Fix any $k\in[K]$. It follows from the mismatched code in Section~\ref{sec:RAC defs} that when there are $k$ active users among $K$ users, a decoding error event occurs if one of the following events occurs:
\begin{itemize}
\item $\calE_{\rm rep}$: two different active users transmit the same message,
\item $\calE_{{\rm time}}$: the receiver stops at a decoding time $n_t<n_k$ for some $t<k$,
\item $\calE_{{\rm msg}}$: the receiver stops at $n_t$ but decodes transmitted messages incorrectly.
\end{itemize}
Thus, the error probability satisfies
\begin{align}
\label{eq:rac error prob def}
\rmP_{\rme, k}^n
&\leq \Pr\{\calE_{\rm rep}\cup \calE_{{\rm time}}\cup\calE_{{\rm msg}}\}\\
&\leq \Pr\{\calE_{{\rm rep}}\}+\Pr\{\calE_{{\rm time}}\mathrm\cap\calE_{{\rm rep}}^\rmc\}+\Pr\{\calE_{{\rm msg}}\mathrm\cap\calE_{{\rm rep}}^\rmc\}\label{eq:rac error event condition}.
\end{align}

Notice that we require the uniqueness of codewords from each active user since every user adopts the same codebook and encoder, and thus, the repetition of codewords will require extra treatment. Such assumption is consistent with~\cite{yavas2021gaussianmac}, and the probability $\Pr\{\calE_{\rm rep}\}$ satisfies
\begin{align}
\Pr\{\calE_{\rm rep}\} 
&\leq \sum_{(i,j)\in[K]^2:i\neq j}\Pr\{W_j=W_j\}\\
&\leq \sum_{(i,j)\in[K]^2:i\neq j} \sum_{m\in[M]}\Pr\{W_i=W_j=m\}\\
&=\sum_{(i,j)\in[K]^2:i\neq j} M\Pr\{W_i=W_j=1\}\\
&= \sum_{(i,j)\in[K]^2:i\neq j}M\frac{1}{M^2}\\
&=\frac{K(K-1)}{M}\label{eq:rac rcu first term bound}.
\end{align}

We next bound $\Pr\{\calE_{{\rm time}}\mathrm\cap\calE_{{\rm rep}}^\rmc\}$. When $k\in[K]$ users are active, it follows from Def.~\ref{def:RAC JNN coding scheme} that
\begin{align}
\Pr\{\calE_{\rm time}\mathrm\cap\calE_{{\rm rep}}^\rmc\}
&\leq \Pr\bigg\{ \bigcup_{t:1\leq t< k} \bigg\{ \bigg|\frac{1}{n_t}\|Y^{n_t}\|^2-(1+tP)\bigg|\leq \lambda_t \bigg\}\bigcup\bigg\{ \bigg|\frac{1}{n_k}\|Y^{n_k}\|^2-(1+kP)\bigg|> \lambda_k \bigg\}\bigg\}\label{eq:rac rcu second term},
\end{align}
where the first part concerns the case where the receiver stops too early, and the second part concerns the case where the receiver does not stop at $n_k$. In Appendix~\ref{sec:appendix etime}, we prove that
\begin{align}
\Pr\{\calE_{\rm time}\mathrm\cap\calE_{{\rm rep}}^\rmc\}\leq O(n_k^{-\frac{1}{2}}).\label{eq:rac_second_error_term_bound}
\end{align}

Finally, we bound the probability $\Pr\{\calE_{{\rm msg}}\mathrm\cap\calE_{{\rm rep}}^\rmc\}$ that the receiver decodes messages of active users wrongly when the number of active users $k$ is estimated correctly.Let $(W_1,\ldots,W_k)\in[M]^k$ be the transmitted messages. Recall that the codebook consists of codewords $(X_1^{n_K},\ldots,X_M^{n_K})$. For any $k\in[K]$, $t\in[k]$ and $w^t=(w_1,\ldots,w_t)\in[M]^t$, $X^{n_k}(w^t)=(X_{w_1}^{n_k},\ldots,X_{w_t}^{n_k})$ denotes a collection of $t$ codewords, each with length $n_k$. Similarly, for any $w_{t+1}^k=(w_{t+1},\ldots,w_k)$,  $X^{n_k}(w_{t+1}^k)=(X_{w_{t+1}}^{n_k},\ldots,X_{w_k}^{n_k})$.

For each $t\in[k]$, let $\calE_t$ denote the error event, where $t$ out of $k$ messages are decoded incorrectly. It follows from the union bound that
\begin{align}
\Pr\{\calE_{{\rm msg}}\mathrm\cap\calE_{{\rm rep}}^\rmc\}
&\leq \Pr\bigg\{ \bigcup_{t\in[k]}\calE_t \mathrm\cap\calE_{{\rm rep}}^\rmc\bigg\}\\
&\leq \sum_{t\in[k]}\Pr\big\{\calE_t\mathrm\cap\calE_{{\rm rep}}^\rmc\big\}\label{eq:RCU first union bound}.
\end{align}

Fix any $t\in[k]$. Given that the error event $\calE_t\mathrm\cap\calE_{{\rm rep}}^\rmc$ happens, assume that the first $t$ messages $(W_1,\ldots,W_t)$ out of all messages $(W_1,\ldots,W_k)$ are decoded wrongly. In this case, the JNN decoder operates as follows:
\begin{align}
\Psi_k(Y^{n_k})&=\argmin_{(\barw_1,\ldots,\barw_t)\in[M]^t}\Big\|{Y^{n_k}-\sum_{i=1}^{t} X^{n_k}(\barw_i)-\sum_{j=t+1}^{k} X^{n_k}(W_j)}\Big\|^2 \\
&\nn= \argmax_{(\barw_1,\ldots,\barw_t)\in[M]^t} \frac{n_k}{2}\log(1+tP)+\frac{\| Y^{n_k}-\sum_{j=t+1}^{k}X^{n_k}(W_j) \|^2}{2(1+tP)}\\
&\qquad\qquad\qquad\;\;\;\;\;-\frac{\| Y^{n_k}-\sum_{i=1}^{t}X^{n_k}(\barw_i)-\sum_{j=t+1}^{k}X^{n_k}(W_j) \|^2}{2}\\
&=\argmax_{(\barw_1,\ldots,\barw_t)\in[M]^t}\tilde{\imath}_t^{n_k}(X^{n_k}(\barw^t);Y^{n_k}|X^{n_k}(W_{t+1}^k))\label{eq:nn equivalent RAC},
\end{align}
where \eqref{eq:nn equivalent RAC} follows from the definition of $\tilde{\imath}_t^{n_k}(\cdot)$ in \eqref{eq:RAC mis id jnn}, $\barw^t=(\barw_1,\ldots,\barw_t)$, $X^{n_k}(\barw^t)=(X^{n_k}(\barw_1),\ldots,X^{n_k}(\barw_t))$, $W_{t+1}^k=(W_{t+1},\ldots,W_k)$ and $X^{n_k}(W_t^k)=(X^{n_k}(W_{t+1}),\ldots,X^{n_k}(W_k))$.

The result in~\eqref{eq:nn equivalent RAC} implies that under error event $\calE_t\mathrm\cap\calE_{{\rm rep}}^\rmc$, maximizing the mismatched information density $\tilde{\imath}_s^{n_k}(X^{n_k}(\barw^t);Y^{n_k}|X^{n_k}(W_t^k))$ over $\barw_{[t]}\in[M]^t$ is equivalent to JNN decoding. For ease of notation, given $\barw^t\in[M]^t$, define the event 
\begin{align}
\calB(\barw^t)
:=\Big\{\tilde{\imath}_t^{n_k}(X^{n_k}(\barw^t);Y^{n_k}|X^{n_k}(W_{t+1}^k))\geq\tilde{\imath}_t^{n_k}(X^{n_k}(W^t);Y^{n_k}|X^{n_k}(W_{t+1}^k))\Big\}.
\end{align}

It follows that
\begin{align}
\nn&\Pr\{\calE_t\mathrm\cap\calE_{{\rm rep}}^\rmc\}\\*
&\leq \binom{k}{t}\Pr\Big\{ \bigcup_{\barw^t\in[M]^t:~\barw^t\neq W^t} \calB(\barw^t)\Big\}\label{eq:RAC rcu 1}\\
&\leq \bbE\bigg[\binom{k}{t}\Pr\Big\{ \bigcup_{\barw^t\in[M]^t:~\barw^t\neq W^t} \calB(\barw^t)|X^{n_k}(W^k),Y^{n_k}\Big\}\bigg]\label{eq:RAC rcu 2}\\
&\leq \bbE\Bigg[\binom{k}{t}\binom{M-k}{t} 
\Pr\{\calB(\barw^t)|X^{n_k}(W^k),Y^{n_k}\}
\Bigg]\label{eq:RAC rcu 3}\\
&\leq\bbE\bigg[\min\bigg\{1,\binom{k}{t}\binom{M-k}{t}\Pr\{\calB(\barw^t)|X^{n_k}(W^k),Y^{n_k}\}
\bigg\}\bigg]\label{eq:bounding decoding error prob-1}\\
&\leq \bbE\bigg[ \min\bigg\{ 1,\binom{k}{t}M^t\tilg_t(\tilde{\imath}_t(X^{n_k}(W^t);Y^{n_k}|X^{n_k}(W_{t+1}^k));Y^{n_k},X^{n_k}(W_{t+1}^k)) \bigg\} \bigg]\label{eq:bounding decoding error prob-2}\\
&\leq \bbE\bigg[ \min\bigg\{ 1, \binom{k}{t}M^t\beta_{t,1}\exp(-\tilde{\imath}_t^{n_k}(X^{n_k}(W^t);Y^{n_k}|X^{n_k}(W_{t+1}^k)))\bigg\} \bigg]\label{eq:bounding decoding error prob-3}\\
&\leq  \Pr\Bigg\{ \binom{k}{t}M^t\beta_{t,1}\exp(-\tilde{\imath}_t^{n_k}(X^{n_k}(W^t);Y^{n_k}|X^{n_k}(W_{t+1}^k)))\geq1/\sqrt{n_k} \Bigg\}+\frac{1}{\sqrt{n_k}} ,\label{eq:bounding decoding error prob-4}
\end{align}
where~\eqref{eq:RAC rcu 1} follows due to symmetry and the fact that there are $k\choose t$ possibilities of decoding $t$ messages incorrectly among $k$ messages, \eqref{eq:RAC rcu 2} follows from the law of total expectation, \eqref{eq:RAC rcu 3} follows from symmetry of the codebook, \eqref{eq:bounding decoding error prob-1} follows since any probability is no greater than 1,  \eqref{eq:bounding decoding error prob-2} follows from the definition of $\tilg$ in~\eqref{eq:g function def rac jnn} and the fact that $\binom{M-k}{t}\leq M^t$, \eqref{eq:bounding decoding error prob-3} follows from the result in Lemma~\ref{lemma:g function rac}, and~\eqref{eq:bounding decoding error prob-4} follows from the inequality in~\eqref{eq:min 1 j}. 

The probability term in~\eqref{eq:bounding decoding error prob-4} can be further upper bounded as follows:
\begin{align}
\nn&\Pr\left\{ \binom{k}{t}M^t\beta_{t,1}\exp(-\tilde{\imath}_t^{n_k}(X^{n_k}(W^t);Y^{n_k}|X^{n_k}(W_{t+1}^k)))\geq\frac{1}{\sqrt{n_k}} \right\}\\
&\leq\Pr\left\{\tilde{\imath}_t^{n_k}(X^{n_k}(W^t);Y^{n_k}|X^{n_k}(W_{t+1}^k))\leq \log\left( \binom{k}{t}M^t\beta_{t,1}\sqrt{n_k} \right)  \right\}\label{eq:rac before be 1}\\
&\nn\leq\Pr\Bigg\{(\|Z^{n_k}\|^2-n_k)tP-2\sum_{i=1}^{t}\langle X_i^{n_k},Z^{n_k} \rangle-2\sum_{1\leq i<j\leq t}\langle X_i^{n_k},X_j^{n_k} \rangle \\
&\qquad\qquad\qquad\qquad\geq2(1+tP)\Bigg(n_kC(tP)- \log\Bigg( \binom{k}{t}M^t\beta_{t,1}\sqrt{n_k} \Bigg) \Bigg)  \Bigg\},\label{eq:rac before be 2}
\end{align}
where~\eqref{eq:rac before be 1} follows by taking logarithm at both side and~\eqref{eq:rac before be 2} follows from the definition of mismatched information density in~\eqref{eq:RAC mis id jnn} and the fact that each codeword is generated independently from the same distribution.

Fix any $i\in[n]$. Define the set $\calD_t:=\{(a,b)\in[t]^2:~a\neq b\}$. For each $l\in[t]$ and $(p,q)\in\calD_t$, define the following zero-mean random variables:
\begin{align}
\tilA_{1,i} := Z_i^2-1,~
\tilA_{2,i}^{(l)}:=\sqrt{P}\tilX_{l,i}Z_i,~\tilA_{3,i}^{(l)}:=\tilX^2_{l,i}-1,~
\tilA_{4,i}^{(p,q)}:=P\tilX_{p,i}\tilX_{q,i},\label{eq:rac jnn a last}
\end{align}
and let 
\begin{align}
\tilde{\bA}_{i} := (\tilA_{1,i},\{\tilA_{2,i}^{(l)}\}_{l\in[t]},\{\tilA_{3,i}^{(l)}\}_{l\in[t]},\{\tilA_{4,i}^{(p,q)}\}_{(p,q)\in\calD_t}).\label{eq:def tilbAvec}
\end{align}
Thus $\tilde{\bA}_{[n_k]}=(\tilde{\bA}_{1},\ldots,\tilde{\bA}_{n_k})$ are zero-mean  i.i.d. random vectors with finite third moments. The covariance matrix of $\tilde{\bA}_{i}$ is 
\begin{align}
\label{eq:cov til bA}
\cov(\tilde{\bA}_{1})=\diag\begin{bmatrix}
\xi -1,P\cdot\bone_t,2\cdot\bone_t,P^2\cdot\bone_{t(t-1)}
\end{bmatrix}.
\end{align}

Given any $a_1\in\bbR$, $\ba_2=(a_{2,1},\ldots,a_{2,t})\in\bbR^t$, $\ba_3=(a_{3,1},\ldots,a_{3,t})$ and $\ba_4=(a_4^{p,q})_{(p,q)\in\calD_t}\in\bbR^{t(t-1)}$, define the following function:
\begin{align}
\label{eq:f_s definition}
\tilf(a_1,\ba_2,\ba_3,\ba_4)
&:=tPa_1-\sum_{l\in[t]}\frac{2a_{2,t}}{\sqrt{1+a_{3,l}}}-\sum_{(p,q)\in\calD_t}\frac{2a_4^{(p,q)}}{\sqrt{1+a_{3,p}}\sqrt{1+a_{3,q}}}.
\end{align}
It follows that
\begin{align}
n_k\tilf\bigg(\frac{1}{n_k}\sum_{i\in[n_k]}\tilde{\bA}_{i}\bigg) = (\|Z^{n_k}\|^2-n_k)tP-2\sum_{i\in[t]}\langle X_i^{n_k},Z^{n_k} \rangle-2\sum_{(i,j)\in\calD_t}\langle X_i^{n_k},X_j^{n_k} \rangle,
\end{align}
which is exactly the left hand side of~\eqref{eq:rac before be 2}. The Jacobian of $\tilf$ evaluated at $\bzero$ is
\begin{align}
\bJ=\begin{bmatrix}
tP,-2\cdot\bone_t,\bzero_t,-2\cdot\bone_{t(t-1)}
\end{bmatrix}.
\end{align}
Thus,
\begin{align}
\bJ\cov(\tilde{\bA}_{1})\bJ^{\rm T}
&=tP((\xi+1)tP-2P+4)=:\tilde{\sigma}_t^2.
\end{align}

It follows from the function version Berry-Esseen Theorem (Lemma~\ref{lemma:berry esseen for func}) and the results in \eqref{eq:bounding decoding error prob-4} and~\eqref{eq:rac before be 2} that
\begin{align}
\label{eq:berry-esseen result}
\Pr\{\calE_t\mathrm\cap\calE_{{\rm rep}}^\rmc\}
&\leq \rmQ\Bigg(\frac{2(1+tP)\Big(n_kC(tP)- \log\Big( \binom{k}{t}M^t\beta_{t,1}\sqrt{n_k}\Big) \Big)}{\tilde{\sigma}_t\sqrt{n_k}}\Bigg) + O\left(n_k^{-1/4}\right).
\end{align}

Fix any $(\varepsilon_{k,1},\ldots,\varepsilon_{k,k})\in(0,1)^k$ such that $\sum_{t\in[k]}\varepsilon_{k,t}\leq \varepsilon_k$. Choose $M$ such that
\begin{align}
\log M&=\min_{t\in[k]}\frac{1}{t}\bigg(n_k C(tP)-\sqrt{n_k(V(tP)+V_{\rm{cr}}(t,P))}\rmQ^{-1}(\varepsilon_{k,t})+O(\log n_k)\bigg)\label{eq:rac jnn min}.
\end{align}
It follows from \eqref{eq:berry-esseen result} that for each $t\in[k]$,
\begin{align}
\Pr\{\calE_t\mathrm\cap\calE_{{\rm rep}}^\rmc\}\leq\varepsilon_{k,t}\label{rac:ach:dep:upp}
\end{align}

Since $\frac{1}{t}C(tP)>\frac{1}{k}C(kP)$ for $t<k$~\cite[Lemma 1]{yavas2021gaussianmac}, when $n_k$ is sufficiently large, the minimum of~\eqref{eq:rac jnn min} is achieved by $t=k$.

Thus, when $n_k$ is large enough, it follows from \eqref{eq:rac error event condition}, \eqref{eq:rac rcu first term bound}, \eqref{eq:rac_second_error_term_bound}, \eqref{eq:rac jnn min} and \eqref{rac:ach:dep:upp} that
\begin{align}
\log M&=\frac{1}{k}\bigg(n_k C(kP)-\sqrt{n_k(V(kP)+V_{\rm{cr}}(k,P))}\rmQ^{-1}(\varepsilon_{k,k})+O(\log n_k)\bigg),
\end{align}
and
\begin{align}
\rmP_{\rme,k}^{n_k}\leq \varepsilon_k.
\end{align} 
The proof of Theorem~\ref{THEOREM:RAC JNN ACHIEVABILITY} is completed by letting $\varepsilon_{k,t}\rightarrow 0$ for every $t\in[k-1]$ and $\varepsilon_{k,k}\rightarrow\varepsilon_k$.

\subsection{Proof of Theorem~\ref{THEOREM:RAC SIC ACHIEVABILITY} (RAC with SIC decoding)}
\label{sec:rac sic proof}

In this section, we prove Theorem~\ref{THEOREM:RAC SIC ACHIEVABILITY} that gives an achievability bound of a mismatched RAC with SIC decoding.
The error probability term in~\eqref{eq:rac error prob} can also be written as~\eqref{eq:rac error event condition}:
\begin{align}
\rmP_{\rme,k}^n\leq \Pr\{\calE_{{\rm rep}}\}+\Pr\{\calE_{{\rm time}}\cap\calE_{{\rm rep}}^\rmc\}+\Pr\{\calE_{{\rm msg}}\cap\calE_{{\rm rep}}^\rmc\},
\end{align}
where the first two terms is already bounded by $O(n_k^{-\frac{1}{2}})$ according to~\eqref{eq:rac rcu first term bound} and~\eqref{eq:rac_second_error_term_bound}. Thus, we only need to bound the third term that depicts the case when decoding messages are incorrect using SIC decoding. Since the SIC decoder decodes each codeword separately, the error occurs when the first codeword error occurs, i.e.,
\begin{align}
\Pr\{\calE_{\rm msg}\cap\calE_{\rm rep}^\rmc\}&=\Pr\{\{\hatW_1,\ldots,\hatW_k\}\neq\{W_1,\ldots,W_1\}\}\\
&=\Pr\{\{\hatW_1\neq W_1\}\cup\{\hatW_1= W_1,\hatW_2\neq W_2\}\cup\ldots\cup\{\{\hatW_i= W_i\}_{i=1}^{k-1},\hatW_k\neq W_k\}\}\\
&\leq\sum_{r=1}^k\Pr\{\{\hatW_i=W_i\}_{i=1}^{r-1},\hatW_r\neq W_r\}.\label{eq:msger in rac sic}
\end{align}
Then for any $r\in[k]$, the probability that message $W_r$ decoded wrongly is
\begin{align}
\nn&\Pr\{\{\hatW_i=W_i\}_{i=1}^{r-1},\hatW_r\neq W_r\}\\
&\leq\bbE[\min\{1,M\Pr\{\tilde{\imath}_r^{n_k}(X^{n_k}(\barW_r);Y^{n_k}|X^{n_k}(W^{r-1}))\geq \tilde{\imath}_r^{n_k}(X^{n_k}(W_r);Y^{n_k}|X^{n_k}(W^{r-1}))|X^{n_k}(W^{r}),Y^{n_k}\}\}]\label{eq:rac sic msger 1}\\
&\leq\Pr\{M\barg_r(\tilde{\imath}_r^{n_k}(X^{n_k}(W_r);Y^{n_k}|X^{n_k}(W^{r-1})),Y^{n_k};X^{n_k}(W^{r-1}))\geq1/\sqrt{{n_k}}\}+1/\sqrt{{n_k}}\label{eq:rac sic msger 2}\\
&\leq\Pr\{M\barG_r\exp(-\tilde{\imath}_r^{n_k}(X^{n_k}(W_r);Y^{n_k}|X^{n_k}(W^{r-1})))\geq1/\sqrt{{n_k}}\}+1/\sqrt{{n_k}}\label{eq:rac sic msger 3}\\
&\nn=\Pr\bigg\{(1+(k-r+1)P)\bigg\|Y^{n_k}-\sum_{i=1}^{r}X_i^{n_k}\bigg\|^2-(1+(k-r)P)\bigg\|Y^{n_k}-\sum_{i=1}^{r-1}X_i^{n_k}\bigg\|^2\geq\\ 
&\qquad\quad\;\;2(1+(k-r+1)P)(1+(k-r)P)\bigg({n_k}\rmC\bigg(\frac{P}{1+(k-r)P}\bigg)-\log M\barG_r-\log\sqrt{n_k}\bigg)\bigg\}+1/\sqrt{{n_k}}\label{eq:RAC SIC LHS},
\end{align}
where~\eqref{eq:rac sic msger 1} follows from RCU bound,~\eqref{eq:rac sic msger 2} follows from the definition of $\barg_r$ in~\eqref{eq:g function def rac sic} and the inequality in~\eqref{eq:min 1 j},~\eqref{eq:rac sic msger 3} follows from Lemma~\ref{lemma:g function rac}, and~\eqref{eq:RAC SIC LHS} follows from the definition of mismatched information density in~\eqref{eq:RAC mis id sic}.
For any $l\in[k]$ and $(p,q)\in[k]^2,p\neq q$, define the following zero-mean i.i.d. random variables for subsequent analyses:
\begin{align}
\nn&\tilB_{1,i}:=1-Z_i^2,
\tilB_{2,i}:=\sqrt{P}\tilX_{r,i}Z_i,
\tilB_{3,i}:=(\tilX_{r,i})^2-1,\\
&\tilB_{4,i}^{(l)}:=P\tilX_{r,i}\tilX_{l,i},
\tilB_{5,i}^{(l)}:=\sqrt{P}\tilX_{l,i}Z_i,
\tilB_{6,i}^{(l)}:=(\tilX_{l,i})^2-1,
\tilB_{7,i}^{(p,q)}:=P\tilX_{p,i}\tilX_{q,i},\label{eq:rac sic a last}
\end{align}
let $\tilde{\bB}_{i} := (\tilB_{1,i},\tilB_{2,i},\tilB_{3,i},\{\tilB_{4,i}^{(l)}\}_{l=1}^{k},\{\tilB_{5,i}^{(l)}\}_{l=1}^{k},\{\tilB_{6,i}^{(l)}\}_{l=1}^{k},\{\tilB_{7,i}^{(p,q)}\}_{p,q=1}^{k})$. Thus $\tilde{\bB}_{[n_k]}=(\tilde{\bB}_{1},\ldots,\tilde{\bB}_{n_k})$ are zero-mean  i.i.d. random vectors with finite third moments. One can verify that the covariance matrix of $\tilde{\bB}_{1}$ is 
\begin{align}
\cov(\tilde{\bB}_{1})=\diag\begin{bmatrix}
\xi -1,P,2,P^2\cdot\bone_k,P\cdot\bone_k,2\cdot\bone_k,P^2\cdot\bone_{k(k-1)}
\end{bmatrix}.
\end{align}
Furthermore, define the following function:
\begin{align}
\nn &\barf(a_1,a_2,a_3,\{a_4^{(l)}\}_{l=r+1}^k,\{a_5^{(l)}\}_{l=r+1}^k,\{a_6^{(l)}\}_{l=r+1}^k,\{a_7^{(p,q)}\}_{p,q=r+1}^k)\\*
&:=\nn P\Bigg(a_1-2\sum_{l=r+1}^{k}\frac{a_5^{(l)}}{\sqrt{1+a_6^{(l)}}}-2\sum_{r+1\leq p<q\leq k}\frac{a_7^{(p,q)}}{\sqrt{1+a_6^{(p)}}\sqrt{1+a_6^{(q)}}}\Bigg)\\
&\qquad+2(1+(k-r)P)\Bigg(\frac{a_2}{\sqrt{1+a_3}}+\sum_{l=r+1}^{k}\frac{a_4^{(l)}}{\sqrt{1+a_3}\sqrt{1+a_6^{(l)}}}\Bigg).\label{eq: RAC SIC f}
\end{align}
One can verify that
\begin{align}
\nn-nf\bigg(\frac{1}{n}\sum_{i=1}^{n}\tilde{\bB}_i\bigg)&=P(\|Z^n\|^2-n)+2P\sum_{i=r+1}^{k}\Braket{X_i^n,Z^n}+2P\sum_{r+1\leq i<j\leq k}\Braket{X_i^n,X_j^n}\\
&\quad-2(1+(k-r)P)\bigg(\Braket{X_r^n,Z^n}+\sum_{i=r+1}^{k}\Braket{X_r^n,X_i^n}\bigg).
\end{align}
which is the left-hand side of the probability term in~\eqref{eq:RAC SIC LHS}. The Jacobian of $\barf$ evaluated at $\bzero$ is
\begin{align}
\bJ=\begin{bmatrix}
P, 2(1+(k-r)P), 0, 2(1+(k-r)P)\cdot\bone_k, -2P\cdot\bone_k, 0\cdot\bone_k, -2P\cdot\bone_{k(k-1)}   
\end{bmatrix}.
\end{align}
Thus, for any $r\in[k]$, the dispersion term for the Berry-Esseen Theorem in Lemma~\ref{lemma:berry esseen for func} is given by
\begin{align}
\bar{\sigma_r}:=\bJ\cov(\tilde{\bB}_{1})\bJ^{\rm T}=P^2(\xi-1)+4(1+(k-r)P)^2P+4(k-r)(1+(k-r)P)^2P^2+4(k-r)P^3+4P^4\binom{k-r}{2}.
\end{align}
Applying the Berry-Esseen Theorem in Lemma~\ref{lemma:berry esseen for func} and invoking~\eqref{eq:RAC SIC LHS}, we obtain
\begin{align}
\nn\Pr\{\{\hatW_i=W_i\}_{i=1}^{r-1},\hatW_r\neq W_r\}&\leq \rmQ\bigg(\frac{2(1+(k-r+1)P)(1+(k-r)P)(n_k\rmC\big(\frac{P}{1+(k-r)P}\big)-\log M\barG_r-\log\sqrt{n_k})}{\bar{\sigma_r}\sqrt{n_k}}\bigg)\\*
&\qquad+O\left(n_k^{-1/4}\right).\label{eq:rac sic after be}
\end{align}
For any $\varepsilon_{k,r}>0$, let
\begin{align}
\log M&\leq \min_{r\in[k]}n_k \rmC\bigg(\frac{P}{1+(k-r)P}\bigg)-\sqrt{n_kV_{\rm{rs}}(k,r,P)}\rmQ^{-1}(\varepsilon_{k,r})+O(\log n_k),\label{eq:rac sic min}
\end{align}
we have $\Pr\{\{\hatW_i=W_i\}_{i=1}^{r-1},\hatW_r\neq W_r\}\leq\varepsilon_{k,r}$ for all $r\in[k]$. Since $\rmC\bigg(\frac{P}{1+(k-r)P}\bigg)$ is monotonically decreasing with $r$, the minimum of~\eqref{eq:rac sic min} reaches when $r=1$.
Thus, the probability that the message set is decoded wrong in~\eqref{eq:msger in rac sic} can be bounded as
\begin{align}
\Pr\{\calE_{{\rm msg}}\cap\calE_{{\rm rep}}^\rmc\}&\leq\sum_{r=1}^k\Pr\{\{\hatW_i=W_i\}_{i=1}^{r-1},\hatW_r\neq W_r\}\\
&\leq \sum_{r=1}^{k}\varepsilon_{k,r}=\varepsilon_k,\label{eq:rac sic final}
\end{align}
where~\eqref{eq:rac sic final} follows by letting $\varepsilon_{k,r}\rightarrow0$ for every $r\in[2:k]$ and $\varepsilon_{k,1}\rightarrow\varepsilon_k$.The proof of Theorem~\ref{THEOREM:RAC SIC ACHIEVABILITY} is thus completed by combining~\eqref{eq:rac rcu second term} and~\eqref{eq:rac sic final}.

\section{Conclusion}
\label{sec:conclusion}
We studied a two-user MAC and a RAC with additive arbitrary noise, proposed mismatched coding schemes using spherical codebooks with either JNN or SIC decoding, and bounded the achievable second-order rate regions. Our results implied that in a mismatched MAC, although the first-order rate regions of JNN and SIC decoding are identical, the SIC decoding is strictly inferior to JNN decoding in finite blocklength performance by having a smaller second-order rate region. Furthermore, our results showed that in the mismatched RAC, SIC decoding is inferior to JNN decoding even in the first-order rate region. When specialized to AWGN, our results yielded alternative proofs of corresponding results in~\cite{Ahlswede1973MultiwayCC} up to second-order.

There are several avenues for future research. Firstly, we only derived second-order achievability results. Without an ensemble converse result~\cite{scarlett2017mismatch,gallager_ensemble}, we could not check whether our derivations are tight. Thus, it is of great interest to derive a matching ensemble converse result. For this purpose, novel ideas beyond those for mismatched P2P cases~\cite{scarlett2017mismatch,zhou2019jscc,zhou2023sr} and Gaussian MAC/RAC~\cite{molavianjazi2015second,yavas2021gaussianmac} are required. Secondly, to establish the third-order asymptotic result, one needs to refine Lemmas~\ref{lemma:g function} and~\ref{lemma:g function rac} by having a multiplier $\frac{1}{\sqrt{n}}$ as in~\cite[Section IV.E]{tantomamichel2015} and~\cite[Lemma 6]{yavas2021gaussianmac} for the matched case with Gaussian noise. This effort is challenging and thus left as future work. Thirdly, we did not consider fading channels or multi-antenna communication. For practical next-generation communications, MIMO communication over quasi-static fading channels is highly relevant. To generalize our results, the ideas in~\cite{molavianjazi2013fading,weiyang2014mimo,zhou2019multiConnectivity} might be helpful. Finally, we focused on the reliable transmission of messages. In certain applications, covert transmission~\cite{zhou2023covert,yu2021awgncovert,shiyuan2021covertmimo} is critical to ensure communication secrecy. It is worthwhile to generalize our results to covert mismatched multi-user communication scenarios. For this purpose, one could borrow ideas from~\cite{zhou2020classification,zhou2023achievableerrorexponentstwophase,ShaoLun2023secondorderasymptoticscovert,bloch2019secondorderCovert}.

\appendix
\subsection{Proof of inferiority of SIC for mismatched RAC in the first-order rate region}
\label{sec:appendix rac first order}
To justify the inequality $\rmC(kP)/k>\rmC(P/(1+(k-1)P))$ for any integer $k\geq 2$, it suffices to show that $\mu(k):=k\rmC(P/(1+(k-1)P))-\rmC(kP)<0$ for $k\geq 2$. Using the definition of $\rmC(\cdot)$ in~\eqref{eq:capacity def}, $\mu(k)$ satisfies
\begin{align}
\mu(k)&=\frac{1}{2}\log\Big(\frac{1+kP}{1+(k-1)P}\Big)^k-\frac{1}{2}\log(1+kP)\\
&=\frac{1}{2}\log\frac{(1+kP)^{k-1}}{(1+(k-1)P)^k}\\
&=\frac{1}{2}\log\frac{\sum_{t=0}^{k-1}\binom{k-1}{k-1-t}(kP)^t}{\sum_{t=0}^{k}\binom{k}{k-t}((k-1)P)^t},\label{eq:comparing rac sic and jnn}
\end{align}
where~\eqref{eq:comparing rac sic and jnn} follows the polynomial expansion which states that $(1+a)^b=\sum_{t=0}^b\binom{b}{t}a^t$. 

For each $t\in[0:k-1]$, let $\mu_1(t):=\binom{k-1}{k-1-t}(kP)^t$ and $\mu_2(t):=\binom{k}{k-t}((k-1)P)^t$. This way,~\eqref{eq:comparing rac sic and jnn} is equivalent to
\begin{align}
\mu(k)&=\frac{1}{2}\log\frac{\sum_{t=0}^{k-1}\mu_1(t)}{\sum_{t=0}^{k-1}\mu_2(t)+\mu_2(k)}.\label{eq:simp comparing rac sic and jnn}
\end{align}
Note that $\mu_2(k)=((k-1)P)^k>0$. To show $\mu(k)<0$ for $k\geq 2$, it suffices to show that $\mu_2(t)\geq\mu_1(t)$ for each $t\in[0:k-1]$. It follows that
\begin{align}
\frac{\mu_2(t)}{\mu_1(t)}&=\frac{k}{k-t}\frac{(k-1)^t}{k^t}\\
&=\frac{(1-\frac{1}{k})^t}{1-\frac{t}{k}}\\
&=\left\{
\begin{array}{ll}
1 &  t=0,1\\
1+\frac{1}{1-\frac{t}{k}}\sum_{i=1}^{l}\bigg(\binom{t}{2i}\frac{1}{k^{2i}}-\binom{t}{2i+1}\frac{1}{k^{2i+1}}\bigg) &  t=2l+1,l\in\bbN_+, \\ 
1+\frac{1}{1-\frac{t}{k}}\bigg(\sum_{i=1}^{l-1}\bigg(\binom{t}{2i}\frac{1}{k^{2i}}-\binom{t}{2i+1}\frac{1}{k^{2i+1}}\bigg)+\frac{1}{k^{2l}}\bigg)  & t=2l,~l\in\bbN_+.
\end{array}
\right.\label{eq:mu2 over mu1}
\end{align}
For each $i\in\bbN_+$, we have
\begin{align}
\binom{t}{2i}\frac{1}{k^{2i}}-\binom{t}{2i+1}\frac{1}{k^{2i+1}}&=\frac{t!}{(t-2i)!(2i)!k^{2i}}-\frac{t!}{(t-2i-1)!(2i+1)!k^{2i+1}}\\
&=\frac{t!}{(t-2i-1)!(2i)!k^{2i+1}}\bigg(\frac{k}{t-2i}-\frac{1}{2i+1}\bigg)>0,\label{eq:mu2 over mu1 reason}
\end{align}
where~\eqref{eq:mu2 over mu1 reason} follows because $t\leq k-1$, $i\leq l$ and $t\geq 2l$ imply that $k>t-2i$ and thus $\frac{k}{t-2i}>1>\frac{1}{2i+1}$. Therefore,~\eqref{eq:mu2 over mu1} shows that $\mu_2(t)\geq \mu_1(t)$ for each $t\in[0:k-1]$. Invoking~\eqref{eq:simp comparing rac sic and jnn},~\eqref{eq:mu2 over mu1} and~\eqref{eq:mu2 over mu1 reason}, it follows that $\mu(k)<0$ when $k\geq 2$, and thus $\rmC(kP)/k>\rmC(P/(1+(k-1)P))$.

\subsection{Proof of (178)}
\label{sec:appendix etime}
By the union bound, it suffices to prove
\begin{align}
\Pr\bigg\{ \bigg|\frac{1}{n_k}\|Y^{n_k}\|^2-(1+kP)\bigg|> \lambda_k \bigg\}\leq O(n_k^{-\frac{1}{2}}),\label{eq:time lambda k}
\end{align}
show that for each $t\in[k-1]$,
\begin{align}
\Pr\bigg\{ \bigg|\frac{1}{n_t}\|Y^{n_t}\|^2-(1+tP)\bigg|\leq \lambda_t \bigg\}\leq O(n_t^{-\frac{1}{2}}).\label{eq:time lambda t}
\end{align}
Using the channel model concerning $Y^{n_k}$ in~\eqref{eq:RAC channel output}, we have
\begin{align}
\nn&\Pr\bigg\{ \bigg|\frac{1}{n_k}\|Y^{n_k}\|^2-(1+kP)\bigg|> \lambda_k \bigg\}\\
=&\Pr\Bigg\{\Bigg|\frac{1}{n_k}\Bigg(\|Z^{n_k}\|^2-n_k+2\sum_{i=1}^{k}\langle X_i^{n_k},Z^{n_k} \rangle+2\sum_{1\leq i<j\leq k}\langle X_i^{n_k},X_j^{n_k} \rangle\Bigg)\Bigg|>\lambda_k\Bigg\}.\label{eq:time lambda k 2}
\end{align}
Recall the definitions of random variables $\tilA_{1,i},\tilA_{2,i}^{(l)},\tilA_{3,i}^{(l)},\tilA_{4,i}^{(p,q)}$ in~\eqref{eq:rac jnn a last}, the definition of random vector $\tilde{\bA}_{i}$ in~\eqref{eq:def tilbAvec} and define the following function:
\begin{align}
\label{eq:tilf prime definition}
\tilf'(a_1,\{a_2^{(l)}\}_{l=1}^k,\{a_3^{(l)}\}_{l=1}^k,\{a_4^{(p,q)}\}_{p,q=1}^{k}):=a_1+\sum_{l=1}^{k}\frac{2a_2^{(l)}}{\sqrt{1+a_3^{(l)}}}+\sum_{1\leq p<q\leq k}\frac{2a_4^{(p,q)}}{\sqrt{1+a_3^{(p)}}\sqrt{1+a_3^{(q)}}}.
\end{align}
One can verify that 
\begin{align}
\tilf':=\tilf'\bigg(\frac{1}{n_k}\sum_{i=1}^{n_k}\tilde{\bA}_{i}\bigg)
=\frac{1}{n_k}\Bigg(\|Z^{n_k}\|^2-n_k+2\sum_{i=1}^{k}\langle X_i^{n_k},Z^{n_k} \rangle+2\sum_{1\leq i<j\leq k}\langle X_i^{n_k},X_j^{n_k} \rangle\Bigg)\label{eq:defi tilfp}.
\end{align}
The Jacobian of $\tilf'$ evaluated at $\bzero$ is
\begin{align}
\bJ=\begin{bmatrix}
1,2, 0,2
\end{bmatrix}.
\end{align}
Thus, the dispersion term for the Berry-Esseen Theorem in Lemma~\ref{lemma:berry esseen for func} is given by
\begin{align}
\tilde{\sigma}_k^{'2}:= \bJ\cov(\tilde{\bA}_{1})\bJ^{\rm T}=\xi-1+4Pk+4k(k-1)P^2,
\end{align}
where $\cov(\tilde{\bA}_{1})$ is given in~\eqref{eq:cov til bA}. Applying Lemma~\ref{lemma:berry esseen for func}, we can bound~\eqref{eq:time lambda k 2} as follows:
\begin{align}
\Pr\bigg\{ \bigg|\frac{1}{n_k}\|Y^{n_k}\|^2-(1+kP)\bigg|> \lambda_k \bigg\}&\leq2\Big(\rmQ\Big(\frac{n_k^\frac{1}{2}\lambda_k}{\tilde{\sigma}'_k}\Big)+O(n_k^{-\frac{1}{2}})\Big)\label{eq:time lambda k 3}\\
&\leq 2\Big(\exp\Big(-\frac{n_k\lambda^2_k}{2\tilde{\sigma}^{'2}_k}\Big)+O(n_k^{-\frac{1}{2}})\Big)\label{eq:time lambda k 4}\\
&\leq O(n_k^{-\frac{1}{2}}),
\end{align}
where~\eqref{eq:time lambda k 4} follows because $\rmQ(t)\leq\exp(-t^2/2)$ for $t>0$. This verifies~\eqref{eq:time lambda k}.

Similarly, by taking the definition of $\|Y^{n_k}\|^2$ in, the left hand side of~\eqref{eq:time lambda t} can be written as
\begin{align}
\nn&\Pr\bigg\{ \bigg|\frac{1}{n_t}\|Y^{n_k}\|^2-(1+tP)\bigg|\leq \lambda_t \bigg\}\\
=&\Pr\Bigg\{\Bigg|\frac{1}{n_t}\Bigg(\|Z^{n_t}\|^2-n_t+2\sum_{i=1}^{k}\langle X_i^{n_t},Z^{n_t} \rangle+2\sum_{1\leq i<j\leq k}\langle X_i^{n_t},X_j^{n_t} \rangle+n_t(k-t)P\Bigg)\Bigg|\leq\lambda_t\Bigg\}\\
\leq&\Pr\{\tilf'\leq\lambda_t+(t-k)P\}\label{eq:time lambda t 2}\\
\leq&\rmQ\Big(\frac{n_t^{\frac{1}{2}}((k-t)P-\lambda_t)}{\tilde{\sigma}'_k}\Big)+O(n_t^{-\frac{1}{2}}),\label{eq:time lambda t 3}
\end{align}
where~\eqref{eq:time lambda t 2} follows by the definition of $\tilf'$ in~\eqref{eq:defi tilfp} and~\eqref{eq:time lambda t 3} follows by applying Berry-Esseen Theorem as in~\eqref{eq:time lambda k 3}. 
By taking $\lambda_t\in(0,P)$ for any $t\in[k-1]$, we ensure that $(k-t)P-\lambda_t>0$. Thus, \eqref{eq:time lambda t 3} can be further bounded as
\begin{align}
\Pr\bigg\{ \bigg|\frac{1}{n_t}\|Y^{n_k}\|^2-(1+tP)\bigg|\leq \lambda_t \bigg\}
&\leq\exp\Big(-\frac{n_t((k-t)P-\lambda_t)^2}{2\tilde{\sigma}^{'2}_k}\Big)+O(n_t^{-\frac{1}{2}})\label{eq:time lambda t 4}\\
&\leq O(n_t^{-\frac{1}{2}}),
\end{align}
where~\eqref{eq:time lambda t 4} follows because $\rmQ(t)\leq\exp(-t^2/2)$ for $t>0$.
This verifies~\eqref{eq:time lambda t}. 

\bibliographystyle{IEEEbib}
\bibliography{IEEEfull_wym}

\end{document}